\def\emph#1{\textbf{\textit{\boldmath #1}}}
\def\NOR{\mathbin{\textsc{nor}}}
\newtheorem{theorem}{Theorem}[section]
\newtheorem{lemma}[theorem]{Lemma}
\newtheorem{corollary}[theorem]{Corollary}
\newtheorem{definition}[theorem]{Definition}
 \gdef\xxxmark{%
   \expandafter\ifx\csname @mpargs\endcsname\relax 
     \expandafter\ifx\csname @captype\endcsname\relax 
       \marginpar{xxx}
     \else
       xxx 
     \fi
   \else
     xxx 
   \fi}
 \gdef\xxx{\@ifnextchar[\xxx@lab\xxx@nolab}
 \long\gdef\xxx@lab[#1]#2{\textbf{[\xxxmark #2 ---{\sc #1}]}}
 \long\gdef\xxx@nolab#1{\textbf{[\xxxmark #1]}}
\let\realbibitem=\bibitem
\def\bibitem{\par \vspace{-1.2ex}\realbibitem}
\begin{document}

\title{Trains, Games, and Complexity: \\
  0/1/2-Player Motion Planning through Input/Output Gadgets}

\author{%
  Hayashi Ani%
  \thanks{MIT Computer Science and Artificial Intelligence Laboratory, 32 Vassar Street, Cambridge, MA 02139, USA, \protect\url{{edemaine,dylanhen,jaysonl}@mit.edu}}
\and
  Erik D. Demaine%
    \footnotemark[1]
\and
  Dylan H. Hendrickson\footnotemark[1]
\and
  Jayson Lynch\footnotemark[1]
}

\date{}
\maketitle

\begin{abstract}
We analyze the computational complexity of motion planning
through local ``input/output'' gadgets with separate entrances and exits,
and a subset of allowed traversals from entrances to exits, each of which
changes the state of the gadget and thereby the allowed traversals.
We study such gadgets in the zero-, one-, and two-player settings, in particular
extending past motion-planning-through-gadgets work \cite{Toggles_FUN2018,DHL}
to zero-player games for the first time, by considering ``branchless'' connections
between gadgets that route every gadget's exit to a unique gadget's entrance.
Our complexity results include containment in L, NL, P, NP, and PSPACE; as well as hardness for NL, P, NP, and PSPACE.
We apply these results to show PSPACE-completeness for certain mechanics in the video games Factorio, [the Sequence], and a restricted version of Trainyard, improving the result of \cite{trainyard2}.
This work strengthens prior results on switching graphs, ARRIVAL~\cite{Arrival}, and reachability switching games \cite{switchinggames}.
\end{abstract}

\section{Introduction}
\label{sec:intro}

Imagine a train proceeding along a track within a railroad network.
Tracks are connected together by ``switches'':
upon reaching one, the switch chooses the train's next track deterministically
based on the state of the switch and where the train entered the switch;
furthermore, the traversal changes the switch's state,
affecting the next traversal.
ARRIVAL~\cite{Arrival} is one game of this type, where every switch has a
single input and two outputs, and alternates between sending the train along
the two outputs; the goal is to determine whether the train ever reaches
a specified destination.  Even this seemingly simple game has unknown
complexity, but is known to be in NP $\cap$ coNP \cite{Arrival},
so cannot be NP-hard unless NP${}={}$coNP. More recent work shows a stronger result of containment in UP $\cap$ coUP as well as CLS \cite{arrivalcls}, PLS \cite{karthik2017did}, and UEOPL \cite{fearnley2020unique}.
But what about other types of switches?

In this paper, we introduce a very general notion of ``input/output gadgets''
that models the possible behaviors of a switch, and analyze the resulting
complexity of motion planning/prediction
(does the train reach a desired destination?)\ while navigating a network of switches/gadgets.
This framework gives us an expressive set of problems for
various complexity classes to use as starting points for
hardness reductions to other problems of interest.
For example, it is related to the ``reachability switching games'' of  \cite{switchinggames}, which in turn generalize ``switching systems'' known as Propp machines.
In addition to ARRIVAL, our framework captures other toy-train models,
including those in the video games Factorio and Trainyard.
In many cases, we obtain PSPACE-hardness, enabling building of a
(polynomial-space) computer out of a deterministic railway system
with a single train.
Intuitively, our model is similar to a circuit model of computation, but where
the state is stored in the gates (gadgets) instead of the wires, and gates
update only according to visits by a single deterministically controlled
agent (the train).

This work builds off of prior work on the computational complexity of agent-based motion planning \cite{Toggles_FUN2018, DHL}, extending it to zero-player situations. An analogous generalization of computational problems based on the number of players and boundedness of moves can be found in Constraint Logic \cite{hearn2009games}, which has served as a framework for a large number of hardness proofs for reconfiguration problems as well as games and puzzles.

\subsection{Motion Planning through Gadgets}

Our model is a natural zero-player adaptation of the
\emph{motion-planning-through-gadgets} framework developed in \cite{DHL}
(after its introduction at FUN 2018 \cite{Toggles_FUN2018}),
so we begin with a summary of that framework.
A \emph{gadget} $G = (Q,L,T)$ consists of
a finite set $Q$ of \emph{states},
a finite set $L$ of \emph{locations} (entrances/exits),
and a set $T \subseteq (Q \times L)^2$ of \emph{transitions}
of the form $(q, a) \to (r, b)$ where $q,r \in Q$ and $a,b \in L$.
Figure~\ref{fig:w/su/sd state diagram} shows an example of a gadget.
A transition $(q, a) \to (r, b) \in T$ means that,
when the gadget is in state $q$, an agent can \emph{traverse}
the gadget by entering the gadget at location $a$ and exiting at location~$b$,
while changing the state of the gadget from $q$ to~$r$.
In general, a location might serve as the entrance for one traversal
and the exit for another traversal.  In this paper, however, we consider
the special case (as in Figure~\ref{fig:w/su/sd state diagram})
where each location serves exclusively as an entrance or an exit for the agent,
but not both;
our figures will usually put entrances (which we call inputs) on the left,
and put exits (outputs) on the right.

\begin{figure}
  \centering
  \includegraphics[scale=.8]{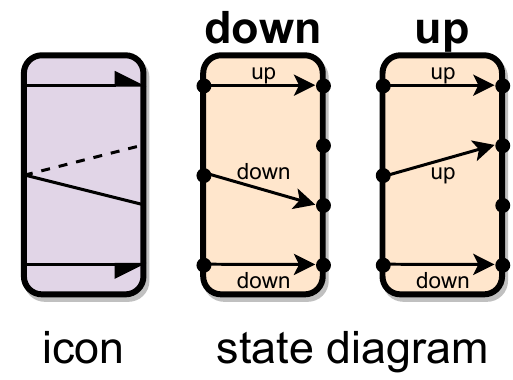}
  \caption{An example gadget---the switch/set-up line/set-down line
    of Figure~\ref{fig:basis w/su/sd}---which is a 7-location 2-state
    input/output gadget.  The agent can enter at any of the three input
    locations on the left, and exit at the corresponding output location
    on the right.  Traversing the top or bottom line sets the gadget's state
    to `up' or `down', respectively, which controls the output of the
    middle traversal to be the top or bottom, respectively,
    of its two options.  The middle traversal does not change the state.}
  \label{fig:w/su/sd state diagram}
\end{figure}

We can think of a gadget as a graph $(Q \times L, T)$
on state/location pairs, called the \emph{transition graph}.
We sometimes also consider the \emph{state-transition graph} of a gadget,
which is the directed multigraph with a vertex for each state $\in Q$ and a
directed edge $(q,r)$ for each transition $(q,a) \to (r,b) \in T$
for any $a,b \in L$.
In figures such as Figure~\ref{fig:w/su/sd state diagram},
we define gadgets using a \emph{state diagram} which gives,
for each state $q \in Q$, a labeled directed multigraph $G_q = (L, E_q)$
on the locations, where a directed edge $(a,b)$ with label $r$ represents
a transition $(q,a) \to (r,b) \in T$
(and thus $G_q$ represents the available transitions in state~$q$).

A \emph{system of gadgets} consists of a set of gadgets, their initial states,
and a \emph{connection graph} on the gadgets' locations.
If two locations $a,b$ of two gadgets (possibly the same gadget) are connected
by a path in the connection graph, then an agent can traverse freely between
$a$ and~$b$ (outside the gadgets).
(Equivalently, we can think of locations $a$ and $b$ as being identified,
effectively contracting connected components of the connection graph.)
Gadgets are \emph{local} in the sense that traversing a gadget does
not change the state of any other gadgets.

In \emph{one-player motion planning}, we are given initial and goal locations
$s,t$ of a single agent in a system of gadgets,
and the problem asks whether there is a
sequence of traversals that brings the agent from $s$ to~$t$.
Two-player and team motion planning are also introduced in \cite{DHL},
but not discussed here.

Past work \cite{DHL} analyzed (and in many cases, characterized)
the complexity of these motion-planning problems
for gadgets satisfying a few additional properties, specifically,
gadgets that are ``reversible deterministic $k$-tunnel'' or that are
``DAG $k$-tunnel'', defined as follows:

\begin{itemize}
\item A gadget is \emph{$k$-tunnel} if there is a perfect matching on
  its $2k$ locations, whose matching edges are called \emph{tunnels},
  such that the gadget only allows traversals between endpoints of a tunnel.

\item A gadget is \emph{deterministic}
  if its transition graph has maximum out-degree $\leq 1$,
  i.e., an agent entering the gadget in some state $q$
  at some location $a$ can exit in only one state $r$ and
  at only one location~$b$.

\item A gadget is \emph{reversible} if its transition graph has the reverse of
  every edge, i.e., every traversal could be immediately undone.

\item A gadget is a \emph{DAG} if it has an acyclic \textit{state-transition}
  graph, i.e., no sequence of traversals repeats a state.
  Such gadgets can necessarily be traversed only a bounded number of times
  (at most the number of states).
\end{itemize}

\subsection{Input/Output Gadgets and Zero-Player Motion Planning}


We define a gadget to be \emph{input/output} if its locations can be
partitioned into \emph{input} locations (entrances) and
\emph{output} locations (exits) such that every traversal brings an agent
from an input location to an output location, and in every state,
there is at least one traversal from each input location.
In particular, deterministic input/output gadgets have exactly one traversal
from each input location in each state.
Note that input/output gadgets cannot be reversible nor DAGs,
so prior characterizations \cite{DHL} do not apply to this setting.
Indeed, the example of Figure~\ref{fig:w/su/sd state diagram} satisfies none
of the $k$-tunnel, deterministic, reversible, or DAG properties.

An input/output gadget is \emph{output-disjoint} if, for each output location,
all of the transitions to it (including those from different states) are from the same input location.
This condition is a generalization of $k$-tunnel: it allows a one-to-many
relation from a single input to multiple outputs.


With deterministic input/output gadgets, we can define a natural
\emph{zero-player motion-plan\-ning game} as follows.
A system of input/output gadgets is \emph{branchless}
if each connected component of the
connection graph contains at most one input location.
Intuitively, if an agent finds itself in such a connected component,
then there is only one gadget location it can enter, uniquely defining how
it should proceed.
(If an agent finds itself in a connected component with no input locations,
it is stuck in a dead-end and the game ends.)
We can think of edges in the connection graph as directed wires that point
from output locations to the input location in the same connected component.
Note that branchless systems can still have multiple output locations in a
connected component, which functions as a fan-in.%
\footnote{In fact, the original framework of \cite{Toggles_FUN2018} was
  inherently branchless: connections between locations formed a matching.
  That framework used a 1-state nondeterministic ``branching hallway'' gadget
  to connect multiple locations to each other.
  For branchless input/output systems, we can equivalently think of
  replacing the branching hallway with a 1-state ``fan-in'' input/output gadget
  with traversals from two inputs to one output.}

In a branchless system of deterministic input/output gadgets, there are never
any choices to make: in the connection graph, there is at most one reachable
input location, and when the agent enters a gadget at an input location,
there is exactly one transition it can make.
Thus we define \emph{zero-player motion planning}
with a set of deterministic input/output gadgets to be the one-player
motion-planning problem restricted to branchless systems of those gadgets.
Lacking any agency, the decision problem is equivalent to whether the agent
ever reaches the goal location while following the unique path available to it
(before cycling or hitting a dead-end).

\subsection{Classifying Output-Disjoint Deterministic 2-State Input/Output Gadgets}\label{sec:intro characterization}

In this paper, we are primarily interested in output-disjoint deterministic 2-state input/output gadgets.
In this section, we omit the adjectives and refer to them simply as ``gadgets'';
and categorize these gadgets as ``trivial'', ``bounded'',
or ``unbounded''.

The behavior of an input location to a gadget is described by how it changes the state and which output location it sends the agent to in each state.
If the input location does not change the state and always uses the same output location, it can be ignored (the agent's path can be ``shortcut'' to skip that transition); we call this a \emph{trivial line}.
Otherwise, the input location corresponds to one of the five nontrivial subunits shown in Table~\ref{tab:subunits}.
A gadget is then a disjoint union of some of these subunits;
Figures~\ref{fig:bounded basis} and \ref{fig:unbounded basis} show some different ways these subunits can be assembled into different gadgets.

\begin{table}
\centering
\begin{tabular}{llm{0.6\textwidth}}
  $\vcenter{\hbox{\includegraphics[scale=.8]{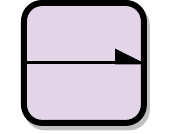}}}$ &
  \bf Set-Up Line &
  A tunnel that can always be traversed in one direction and sets the state of the gadget to a specific state (`up').
  \\[1ex]
  $\vcenter{\hbox{\includegraphics[scale=.8]{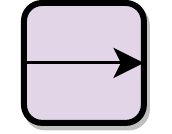}}}$ &
  \bf Toggle Line &
  A tunnel that can always be traversed in one direction and toggles the state with each crossing.
  \\[1ex]
  $\vcenter{\hbox{\includegraphics[scale=.8]{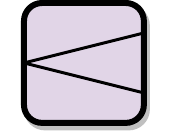}}}$ &
  \bf Switch &
  A three-location gadget with one input which transitions to one of two
  outputs (`top' or `bottom') depending on the state (`up' or `down'
  respectively), without changing the state.
  \\[1ex]
  $\vcenter{\hbox{\includegraphics[scale=.8]{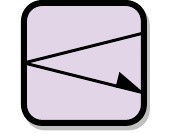}}}$ &
  \bf Set-Up Switch &
  A switch that also sets the state of the gadget to a specific state (`up').
  \\[1ex]
  $\vcenter{\hbox{\includegraphics[scale=.8]{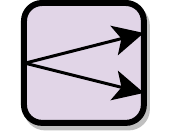}}}$ &
  \bf Toggle Switch &
  A switch that also toggles the state of the gadget with each crossing.
\end{tabular}
\caption{The five possible subunits (modulo up/down symmetry)
  for output-disjoint deterministic 2-state input/output gadgets,
  whose states are named `up' and `down'.
  In general, unannotated lines denote transitions that do not change the state,
  full arrowheads denote transitions that always toggle the state,
  and half arrowheads denote transitions that always set the state to a
  specific value (`up' or `down' according to the half arrowhead).}
\label{tab:subunits}
\end{table}

We call the states of a two-state gadget \emph{up} and \emph{down}, and
assume that each switch transitions to the top output in the up state
and the bottom output in the down state;
because we are not concerned with planarity,
this assumption is fully general by possible reflection of each subunit.
There are two versions of the set line and set switch:
one that sets the gadget to each state, up or down.
For example, a gadget with a set-up switch and set-up line
(Figure~\ref{fig:basis suw/su})
is meaningfully different from a gadget with a
set-up switch and set-down line (Figure~\ref{fig:basis suw/sd}).
We draw the set-down line and switch as the
reflections of the set-up version in Table~\ref{tab:subunits}.
To represent the current state of a gadget,
we draw one of the lines in each switch dashed,
so that the next transition would be made along a solid line.

The ARRIVAL problem \cite{Arrival} is equivalent to zero-player motion planning
with just the toggle switch from Table~\ref{tab:subunits}: each vertex in
their switch graph corresponds to a toggle switch in a system of gadgets.
We will use their terminology when referring to switch graphs in ARRIVAL
\cite{Arrival};
however, when referring to gadgets in our model, a switch is a gadget
(or part of a gadget) which does not change state when traversed
(as in Table~\ref{tab:subunits}).
More generally, zero-player motion planning with an arbitrary set of
deterministic single-input input/output gadgets (with gadgets specified as part
of the instance) is equivalent to explicit zero-player reachability switching
games, as defined in \cite{switchinggames}.

We categorize gadgets into three families:
\begin{enumerate}
\item \emph{Trivial} gadgets have either no state change or no state-dependent behavior; they are composed entirely of switches or entirely of toggle and set lines. Trivial gadgets are equivalent to (collections of) trivial lines, or equivalently always-open tunnels. Zero-player motion planning with trivial gadgets is in L by straightforwardly simulating the agent for a number of steps equal to the number of locations.

\item \emph{Bounded} gadgets have state-dependent behavior (i.e., some kind of switch) and have only one-way state change, either only to the up state or only to the down state. A bounded gadget can change its state at most once, so such gadgets naturally give rise to bounded games in which the maximum number of moves is polynomially bounded.

\item \emph{Unbounded} gadgets have state-dependent behavior (switches)
and have transitions that change state in both directions.
For example, the toggle switch of ARRIVAL is unbounded.
Unbounded gadgets naturally give rise to unbounded games
in which the number of moves can be exponential.
\end{enumerate}

We will find that the complexity of motion planning with a given gadget
also depends on whether the gadget is \emph{single-input}
or \emph{multi-input},
where we count only ``nontrivial'' input locations.
A \emph{nontrivial input} must have a transition from that input
that either changes the state of the gadget
or does not exist in all states of the gadget.
The only nontrivial single-input gadgets are the set switch and toggle switch,
which are bounded and unbounded, respectively.

\subsection{Our Results: Complexity}

Table~\ref{tbl:2-state} summarizes our main complexity results for
zero-player motion planning with output-disjoint deterministic 2-state
input/output gadgets.  While our main motivation was to analyze
zero-player motion planning, we also characterize the complexity of
one-player motion planning for contrast.
These complexity results apply to \textit{any} gadget in the family
specified in each column, and more generally to any nonempty set of gadgets
in the family
(optionally with gadgets from simpler families in leftward columns).
In particular, we prove that motion planning with \textit{any}
multi-input bounded gadget(s) (and optionally with trivial gadgets)
is P-complete for zero-player and NP-complete for one-player;
while motion planning with \textit{any} multi-input unbounded gadget(s)
(and optionally with trivial or bounded gadgets)
is PSPACE-complete for both zero- and one-player.

\begin{table}
  \centering
  \newcolumntype{P}[1]{>{\raggedright\parskip=8pt}p{#1}}
  \def\arraystretch{1.4}
  \def\class#1{#1}
  \def\vs{~\textsc{vs.}~}
  \definecolor{purple}{rgb}{0.47,0.25,0.55}
  \def\HEADER{\bfseries\cellcolor{purple!85}\textcolor{white}}
  \arrayrulecolor{purple!50!black}
  \setlength\arrayrulewidth{1pt}
  \def\REF#1{ [\S\ref{#1}]}
  \begin{tabular}{|P{10em}|P{7em}|P{9em}|P{9em}|}
    \hhline{~|-|-|-|}
      \multicolumn{1}{c|}{}
    & \HEADER{Trivial (always-open tunnels)}
    & \HEADER{Bounded \& multiple \\ nontrivial inputs}
    & \HEADER{Unbounded \& multiple \\ nontrivial inputs}
    \cr
    \hline
    \HEADER{Zero-player (fully deterministic)\REF{sec:0p}}
    & \class{L}
    & \class{P-complete}
    & \class{PSPACE-complete}
    \cr
    \hline
    \HEADER{One-player\REF{sec:1p}}
    & \class{NL-complete}
    & \class{NP-complete} 
    & \class{PSPACE-complete}
    \cr
    \hline
  \end{tabular}
  \caption{Complexity of zero- and one-player motion planning
    for arbitrary output-disjoint deterministic 2-state input/output gadget(s),
    with multiple nontrivial inputs in nontrivial gadgets.}
  \label{tbl:2-state}
\end{table}

Table~\ref{tbl:1 input} summarizes our results for motion planning
with \textit{single-input} nontrivial input/output gadgets.
This case is a more immediate generalization of ARRIVAL \cite{Arrival},
and is equivalent to the reachability switching games
studied in \cite{switchinggames}.
We strengthen the results of \cite{switchinggames} in two ways.
First, we show that the containments in NP and EXPTIME
still hold when we allow nondeterministic gadgets.
Second, we show hardness for specific constant-size gadgets---the toggle switch
for zero-player, and each of the toggle switch and set switch
for one- and two-player---instead of having unbounded-size gadgets
specified as part of the instance.
In particular, these hardness results apply to all (two)
nontrivial single-input gadgets for one- and two-player;
the complexity of the set switch for zero-player remains open.

\begin{table}
  \centering
  \newcolumntype{P}[1]{>{\raggedright\parskip=8pt}m{#1}}
  \def\arraystretch{1.4}
  \def\class#1{#1}
  \definecolor{purple}{rgb}{0.47,0.25,0.55}
  \def\HEADER{\bfseries\cellcolor{purple!85}\textcolor{white}}
  \arrayrulecolor{purple!50!black}
  \setlength\arrayrulewidth{1pt}
  \def\REF#1{ [\S\ref{#1}]}
  \def\CITE#1{ \cite{#1}}
  \begin{tabular}{|P{10em}|P{12em}|P{12em}|}
    \hhline{~|-|-|}
    \multicolumn{1}{c|}{}
    & \HEADER{Contained in}
    & \HEADER{Hard for}
    \cr
    \hline
    \HEADER{Zero-player (fully deterministic)\REF{sec:0p}}
    & \class{UP} $\cap$ \class{coUP} \CITE{arrivalcls}
    & \class{NL} for toggle switch \REF{sec:0p 1input} (cf.~\cite{switchinggames})
    \cr
    \hline
    \HEADER{One-player\REF{sec:1p}}
    & \class{NP} \REF{sec:1p in np} (cf.~\cite{switchinggames})
    & \class{NP}  \REF{sec:1p np-hard} (cf.~\cite{switchinggames})
    \cr
    \hline
    \HEADER{Two-player\REF{sec:2p}}
    & \class{EXPTIME} \REF{sec:2p} (cf.~\cite{switchinggames})
    & \class{PSPACE} \REF{sec:2p} (cf.~\cite{switchinggames})
    \cr
    \hline
  \end{tabular}
  \caption{Complexity results for zero-, one-, and two-player motion planning
    with any nontrivial single-input input/output gadget(s)
    (the toggle switch and/or the set switch).}
  \label{tbl:1 input}
\end{table}

Our complexity results
for zero-player, one-player, and two-player motion planning
are presented in Sections~\ref{sec:0p}, \ref{sec:1p}, and \ref{sec:2p},
respectively.

In Section~\ref{Applications}, we apply our input/output gadget framework
to prove PSPACE-completeness of mechanics in several video games:
one-train colorless Trainyard, the game [the Sequence],
trains in Factorio, and transport belts in Factorio
are all PSPACE-complete.
The first result improves a previous PSPACE-completeness result for two-color
Trainyard \cite{trainyard2} by using a strict subset of game features.
Factorio in general is trivially PSPACE-complete, as players have explicitly
built computers using the circuit network; here we prove hardness for
the restricted problems with only train-related objects and only
transport-belt-related objects.

\subsection{Our Results: Simulation}

How do we prove that zero-player motion planning with \textit{any}
multi-input bounded or unbounded gadget is P-complete or PSPACE-complete,
respectively?
We show how to reduce these infinite families down to finitely many cases
through the concept of ``simulation''.

A \emph{zero-player simulation} of a gadget $G$
is a branchless system of gadgets,
together with a mapping of input and output locations of $G$
to distinct input and output locations of gadgets in the system,
that has the same behavior as $G$ in the natural sense:
if the agent enters the system at a sequence of input locations
corresponding to inputs of~$G$,
then the system sends the agent to the output locations
corresponding to the outputs $G$ would send the agent to.
(Note that some locations of the system may not correspond
to any locations of~$G$.)
We say that $G'$ \emph{simulates} $G$
if there is a system of $G'$ gadgets that is a simulation of~$G$.

This definition of simulation is only applicable to zero-player motion planning,
and thus with deterministic input/output gadgets.
We can define a similar notion of simulation for one-player motion planning;
see also \cite{hendrickson2021gadgets,GadgetsChecked_FUN2022} for more
precise definitions.
A \emph{one-player simulation} of a gadget $G$ is a system of gadgets,
together with a mapping of locations of $G$ to distinct locations
of gadgets in the system, that has the same behavior as $G$
in the natural nondeterministic sense:
if the agent enters the system at a sequence of $k$ locations
corresponding to locations of~$G$,
then the agent can exit the system
in a sequence of $k$ locations corresponding to locations of $G$
if and only if it could have made the corresponding sequence of traversals in $G$.

Any zero-player simulation is also a one-player simulation,
so all of our results for zero-player simulations
immediately carry over to the one-player case.

Crucially, simulations yield logarithmic-space polynomial-time reductions:
simply replace each copy of $G$ with a copy of the system simulating it.
In particular, simulations preserve hardness of
zero-player and one-player motion planning
for NL, P, NP, and PSPACE.

To characterize all multi-input input/output gadgets,
we show that they all simulate at least one of the eight gadgets
listed in Lemma~\ref{lem:gadget basis}
and shown in Figures~\ref{fig:bounded basis} (bounded)
and~\ref{fig:unbounded basis} (unbounded),
and thus it will suffice to show hardness for these eight cases.

\begin{figure}
  \centering
  \begin{subfigure}[b]{.33\linewidth}
    \centering
    \includegraphics[width=.5\linewidth]{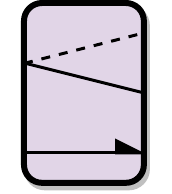}
    \caption{Switch/set-up line.}
    \label{fig:basis w/su}
  \end{subfigure}
\hspace{5mm}
  \begin{subfigure}[b]{.33\linewidth}
    \centering
    \includegraphics[width=.5\linewidth]{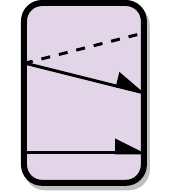}
    \caption{Set-up switch/set-up line.}
    \label{fig:basis suw/su}
  \end{subfigure}

  \caption{A basis for bounded multi-input gadgets:
    all such gadgets can simulate one of these two.}
  \label{fig:bounded basis}
\end{figure}

\begin{figure}
  \centering
  \begin{subfigure}[b]{.3\linewidth}
    \centering
    \includegraphics[width=.5\linewidth]{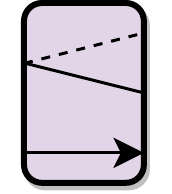}
    \caption{Switch/toggle line.}
    \label{fig:basis w/t}
  \end{subfigure}
  \begin{subfigure}[b]{.35\linewidth}
    \centering
    \includegraphics[width=.4\linewidth]{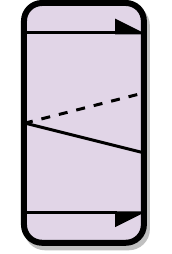}
    \caption{Switch/set-up line/set-down line}
    \label{fig:basis w/su/sd}
  \end{subfigure}
  \begin{subfigure}[b]{.3\linewidth}
    \centering
    \includegraphics[width=.5\linewidth]{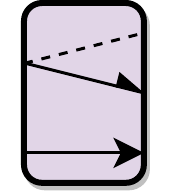}
    \caption{Set-up switch/toggle line.}
    \label{fig:basis suw/t}
  \end{subfigure}
  \begin{subfigure}[b]{.3\linewidth}
    \centering
    \includegraphics[width=.5\linewidth]{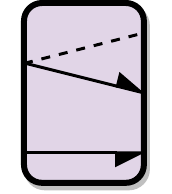}
    \caption{Set-up switch/set-down line.}
    \label{fig:basis suw/sd}
  \end{subfigure}
  \begin{subfigure}[b]{.35\linewidth}
    \centering
    \includegraphics[width=.43\linewidth]{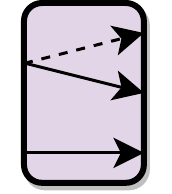}
    \caption{Toggle switch/toggle line.}
    \label{fig:basis tw/t}
  \end{subfigure}
  \begin{subfigure}[b]{.3\linewidth}
    \centering
    \includegraphics[width=.5\linewidth]{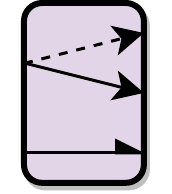}
    \caption{Toggle switch/set-up line.}
    \label{fig:basis tw/su}
  \end{subfigure}

  \caption{A basis for the unbounded multi-input gadgets:
    all such gadgets can simulate one of these six.
    We later show that Figure~\ref{fig:basis w/su/sd} alone forms a one-gadget basis.}
  \label{fig:unbounded basis}
\end{figure}

\begin{lemma}\label{lem:gadget basis}
  Let $G$ be an output-disjoint deterministic 2-state input/output gadget with multiple nontrivial inputs.
  \begin{itemize}
    \item If $G$ is bounded, then it simulates either a switch/set-up line or a set-up switch/set-up line (Figure~\ref{fig:bounded basis}).
    \item If $G$ is unbounded, then it simulates one of the following gadgets
    (Figure~\ref{fig:unbounded basis}):
    \begin{enumerate}[(a)]
      \item switch/toggle line,
      \item switch/set-up line/set-down line,
      \item set-up switch/toggle line,
      \item set-up switch/set-down line,
      \item toggle switch/toggle line, or
      \item toggle switch/set-up line.
    \end{enumerate}
  \end{itemize}
\end{lemma}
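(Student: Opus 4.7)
The plan is a case analysis on which of the five subunit types (plus their set-down reflections) occur as inputs of $G$. The key reductive observation is a handful of internal wirings that let a richer subunit simulate a weaker one: short-circuiting the two outputs of a set-up switch yields a set-up line, short-circuiting a set-down switch yields a set-down line, and short-circuiting a toggle switch yields a toggle line. Because these wirings only identify output locations, they preserve the shared state dynamics. Using them I can always reduce to a two-subunit gadget whose transition graph matches one of the basis gadgets in Figures~\ref{fig:bounded basis} or~\ref{fig:unbounded basis}.

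For the bounded case I use reflection to assume WLOG that $G$'s state only ever changes to up. Then every nontrivial input of $G$ must be a switch, set-up line, or set-up switch (toggle and set-down subunits would violate one-way change), and nontriviality forces at least one switch or set-up switch. Splitting on whether $G$ contains a plain switch: if it does, I combine it with either a set-up line from $G$ or a set-up line simulated from a set-up switch, giving the switch/set-up line basis; otherwise all switch-type inputs are set-up switches, and I pair one of them with a set-up line (present directly or simulated from a second set-up switch), giving the set-up switch/set-up line basis.

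For the unbounded case I split on the "toggle content" of $G$. If $G$ contains a toggle line, then $G$ also contains some state-dependent subunit (plain switch, set-up/down switch after reflection, or toggle switch), producing basis~1, 3, or~5 respectively. If $G$ contains a toggle switch but no toggle line, then since toggle switches are single-input some other nontrivial subunit is present; after reflection it is a set-up line, set-down line (simulated, if necessary, from the corresponding switch), or another toggle switch wired to a toggle line, giving basis~5 or~6. If $G$ has no toggle subunits, then both directions of state change must be realized by set-up and set-down subunits simultaneously, and the remaining state-dependent input is either a plain switch (basis~2) or a set-up/down switch (basis~4 after reflection). The main obstacle is purely bookkeeping: verifying that every combination in the Cartesian product of (plain switch, set-up/down switch, toggle switch) and (set-up line, set-down line, toggle line) is covered by one of the eight listed basis gadgets, and that each reflection is applied consistently so that no subcase is missed.
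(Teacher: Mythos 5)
Your overall strategy is the same as the paper's: the compression trick (identifying the two outputs of a set switch or toggle switch to obtain the corresponding line, which only merges output locations and so preserves the shared state) plus a case analysis over which subunit types occur, with reflections handling up/down symmetry. The bounded case and most of the unbounded case go through exactly as in the paper, which organizes the analysis slightly differently: it first compresses \emph{all but one} switch-type subunit (preferring to keep an ordinary switch uncompressed if one exists) and then cases on the type of the single surviving switch.

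However, your unbounded case analysis as written has a missed subcase. Consider a gadget whose nontrivial inputs are exactly a toggle switch and an ordinary switch (no toggle line, no set lines, no set switches). This is a legitimate unbounded gadget with multiple nontrivial inputs: the toggle switch changes state in both directions and the ordinary switch provides state-dependent behavior. It does not contain a toggle line, so it falls into your second case (``toggle switch but no toggle line''), but there the ``other nontrivial subunit'' is enumerated only as a set-up line, a set-down line (possibly simulated from a set switch), or another toggle switch, and the claimed outcomes are basis gadgets 5 or 6. An ordinary switch is none of these, and the gadget in question does not simulate basis 5 or 6 via your listed wirings; the correct move is to compress the \emph{toggle switch} into a toggle line and keep the ordinary switch, yielding the switch/toggle line (basis 1). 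This is precisely what the paper's convention of never compressing an ordinary switch when one is present handles automatically. The gap is easily repaired---add this combination to your bookkeeping (or adopt the paper's ``keep an ordinary switch if possible'' rule)---but as stated the Cartesian-product check you defer to at the end is exactly where the argument currently fails.
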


\begin{proof}
First we \emph{compress} every switch, set switch, and toggle switch,
except for one, by merging (connecting) its two outputs.
This operation transforms set switches into set lines,
toggle switches into toggle lines, and ordinary switches into trivial lines.
Figure~\ref{fig:SetUpSwitchSimulateSetUp} shows an example.
If the gadget has any ordinary switches, we use one of them as the switch
that does not get compressed.
The resulting gadget has the same boundedness as the original gadget,
has a single switch of some type, and still has multiple nontrivial inputs:
if it had only one nontrivial input, then the other inputs must have all been
ordinary switches which got compressed, so the remaining uncompressed input
is also an ordinary switch, and thus the original gadget contained only
ordinary switches and was trivial.

\begin{figure}[htbp]
  \centering
  \includegraphics[scale=.8]{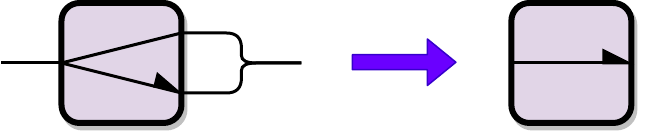}
  \caption{Compressing a set-up switch by merging its outputs yields a set-up line.}
  \label{fig:SetUpSwitchSimulateSetUp}
\end{figure}

For multi-input bounded gadgets, we now have either a switch or a set switch (any sort of toggle would make the gadget unbounded), and at least one set line. Each set switch and line must set the gadget to the same state (which we can assume by symmetry is the up state), and we can ignore all but one set line. In particular, without loss of generality, the resulting gadget contains exactly a set-up line and either a switch (\ref{fig:basis w/su}) or a set-up switch (\ref{fig:basis suw/su}).

For multi-input unbounded gadgets, there are multiple cases to consider based on the type of the single switch which was not compressed.
First, if the switch is an ordinary switch, then there must be lines that can set the state in both directions, which must include either a toggle line (\ref{fig:basis w/t}) or two set lines in different directions (\ref{fig:basis w/su/sd}).
If the switch is a set switch, then there must be a line that can set the state in the opposite direction, which can be either a toggle line (\ref{fig:basis suw/t}) or a set line opposite the set switch (\ref{fig:basis suw/sd}).
Finally, if the switch is a toggle switch, then there must be some nontrivial line: either a toggle line (\ref{fig:basis tw/t}) or a set line (\ref{fig:basis tw/su}). We have made arbitrary choices for the directions of set lines and set switches; these are without loss of generality because we can reflect the gadget (or rename the up and down states).
\end{proof}

These simulation results are of independent interest.
They show that there is a two-gadget \emph{basis}
for multi-input bounded input/output gadgets,
and a six-gadget basis for multi-input unbounded input/output gadgets,
where every gadget in each family can simulate at least one gadget
in the basis.
In fact, Section~\ref{sec:simulating sud} shows the stronger result
that multi-input unbounded input/output gadgets have a one-gadget basis,
namely, the switch/set-up line/set-down line of
Figure~\ref{fig:w/su/sd state diagram} or~\ref{fig:basis w/su/sd}.
Past work on one-player motion planning \cite{DHL}
established a one-gadget basis for a particular gadget family:
every reversible deterministic interacting-$k$-tunnel gadget
can simulate a locking 2-toggle.

At the other extreme from a basis, we can ask for \emph{universality}.
For example, in one-player motion planning,
each door gadget from \cite{Doors_FUN2020} simulates \textit{every} gadget.
In Section~\ref{sec:universality w/su/sd},
we prove a universality result for zero-player motion planning:
the same switch/set-up line/set-down line of
Figure~\ref{fig:w/su/sd state diagram} or~\ref{fig:basis w/su/sd}
simulates every deterministic input/output gadget
(not just those that are output-disjoint and 2-state).
Thus the switch/set-up line/set-down line
both simulates and can be simulated by
every unbounded multi-input output-disjoint
deterministic 2-state input/output gadget,
and thus every such gadget is similarly both a basis and universal.
We also prove a new universality result for one-player motion planning:
the switch/set-up line/set-down line---and
thus every unbounded multi-input output-disjoint
deterministic 2-state input/output gadget---simulates \textit{every} gadget
(just like the doors of \cite{Doors_FUN2020}).

\section{Zero Players}\label{sec:0p}

In this section, we study the complexity of zero-player motion planning with deterministic input/output gadgets from several classes. In Section~\ref{sec:0p 1input}, we consider such gadgets with a single input. In Section~\ref{sec:0p bounded}, we consider bounded gadgets with multiple inputs, which are naturally P-complete. Finally, in Section~\ref{sec:0p unbounded} we consider unbounded gadgets with multiple inputs, which are naturally PSPACE-complete.

\begin{lemma}\label{lem:0p in pspace}
  Zero-player motion planning with deterministic input/output gadgets is in PSPACE.
\end{lemma}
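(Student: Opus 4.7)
The plan is to give a direct deterministic polynomial-space simulation. A \emph{configuration} of the system records the current state of each gadget together with the current location of the agent. Since each gadget has a (constant, or at worst polynomially bounded) number of states, a configuration fits in polynomially many bits; write $C$ for the total number of configurations, so $C$ itself is at most exponential in the input size but $\log C$ is polynomial.

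First I would observe that the successor of a configuration is uniquely determined. When the agent sits at an input location $a$ of a gadget in state $s$, determinism of the gadget provides a unique outgoing transition, producing a new gadget state and placing the agent at a specific output. When the agent sits at an output, the branchless condition guarantees that the connected component of the connection graph containing it has at most one input location, so the agent either has a unique next input to enter or is stuck (in which case the game ends). In either case the next configuration is computable from the previous one in polynomial time, and hence in polynomial space.

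The algorithm then simulates step by step from the initial configuration, maintaining a binary step counter $k$ of $\lceil \log_2 (C+1)\rceil$ bits, which is polynomial since $C \le s^n L$ for $n$, $s$, $L$ bounding the number of gadgets, states per gadget, and locations respectively. After each step it accepts if the agent is at the goal, and it rejects once $k$ exceeds $C$: by determinism, exceeding $C$ steps without reaching the goal forces the trajectory to have revisited a configuration, so it is cycling and the goal will never be reached. The entire memory used---one configuration, one counter, and scratch space for computing a single transition---has polynomial size, establishing the PSPACE upper bound.

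I do not expect any serious obstacle. The only points requiring care are verifying that a single simulation step fits in polynomial space (walking the connection graph to locate the next input, then looking up the deterministic transition, both routine) and justifying the pigeonhole bound $C$ on the length of a cycle-free deterministic trajectory. The lemma matches the ``easy direction'' of Tables~\ref{tbl:2-state} and~\ref{tbl:1 input}; later sections show PSPACE-hardness for suitable gadget sets, so the bound is tight in general.
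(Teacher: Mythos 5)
Your proposal is correct and follows essentially the same route as the paper's proof: simulate the deterministic trajectory in polynomial space, tracking gadget states and the agent's location, and halt (rejecting) once the step count exceeds the number of configurations, since by determinism the run must then be in a cycle. The paper states this more tersely but the argument is identical, including the explicit dead-end case you also handle.
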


\begin{proof}
  In polynomial space, we can keep track of the current configuration of a system of gadgets and current location of the agent. Thus we can simply simulate the zero-player motion planning problem until either the agent reaches the goal location, the agent reaches a dead-end, or it makes more transitions than there are configurations, and thus is stuck in a cycle.
\end{proof}

\subsection{Single Input}\label{sec:0p 1input}

In this section, we consider zero-player motion planning with deterministic single-input input/\allowbreak output gadgets. If the gadgets are described (for concreteness, using transition graphs) as part of the instance, this is equivalent to the explicit zero-player reachability switching games of \cite{switchinggames}. In our language, \cite{switchinggames} shows that zero-player motion planning with instance-specified deterministic single-input input/output gadgets is NL-hard.
As pointed out in \cite{switchinggames}, the proofs in \cite{arrivalcls}, which only considered ARRIVAL, also apply to explicit zero-player reachability switching games. In our language, they show that zero-player motion planning with instance-specified deterministic single-input input/output gadgets is in UP~$\cap$~coUP (which is contained in NP~$\cap$~coNP).

We strengthen the NL-hardness result of \cite{switchinggames} by showing that zero-player motion planning with just the toggle switch is NL-hard. This is a straightforward modification of the proof of NL-hardness in \cite{switchinggames}; we present the full argument for completeness and to translate it to our terminology. There is still a large gap between the lower bound of NL-hard and the upper bound of UP~$\cap$~coUP.

\begin{theorem}
  Zero-player motion planning with the toggle switch is NL-hard.
\end{theorem}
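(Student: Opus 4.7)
The plan is to reduce from directed $s$-$t$ reachability (STCON), which is NL-complete. Given an STCON instance $(G, s, t)$, I first preprocess $G$ so that $t$ is a sink and so that every vertex $v \notin \{s, t\}$ has an explicit edge $v \to s$. Adding edges whose head is $s$ does not change whether $t$ is reachable from $s$, since no new edge terminates at $t$, so the STCON answer is preserved.

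Next I build the toggle-switch system. For each non-$t$ vertex $v$ of out-degree $d_v$ in the preprocessed graph, I construct a fan-out gadget $F_v$ consisting of a binary tree of toggle switches with $d_v$ leaves, one leaf per out-edge of $v$; repeated entries at the root are routed to successively different leaves as each internal toggle switch alternates its two outputs. I wire each leaf of $F_v$ to the input of the corresponding neighbor's fan-out gadget, or to the goal location when the neighbor is $t$. The agent starts at the root of $F_s$.

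The easy direction of correctness is that if $t$ is unreachable from $s$ in the preprocessed graph, the agent's trajectory stays within the reachable set $R(s)$ and never reaches $t$. The delicate direction is the converse. If the agent fails to reach $t$, then on a finite deterministic configuration space its trajectory must eventually enter a cycle $\mathcal{C}$; let $V_\mathcal{C}$ be the set of vertices whose gadget is traversed during $\mathcal{C}$. The main claim is that $V_\mathcal{C}$ is closed under out-edges of the preprocessed graph. To see this, I would induct down the toggle-switch tree of each $F_v$: every toggle switch must be visited an even number of times during $\mathcal{C}$ (otherwise its state would not return to its initial value), which forces each of its two children to be entered equally often, and iterating this down the tree shows that each leaf, i.e.\ each out-edge of $v$, is used at least once per period.

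Closure, combined with the fact that $t$ is a sink and hence cannot lie on a cycle, forces $V_\mathcal{C}$ to contain some non-$t$ vertex. By construction that vertex has an edge to $s$, so $s \in V_\mathcal{C}$, and closure then yields $R(s) \subseteq V_\mathcal{C}$. But then $t \in R(s) \subseteq V_\mathcal{C}$, contradicting $t \notin V_\mathcal{C}$, so the agent must reach $t$. The reduction is a routine logspace encoding; the main obstacle is the inductive analysis of the toggle-switch tree showing that every out-edge of every vertex in $V_\mathcal{C}$ is exercised during the cycle, which is what drives the closure argument.
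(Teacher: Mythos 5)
Your reduction is correct, but it is a genuinely different construction from the paper's. The paper (following the explicit switching-games proof it cites) first normalizes the graph to out-degree exactly $2$ and then builds a \emph{layered}, step-counting system of toggle switches $(v,i)$ for $1\le i\le |V|$, with every non-goal path of length $|V|$ wired back to $(s,1)$; correctness is argued by noting the robot re-enters $(s,1)$ infinitely often, so by induction every switch reachable in the layered DAG is used infinitely often, and a short $s$--$t$ path forces the goal to be hit. You instead keep one fan-out tree of toggle switches per vertex (so roughly $O(|E|)$ switches rather than $O(|V|^2)$), add artificial edges $v\to s$ to guarantee recurrence through $s$, and replace the ``infinitely often'' induction by an eventual-periodicity plus parity argument: in one period of the configuration cycle every toggle switch flips an even number of times, so a visited tree root being entered at all forces every leaf (out-edge) to be exercised, giving closure of the visited vertex set under out-edges and hence the contradiction when $t$ is reachable. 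Both arguments are sound logspace reductions; the paper's layering buys a very clean induction with no need to reason about cycles of the deterministic system, while yours is more self-contained and avoids the degree-$2$ preprocessing, at the cost of the slightly delicate cyclic-period bookkeeping (entries into a tree mid-period must be traced back cyclically to the root) and a few degenerate cases you should state explicitly (e.g.\ $d_s=0$, out-degree-one vertices, $s=t$). One wording slip to fix: $t$ has no gadget, so the final contradiction should not be phrased as $t\in V_{\mathcal C}$; rather, once some in-neighbor of $t$ lies in $V_{\mathcal C}$, the leaf wired to the goal is used during the period, contradicting the assumption that the agent never reaches the goal.
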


\begin{proof}
  We reduce from reachability in directed graphs, which is NL-complete
  \cite{connectivity}.

  First we modify the graph to have out-degree $0$ or $2$ at every vertex
  without changing reachability; refer to Figure~\ref{fig:outdegree-2}.
  We replace every vertex $v$ with out-degree $k>2$ with a sequence of $k$ vertices each with out-degree at most~$2$: if $v$ has edges to $w_1,\dots,w_k$, we replace $v$ with $v_1,\dots,v_k$ with edges $v_i\to v_{i+1}$ and $v_i\to w_i$, and edges to $v$ now go to $v_1$.
  Then we remove any vertices with out-degree $1$ by setting their incoming edges to instead go to the target of their unique outgoing edge.
  This reduction to where every vertex has out-degree exactly $2$
  can be done in logarithmic space and does not affect reachability,

  \begin{figure}[htbp]
    \centering
    \includegraphics[width=\linewidth]{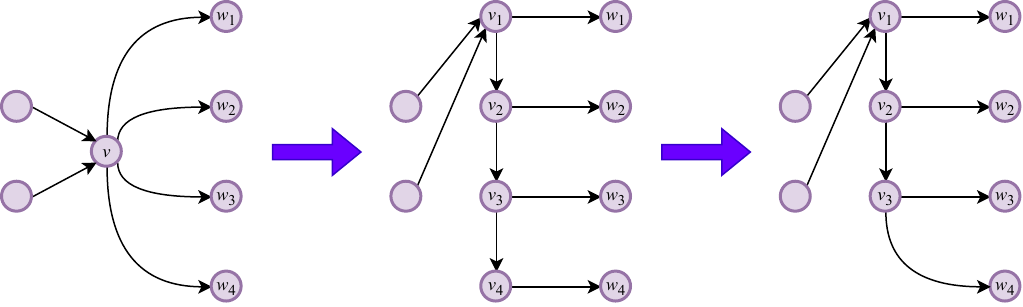}
    \caption{Modifying a directed graph to have out-degree $0$ or~$2$
      at every vertex: splitting vertices of high out-degree, and
      removing vertices of out-degree $1$.}
    \label{fig:outdegree-2}
  \end{figure}

  Now we use a construction based on that in \cite{switchinggames};
  refer to Figure~\ref{fig:toggle-switch-NL}.
  Let $V$ be the set of vertices in the modified graph $G$,
  where we are interested in a path from $s$ to $t$.
  Our system of gadgets has $|V|^2$ toggle switches,
  named $(v,i)$ for $v\in V$ and $1\leq i\leq|V|$.
  For a vertex $v\neq t$ with edges to $w_1$ and $w_2$ and $i<|V|$,
  the outputs of $(v,i)$ are connected to the inputs of $(w_1,i+1)$ and
  $(w_2,i+1)$.
  For a vertex $v\neq t$ with out-degree $0$,
  both outputs of $(v,i)$ are connected to the input of $(s,1)$.
  For $v\neq t$, both outputs of $(v,|V|)$ are connected
  to the input of $(s,1)$.
  Finally, for each $i$, both outputs of $(t,i)$ are connected to the
  goal location, which then leads back to $(s,1)$.
  The start location is the input of $(s,1)$.

  \begin{figure}[htbp]
    \centering
    \includegraphics[scale=0.5]{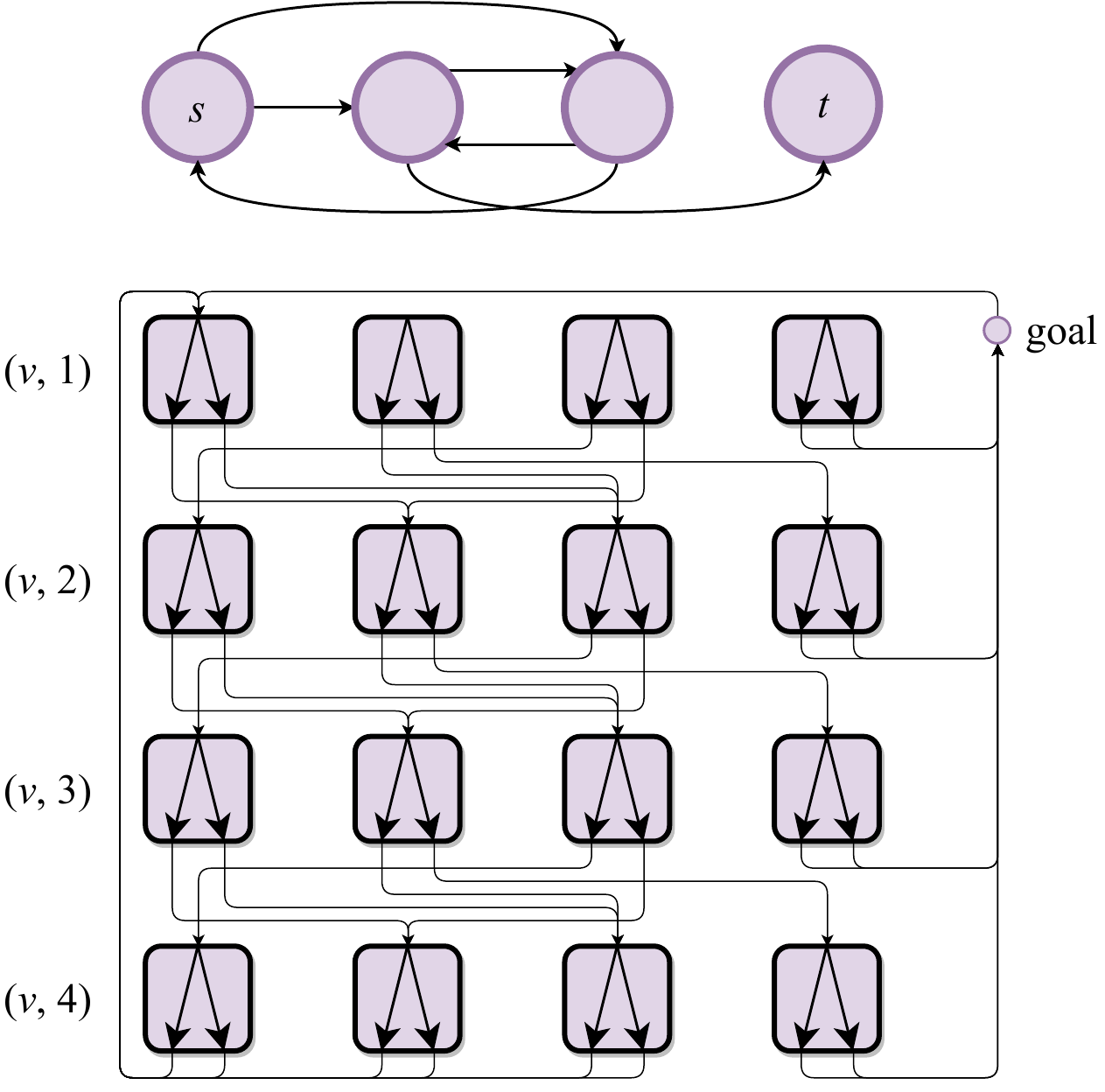}
    \caption{Reduction from reachability in directed graphs of out-degree $0$ or $2$ to zero-player motion planning with the toggle switch.}
    \label{fig:toggle-switch-NL}
  \end{figure}

  When the agent moves through this system, it follows paths in $G$ starting from $s$ and counts the number of steps taken, resetting after $|V|$ steps
  or when it reaches a vertex with out-degree $0$.
  By construction, a toggle switch $(v,i)$ is reachable from $(s,1)$
  exactly when there is a path of length $i-1$ from $s$ to $v$.
  If the agent reaches the goal location, it must have entered $(t,i)$ for some $i$, and thus there is a path (of length $i-1$) from $s$ to $t$.

  Because all paths eventually return to $(s,1)$, the agent must enter $(s,1)$ infinitely many times, so it must use each output of $(s,1)$ infinitely many times.
  By induction, it uses every toggle switch reachable from $(s,1)$ infinitely many times.
  If there is a (simple) path from $s$ to $t$, it has some length $i<|V|$, so $(t,i+1)$ is reachable from $(s,1)$.
  Then $(t,i+1)$ is visited infinitely many times, so the agent reaches the goal location.
\end{proof}

\subsection{Bounded Gadgets}\label{sec:0p bounded}
In this section, we consider the complexity of zero-player motion planning with a bounded output-disjoint deterministic 2-state input/output gadget which has multiple nontrivial inputs. We will find that this problem is always P-complete.

A gadget is \emph{bounded} if the number of times it can change states is bounded; this generalizes the definition in Section~\ref{sec:intro characterization}.

\begin{theorem}\label{thm:bounded in P}
  Zero-player motion planning with bounded deterministic input/output gadgets is in~P.
\end{theorem}

\begin{proof}
  Suppose we have a system with $n$ copies of the gadget.
  Let $k$ be the maximum number of state changes a gadget in the system can make,
  and let $i$ be the maximum number of input locations a gadget in the system has.
  Then gadget states can change at most $k n$ times.
  Between consecutive state changes, the agent can visit each entrance of each gadget at most once
  (otherwise it is stuck in a cycle),
  so consecutive state changes are separated by at most $i n$ traversals.
  Hence after $i k n^2$ traversals, the agent must be in a cycle,
  which involves no state changes of length at most $i n$.
  So we can solve the problem in polynomial time by simulating the agent for $i n (k n+1)$ steps and seeing whether it reaches the goal location by then.
\end{proof}

Lemma~\ref{lem:gadget basis} tells us that every output-disjoint deterministic 2-state input/output gadget with multiple nontrivial inputs simulates either the switch/set-up line or the set-up switch/set-up line. Thus to prove that zero-player motion planning with any such gadget is P-hard, it suffices to show P-hardness for these particular two gadgets. This is what we do for the remainder of this section.

\begin{theorem}
  Zero-player motion planning with the switch/set-up line or the set-up switch/set-up line is P-hard (under logarithmic space reductions).
\end{theorem}

\begin{proof}
  We provide a reduction to each of these problems from the problem of evaluating a circuit containing only $\NOR$ gates and fan-out, with the gates listed in a topological order.
  This restricted version of circuit evaluation is known to be P-complete \cite{greenlaw1995limits}. The two reductions are nearly identical: we present the reduction for the switch/set-up line, and the reduction for the set-up switch/set-up line is the same with each gadget replaced. We shall see that the agent never goes over a switch multiple times, so these two systems of gadgets behave the same.

  Our reduction builds a system of switch/set-up lines which has one gadget for each input of a $\NOR$ gate; this gadget indicates whether the input is true or false, and is initially set to false. The agent will evaluate each $\NOR$ gate in the order they are listed in the input,
  setting the gadgets for outputs of that gate to true if appropriate. This is accomplished with the gadget in Figure~\ref{fig:circuiteval}. For each $\NOR$ gate, we build one of these gadgets, where $x$ and $y$ are the inputs, and the gadgets labeled $x \NOR y$ are the outputs (and inputs of other $\NOR$ gates). There are as many output gadgets as the fan-out of this $\NOR$ gate. The entrance and exit to the $\NOR$ gate gadgets are connected in series, in the given order of the gates.

  \begin{figure}
    \centering
    \includegraphics[scale=.8]{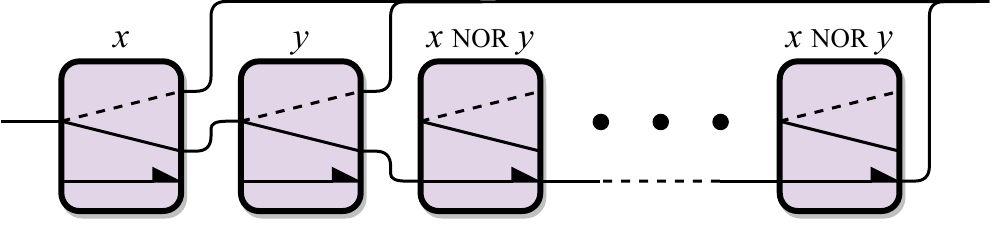}
    \caption{A $\NOR$ gate for P-hardness of zero-player motion planning with the switch/set-up line. If neither $x$ nor $y$ is set to true (up), the agent sets each $x \NOR y$ gadget to true.}
    \label{fig:circuiteval}
  \end{figure}

  To complete the construction, we place the start location at the entrance to the first $\NOR$ gate. The exit of the last $\NOR$ gate enters a switch which holds the output of the final $\NOR$ gate, and the goal location is the top output of that switch. Every switch/set-up line starts in the down state except for those that correspond to true inputs to the circuit.

  When the agent moves through this system of gadgets, in goes through each $\NOR$ gate in order.
  Because the input circuit was given with gates in topological order, the agent goes through both gates that provide the inputs $x$ and $y$ to a gate that computes $x \NOR y$ before going through that gate itself.
  If either $x$ or $y$ is set to true (i.e., in the up state), the agent leaves $x \NOR y$ false, but if $x$ and $y$ are both false, it goes through the set-up lines to set $x \NOR y$ true. This correctly computes $x \NOR y$, and by induction it computes the value of the circuit. At the end, the agent reaches the goal location if the value is true and gets stuck in a nearby dead-end if the value is false.
\end{proof}

By the basis simulation result of Lemma~\ref{lem:gadget basis},
these two cases establish hardness for
all multi-input output-disjoint deterministic 2-state input/output gadgets:

\begin{corollary}
  Zero-player motion planning with any bounded output-disjoint deterministic 2-state input/output gadget with multiple nontrivial inputs is P-complete.
\end{corollary}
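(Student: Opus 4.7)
The plan is to observe that this corollary is a direct combination of three results already established in this section, so no new machinery is required: the containment in P comes from Theorem~\ref{bounded in P}, while the P-hardness will be obtained by reducing from the two basis gadgets covered by the preceding theorem via Lemma~\ref{lem:gadget basis}.

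First, for the upper bound, I would invoke Theorem~\ref{bounded in P} directly. Every output-disjoint deterministic 2-state input/output gadget is in particular a deterministic input/output gadget, and if it is bounded in the sense of Section~\ref{sec:intro characterization} (its state can change at most once, from down to up or from up to down), then the more general notion of bounded used in Theorem~\ref{bounded in P} also applies with $k = 1$. Hence zero-player motion planning with such a gadget lies in P.

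For the lower bound, I would apply Lemma~\ref{lem:gadget basis} in the bounded case: any bounded output-disjoint deterministic 2-state input/output gadget $G$ with multiple nontrivial inputs simulates either the switch/set-up line or the set-up switch/set-up line. Since simulation preserves the complexity of zero-player motion planning (any construction using the simulated gadget can be transformed into an equivalent construction using copies of $G$, with only polynomial blowup), the P-hardness established in the preceding theorem for the two basis gadgets immediately transfers to $G$.

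There is no genuine obstacle here; the only mildly subtle point to verify is that the notion of \textbf{simulation} used implicitly in Lemma~\ref{lem:gadget basis} is strong enough to carry P-hardness across in the zero-player setting (that is, that the branchless, deterministic, input/output structure is preserved by the simulation). Given the way simulations between output-disjoint deterministic input/output gadgets were constructed earlier (by compressing outputs and reusing inputs, producing systems that remain branchless and deterministic when fed branchless inputs), this is immediate, and the corollary follows.
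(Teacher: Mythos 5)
Your proposal matches the paper's (implicit) argument exactly: containment in P from Theorem~\ref{bounded in P}, and P-hardness by combining Lemma~\ref{lem:gadget basis} with the P-hardness theorem for the switch/set-up line and set-up switch/set-up line, noting that the simulations preserve hardness in the zero-player setting. The paper states the corollary without a written proof precisely because this is the intended combination, so your write-up is correct and takes essentially the same route.
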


\subsection{Unbounded Gadgets}\label{sec:0p unbounded}
In this section, we consider zero-player motion planning with an unbounded output-disjoint deterministic 2-state input/output gadget which has multiple nontrivial inputs. We show that this problem is PSPACE-complete for every such gadget through a reduction from Quantified Boolean Formula (QBF), which is PSPACE-complete, to zero-player motion planning with the switch/set-up line/set-down line, and by showing that every such gadget simulates the switch/set-up line/set-down line. We also show that the switch/set-up line/set-down line (and thus every unbounded output-disjoint deterministic 2-state input/output gadget with multiple nontrivial inputs) can simulate every deterministic input/output gadget in zero-player motion planning.

\subsubsection{Edge Duplicators}

Many of our simulations involve building an \emph{edge duplicator}, shown in Figure~\ref{fig:edge duplicator}. An edge duplicator is a construction which allows us to effectively make a copy of a line from $X$ to $X^\prime$ in a gadget. This is achieved by routing two inputs $A$ and $B$ to $X$, and then sending the agent from $X^\prime$ to one of two exits $A^\prime$ or $B^\prime$ corresponding to the input used. The details of the construction of an edge duplicator depend on the gadget used; see Figure~\ref{fig:su/w/sd duplicator} for an example.

If we have access to an edge duplicator, we can duplicate tunnels in gadgets. Note that this is not enough to duplicate switches, since we would have to account for both exits getting duplicated.

\begin{figure}
  \centering
  \includegraphics[scale=.8]{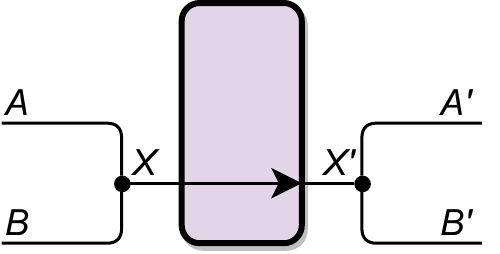}
  \caption{The schematic of an edge duplicator. An agent entering at $A$ or $B$ exits at $A^\prime$ or $B^\prime$, respectively, having gone over the central path. This duplicates the edge in the center.}
  \label{fig:edge duplicator}
\end{figure}

\subsubsection{PSPACE-Hardness of the Switch/Set-Up Line/Set-Down Line}

In this section, we show that zero-player motion planning with the switch/set-up line/set-down line is PSPACE-hard through a reduction from QBF.
Recall from Figure~\ref{fig:w/su/sd state diagram} or~\ref{fig:basis w/su/sd}
that the switch/set-up line/set-down line is a 2-state input/output gadget
with three inputs: one sets the state to up, one sets it to down,
and one sends the agent to one of two outputs based on the current state.

\begin{theorem}
  Zero-player motion planning with the switch/set-up line/set-down line is PSPACE-hard.
\end{theorem}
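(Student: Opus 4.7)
The plan is to reduce from True Quantified Boolean Formula (TQBF) to zero-player motion planning with the switch/set-up line/set-down line. Given an instance $\Phi = Q_1 x_1 \cdots Q_n x_n \, \phi(x_1,\ldots,x_n)$, I will construct a polynomial-size gadget system whose forced trajectory reaches the goal if and only if $\Phi$ is true. The overall structure mirrors the standard recursive polynomial-space algorithm for TQBF: $n$ nested ``levels'', one per quantifier, each of which invokes its inner level twice (once with $x_i$ set false, once with $x_i$ set true) and combines the two returned bits with either AND or OR.

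Three kinds of components make up the construction. Each variable $x_i$ is stored in several identical ``value'' switches kept in sync, since the switch gadget has only one input location and each occurrence of $x_i$ in $\phi$ must be read from its own copy; writing a bit to $x_i$ threads the robot through a chain of set-up lines (for true) or set-down lines (for false) to every copy at once. A \emph{formula evaluator} reads these value switches via the switch inputs and routes the robot to a true-exit or a false-exit: this is essentially the NOR-circuit construction used for the P-hardness of the switch/set-up line, but wrapped in a reset phase that uses set-up and set-down lines to restore all scratch switches to canonical values at the start of each of its (up to) $2^n$ invocations. For each quantifier $Q_i$, a \emph{level-$i$} sub-system implements one recursive step: set $x_i$ false, invoke level $i+1$, store the returned bit in a scratch switch $R_i^0$; set $x_i$ true, invoke level $i+1$ again, store the returned bit in $R_i^1$; combine $R_i^0$ and $R_i^1$ by OR if $Q_i$ is $\exists$ and by AND if $Q_i$ is $\forall$; then return the combined bit to level $i-1$. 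Level $n+1$ is the formula evaluator, and the goal sits on the ``true''-return exit of level $1$.

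The critical primitive is ``call with return value'' in a fully deterministic system. Each sub-system has a dedicated return-value switch that it sets (via a set-up or set-down line) just before handing control back; the caller resumes by reading that switch with a switch gadget, which routes the robot to the correct continuation. Level $i$ additionally maintains a ``phase'' switch distinguishing its first call to level $i+1$ from its second, so that a single shared ``on return'' entry point can dispatch to the appropriate bookkeeping code. Reading, setting to up, and setting to down are precisely the three operations provided by the switch, the set-up line, and the set-down line, so no further primitive is needed. The main obstacle is this orchestration rather than any single clever trick: because the robot has no agency, the entire $2^n$-pass computation must be wired so that every scratch switch is unconditionally reset before reuse and no residual state from a prior pass contaminates a later evaluation of $\phi$ or a later quantifier level. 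Correctness then follows by induction on $i$ from $n+1$ down to $1$: under the partial assignment to $x_1,\ldots,x_{i-1}$ stored in the outer value switches, level $i$ terminates with its return-value switch equal to the truth value of $Q_i x_i \cdots Q_n x_n \, \phi$, and the construction has size polynomial in $|\Phi|$, proving PSPACE-hardness.
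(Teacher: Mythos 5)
Your proposal matches the paper's proof in essence: both reduce from QBF by chaining quantifier gadgets that each invoke the inner subformula twice (once per truth value of their variable), combine the two returned bits according to $\exists$/$\forall$, store each variable in one synced switch per occurrence written via chains of set lines, and evaluate the CNF by routing the robot through the literal switches clause by clause, with correctness by induction on quantifier depth. Your calling convention (return-value and phase switches) differs cosmetically from the paper's (distinct True-Out/False-Out wires and an edge-duplicator construction that lets one gadget carry two set lines with context-dependent continuations), but the architecture and the argument are the same.
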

\label{thm:SUDPSPACE}

\begin{proof}
  First we build an edge duplicator, shown in Figure~\ref{fig:su/w/sd duplicator}. This allows us to use gadgets with multiple set-up or set-down lines.

  \begin{figure}
    \centering
    \includegraphics[scale=.8]{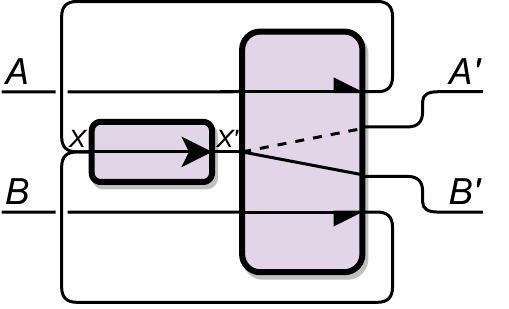}
    \caption{An edge duplicator for the switch/set-up line/set-down line. A agent entering on the left sets the state of the switch, goes across the duplicated tunnel (which is one subunit of some bigger gadget), and exits based on the state it set the switch to.}
    \label{fig:su/w/sd duplicator}
  \end{figure}

  Next we present a reduction from QBF. Given a quantified Boolean formula where the unquantified formula is 3-CNF, we construct a system of gadgets which evaluates the formula, ultimately sending the agent to one of two locations based on its truth value. The system consists of a sequence of \emph{quantifier gadgets}, which set the values of variables, followed by the \emph{CNF evaluation}, which checks whether the formula is satisfied by a particular assignment and reports this to the quantifier gadgets.

  Each quantifier gadget has three inputs, called In, True-In, and False-In, and three outputs, called Out, True-Out, and False-Out. The agent will always first arrive at In. This sets the variable controlled by that quantifier to true, and the agent leaves at Out, which sends it to the next quantifier gadget. Eventually the agent will return to either True-In or False-In, depending on the truth value of the rest of the quantified formula with the variable set to true. Depending on the result, the quantifier gadget either sends the agent to True-Out or False-Out to pass this message to the previous quantifier gadget, or the quantifier gadget sets its variable to false, and again sends the agent to the next quantifier. When it gets a truth value in response the second time, it sends the appropriate truth value to the previous quantifier.
  The last quantifier communicates with the CNF evaluation instead of with another quantifier.

  The universal quantifier gadget is shown in Figure~\ref{fig:quantifier}. The chain of gadgets at the top encode the state of the variable controlled by this quantifier, as has as many gadgets as there are instances of the variable in the formula. The variable is true when they are set to the `left' state and false when they are set to the `right' state, where the direction refers to the position, in the figure as drawn, of the exit which would be taken if the agent enters the switch.

  When the agent enters In, it sets the variable to true and exits Out. If it then returns to True-In, the first time it takes the bottom branch of the switch, sets that gadget to the up state, sets the variable to false, and exits Out again. If it returns to True-In a second time, that means the rest of the formula was true for both settings of the universally quantified variable: it takes the top branch, resets that gadget to down, and exits True-Out. If after either trial the agent enters at False-In, it resets the bottom gadget to the down state and exits False-Out. This is the intended behavior of the universal quantifier: it reports true if the result was true for both settings of the variable, and false otherwise.

  \begin{figure}
    \centering
    \includegraphics[scale=.8]{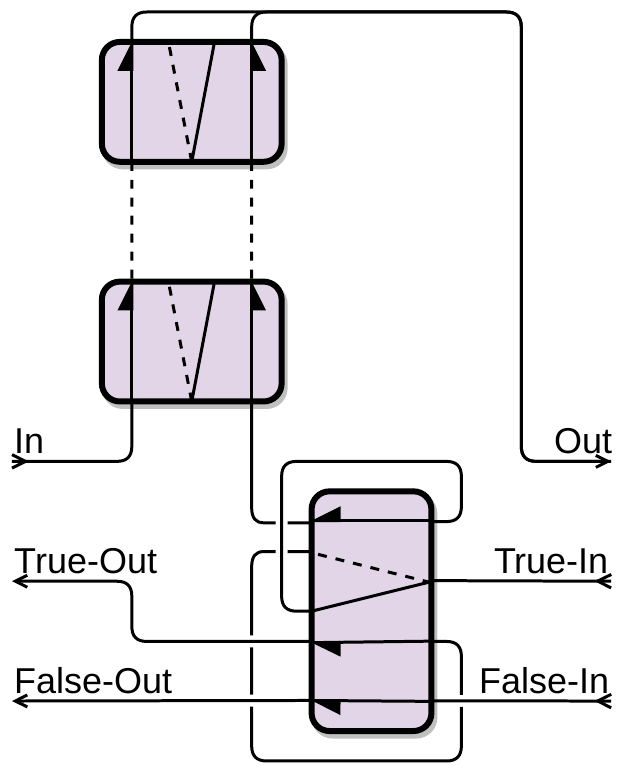}
    \caption{The universal quantifier for the switch/set-up line/set-down line. An edge duplicator (Figure~\ref{fig:su/w/sd duplicator}) is used to give the bottom gadget two set-down lines.}
    \label{fig:quantifier}
  \end{figure}

  The existential quantifier is identical except that True-Out and False-Out are swapped, and True-In and False-In are swapped. It reports false if the result was false for both settings, and true otherwise.

  For CNF evaluation, we use the switches controlled by each quantifier to read the value of a variable. For each clause, the agent passes through a switch corresponding to each of the literals in the clause. If all three literals are false, it exits False-Out. Otherwise, it moves on to the next clause, eventually exiting True-Out if all clauses are satisfied. This is shown, for 3 clauses, in Figure~\ref{fig:cnf eval}. Ultimately, the agent exits True-Out or False-Out depending on whether the formula is satisfied by the current assignment.

  \begin{figure}
    \centering
    \includegraphics[scale=.8]{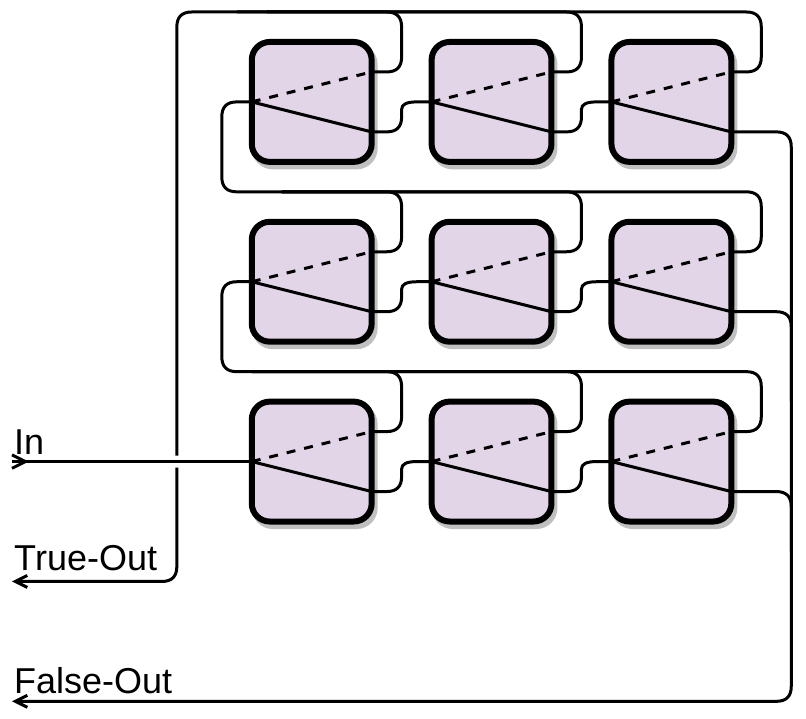}
    \caption{Three clauses of CNF evaluation for the switch/set-up line/set-down line; each clause is a row of three switches. The switches are part of gadgets in the quantifiers. We assume the top exit of each switch corresponds to that literal being true; all literals are set to false in this image.}
    \label{fig:cnf eval}
  \end{figure}

  It follows by induction that, for each quantifier, when the agent arrives at In, it will eventually leave either True-Out or False-Out depending on the truth value of the suffix of the formula beginning with that quantifier under the assignment of the earlier quantifiers. Thus, if the agent starts in the first quantifier at In, it reaches True-Out on the first quantifier if and only if the formula is true.
\end{proof}

\subsubsection{Other Gadgets Simulate the Switch/Set-Up Line/Set-Down line}\label{sec:simulating sud}

In this section, we show that every unbounded output-disjoint deterministic 2-state input/output gadget with multiple nontrivial inputs simulates the switch/set-up line/set-down line.
In other words, the switch/set-up line/set-down line forms a one-gadget basis for unbounded output-disjoint deterministic 2-state input/output gadgets.
We only need to show that the five other gadgets from Lemma~\ref{lem:gadget basis} simulate the switch/set-up/set-down. It follows that zero-player motion planning with any such gadget is PSPACE-complete, since we can replace each gadget in a system of switch/set-up/set-down with a simulation of it.

\paragraph{Toggle Switch/Toggle Switch.}

We begin with the toggle switch/toggle switch, which is not part of our basis of gadgets from Lemma~\ref{lem:gadget basis}, but will be a useful intermediate gadget.
It simulates an edge duplicator, as shown in Figure~\ref{fig:tw/tw duplicator}.
We can merge the two outputs of one of the toggle switches to simulate a toggle switch/toggle line, and then duplicate the toggle line to make a gadget with one toggle switch and any number of toggle lines.

\begin{figure}
\centering
\begin{minipage}{0.47\linewidth}
  \centering
  \includegraphics[scale=.8]{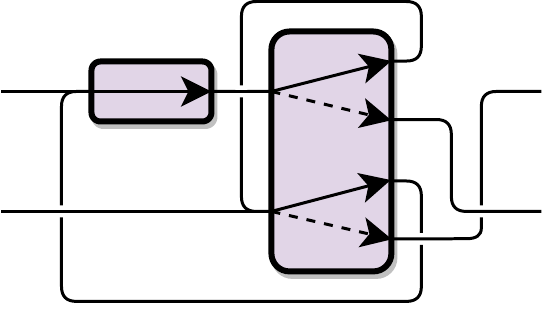}
  \caption{An edge duplicator for the toggle switch/toggle switch. The tunnel on the left is duplicated.}
  \label{fig:tw/tw duplicator}
\end{minipage}\hfill
\begin{minipage}{0.5\linewidth}
  \centering
  \includegraphics[scale=.8]{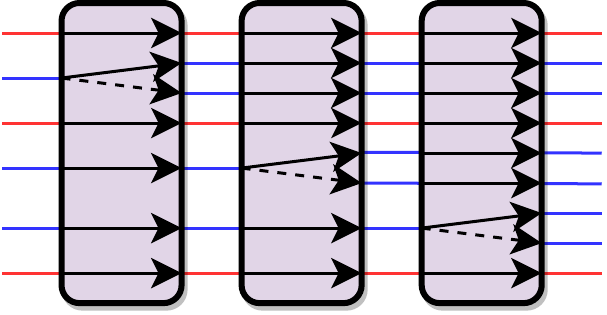}
  \caption{A simulation of three toggle lines and three toggle switches from gadgets with one toggle switch and 5, 6, and 7 toggle lines. The red tunnels are toggle lines and the blue tunnels are toggle switches.}
  \label{fig:t^n/tw^n}
\end{minipage}
\end{figure}

By putting such gadgets in series, we can simulate a gadget with any number of toggle lines and any number of toggle switches. Figure~\ref{fig:t^n/tw^n} shows this for three toggle lines and three toggle switches, which is as large as we need. This simulated gadget can finally simulate the switch/set-up line/set-down line, as shown in Figure~\ref{fig:tw/tw -> w/su/sd}.

\begin{figure}
\centering
\begin{minipage}{0.5\linewidth}
  \centering
  \includegraphics[scale=.8]{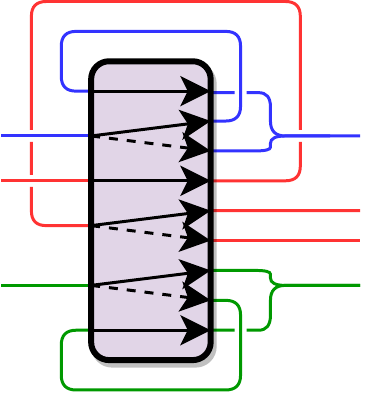}
  \caption{A simulation of a switch/set-up line/\allowbreak set-down line, currently in the down state, from the gadget built in Figure~\ref{fig:t^n/tw^n}. Each component of the switch/set-up line/set-down line is made from one toggle line and one toggle switch; the switch, set-up line, and set-down line are red, green, and blue, respectively.}
  \label{fig:tw/tw -> w/su/sd}
\end{minipage}\hfill
\begin{minipage}{0.47\linewidth}
  \centering
  \includegraphics[scale=.8]{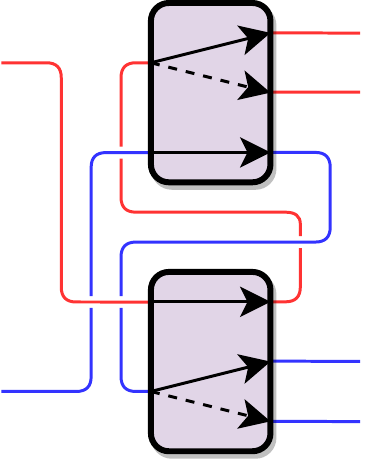}
  \caption{A simulation of a toggle switch/\allowbreak toggle switch from the toggle switch/\allowbreak toggle line. Each color corresponds to one of the toggle switches.}
  \label{fig:tw/t -> tw/tw}
\end{minipage}
\end{figure}

\paragraph{Toggle Switch/Toggle Line.}

We simulate the toggle switch/toggle switch using toggle switch/\allowbreak toggle lines, as shown in Figure~\ref{fig:tw/t -> tw/tw}.

\paragraph{Switch/Toggle Line.}

First we build an edge duplicator, shown in Figure~\ref{fig:w/t duplicator}.
Then we can duplicate the toggle line and put one copy in series with the switch, constructing a toggle switch/toggle line.

\begin{figure}
\centering
\begin{minipage}{0.55\linewidth}
  \centering
  \includegraphics[scale=.8]{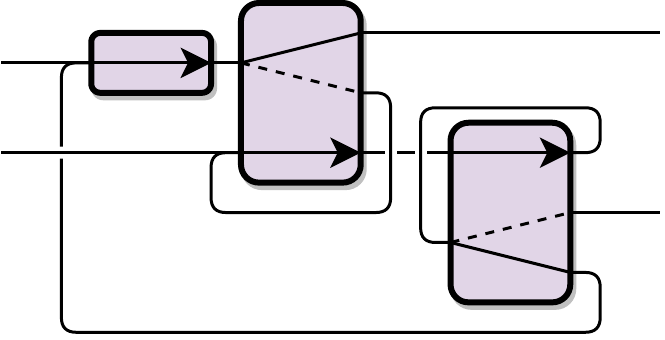}
  \caption{An edge duplicator for the switch/toggle line. The leftmost tunnel is duplicated.}
  \label{fig:w/t duplicator}
\end{minipage}\hfill
\begin{minipage}{0.42\linewidth}
  \centering
  \includegraphics[scale=.8]{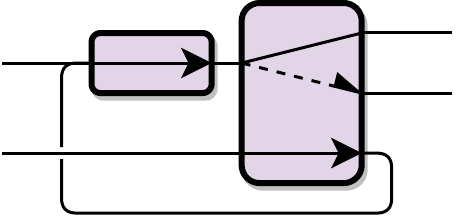}
  \caption{An edge duplicator for the set-up switch/toggle line. The leftmost tunnel is duplicated.}
  \label{fig:suw/t duplicator}
\end{minipage}
\end{figure}

\paragraph{Set-Up Switch/Toggle Line.}

First we build an edge duplicator, shown in Figure~\ref{fig:suw/t duplicator}.
Then we simulate the switch/toggle line, shown in Figure~\ref{fig:suw/t -> w/t}.

\begin{figure}
\centering
\begin{minipage}{0.3\linewidth}
  \centering
  \includegraphics[scale=.8]{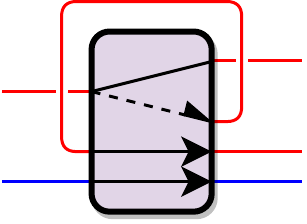}
  \caption{A simulation of the switch/toggle line using the set-up switch/toggle line. Red corresponds to the switch and blue
  corresponds to the toggle line.}
  \label{fig:suw/t -> w/t}
\end{minipage}\hfill
\begin{minipage}{0.67\linewidth}
  \centering
  \includegraphics[scale=.8]{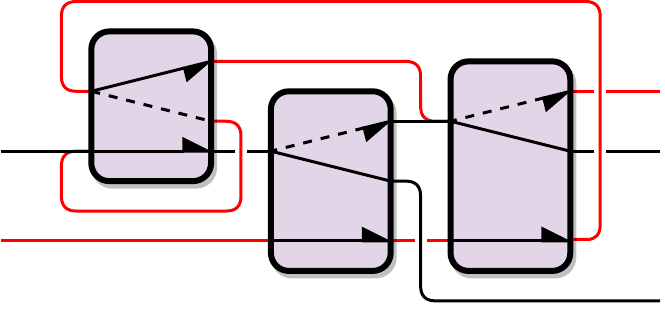}
  \caption{A simulation of a set-down switch/toggle line using the set-up switch/set-down line. When the agent is not inside the simulation, the rightmost gadget is in the down state and the other two gadgets are in opposite states and encode the state of the simulated gadget. Red lines indicate the toggle line: when the agent enters the bottom entrance, it takes one of the internal paths depending on the state and exits the top exit, reversing the state of the left and middle gadgets. When it enters the top entrance, it exits one of the bottom two exits and resets the state to down.}
  \label{fig:suw/sd -> suw/t}
\end{minipage}
\end{figure}

\paragraph{Set-Up Switch/Set-Down Line.}

We simulate a set-down switch/toggle line (equivalent to a set-up switch/toggle line) using the set-up switch/set-down line, as shown in Figure~\ref{fig:suw/sd -> suw/t}.

\paragraph{Toggle Switch/Set-Up Line.}

We simulate a set-up line/set-down switch using the toggle switch/\allowbreak set-up line, as shown in Figure~\ref{fig:su/tw -> su/sdw}; this is equivalent to a set-up switch/set-down line.

\begin{figure}
  \centering
  \includegraphics[scale=.8]{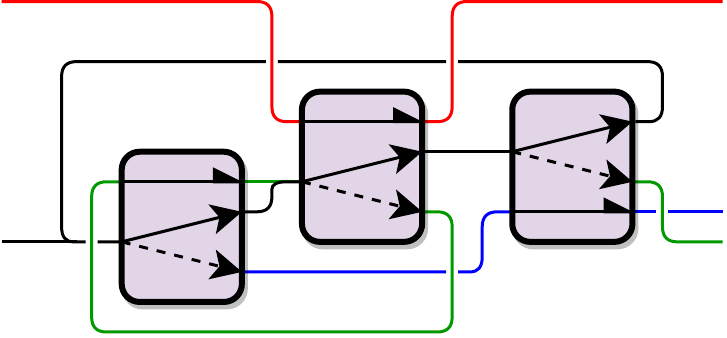}
  \caption{A simulation of a set-up line/set-down switch from the set-up line/toggle switch. The state of the simulated gadget is the same as the state of the center gadget. The red path corresponds to the set-up line. When it enters the set-down switch, the agent goes along the blue lines if the state is down, the green lines if the state is up, and the black lines in both cases.}
  \label{fig:su/tw -> su/sdw}
\end{figure}

\medskip

These simulations, together with Lemma~\ref{lem:gadget basis}, give the following theorem:

\begin{theorem}
\label{thm:s/su/sd simulation}
  Every unbounded output-disjoint deterministic 2-state input/output gadget with multiple nontrivial inputs simulates the switch/set-up line/set-down line.
\end{theorem}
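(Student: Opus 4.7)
The plan is to combine Lemma~\ref{lem:gadget basis} with the simulations that have already been exhibited in the preceding paragraphs. Lemma~\ref{lem:gadget basis} reduces the problem to the six specific basis gadgets for the unbounded multi-input case. The switch/set-up line/set-down line is itself the target, so it remains to verify that each of the other five basis gadgets simulates it. Each of these five simulations has been constructed individually above (as figures with textual descriptions); the theorem is really the bookkeeping statement that knits them together.

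First I would observe that simulation is transitive in the natural sense: if gadget $G_1$ simulates $G_2$ and $G_2$ simulates $G_3$, then replacing each copy of $G_2$ in a simulation of $G_3$ by a copy of $G_1$'s simulation of $G_2$ yields a simulation of $G_3$ from $G_1$. This is the standard composition argument, and since all the simulations in the preceding subsections preserve the zero-player semantics (no choices are introduced by any component), transitivity passes through cleanly. With transitivity in hand, I would organize the six cases into a directed reduction graph whose sink is the switch/set-up line/set-down line: the toggle switch/toggle line is handled via Figure~\ref{fig:tw/t -> tw/tw} (giving toggle switch/toggle switch), then Figure~\ref{fig:t^n/tw^n}, then Figure~\ref{fig:tw/tw -> w/su/sd}; the switch/toggle line uses its edge duplicator (Figure~\ref{fig:w/t duplicator}) to build a toggle switch/toggle line and then follows the preceding chain; the set-up switch/toggle line reduces to the switch/toggle line via Figure~\ref{fig:suw/t -> w/t}; the set-up switch/set-down line reduces to the set-up switch/toggle line via Figure~\ref{fig:suw/sd -> suw/t}; and finally the toggle switch/set-up line reduces to the set-up switch/set-down line via Figure~\ref{fig:su/tw -> su/sdw}. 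Each arrow in this graph is a simulation already constructed above.

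The proof then consists of invoking Lemma~\ref{lem:gadget basis} to fix which of the six basis gadgets a given unbounded multi-input gadget $G$ simulates, following the corresponding path in the reduction graph, and applying transitivity of simulation along that path to conclude that $G$ simulates the switch/set-up line/set-down line. Combined with the PSPACE-hardness proof of Theorem~\ref{thm:SUDPSPACE} and the PSPACE upper bound of Lemma~\ref{lem:0p in pspace}, this also yields PSPACE-completeness for zero-player motion planning with any such gadget, matching the entry in Table~\ref{tbl:2-state}.

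The main obstacle, such as it is, is not any new construction but rather justifying that the chained simulations really do compose in the zero-player setting. The authors caution that ``simulation'' is hard to pin down formally and depends on the game variant; the key points to verify when writing out the argument are (i) each individual construction in the preceding subsections is branchless and deterministic, so no spurious choices appear at an intermediate in-port, and (ii) when a sub-simulation is invoked from inside an enclosing one, the robot enters only at declared in-ports and leaves only at declared out-ports in the states promised by the sub-simulation, so the enclosing gadget sees exactly the behavior of the simulated component. Both points are immediate from inspecting the figures, but I would spell them out once explicitly so the composition step is unambiguous.
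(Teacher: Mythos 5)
Your proposal matches the paper's argument: the paper proves this theorem exactly by the constructions in the preceding paragraphs (the same chains of simulations you list, ending at the switch/set-up line/set-down line) combined with Lemma~\ref{lem:gadget basis}, with transitivity of simulation used implicitly. Your write-up just makes the composition and the reduction graph explicit, which is the same approach.
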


\begin{corollary} \label{cor:0p pspace-complete}
  Let $G$ by an unbounded output-disjoint deterministic 2-state input/output gadget with multiple nontrivial inputs. Then zero-player motion planning with $G$ is PSPACE-complete.
\end{corollary}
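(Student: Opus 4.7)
The plan is to simply combine the ingredients already assembled in this section: Lemma~\ref{lem:0p in pspace} for containment in PSPACE, together with the preceding theorem that $G$ simulates the switch/set-up line/set-down line, and Theorem~\ref{thm:SUDPSPACE} that zero-player motion planning with the switch/set-up line/set-down line is PSPACE-hard. So the corollary should follow as a two-line argument, with no new combinatorial work required.

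For the containment direction, I would simply cite Lemma~\ref{lem:0p in pspace}, which already handles \emph{any} set of deterministic input/output gadgets, and in particular the singleton $\{G\}$. For the hardness direction, I would take the PSPACE-hardness instance for switch/set-up line/set-down line produced by the QBF reduction of Theorem~\ref{thm:SUDPSPACE}, and replace each switch/set-up line/set-down line by the simulation in terms of $G$ guaranteed by the preceding theorem. The result is a system of $G$-gadgets of size polynomial in the original instance, since each simulation has constant size and we substitute a bounded number of them, and the agent's behavior on the substituted instance matches its behavior on the original at the simulation's in- and out-ports. Hence QBF reduces to zero-player motion planning with $G$ in logarithmic (certainly polynomial) space.

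The only subtle point, and the one I would flag as the potential obstacle, is ensuring that ``simulation'' in the constructions of the previous subsection is strong enough in the zero-player setting. Because zero-player motion planning is fully deterministic, a simulation must have the property that when the single agent enters an in-port in a given simulated state, the composite gadget deterministically brings the agent back out at the correct out-port with all internal state restored in a way consistent with the simulated state. I would verify, either inline or by appealing to the author's remark that ``the constructions we present in this section are simulations under most reasonable interpretations, and in particular they preserve PSPACE-hardness of zero-player motion planning,'' that each of the six simulations built in this section has this zero-player-compatible form. Given that, the corollary is immediate.
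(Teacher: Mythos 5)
Your proposal matches the paper's proof essentially verbatim: containment via Lemma~\ref{lem:0p in pspace}, and hardness by substituting a $G$-built simulation for each switch/set-up line/set-down line in the PSPACE-hard instance from Theorem~\ref{thm:SUDPSPACE}. The caveat you flag about what ``simulation'' must mean in the zero-player setting is exactly the point the paper addresses with its remark that the constructions preserve PSPACE-hardness, so no further work is needed.
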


\begin{proof}
  Containment in PSPACE is given by Lemma~\ref{lem:0p in pspace}. All of our simulations preserve PSPACE-hardness: we can reduce from zero-player motion planning with the switch/set-up line/set-down line (shown PSPACE-hard in Theorem~\ref{thm:SUDPSPACE}) to zero-player motion planning with $G$ by replacing each gadget in a system of switch/set-up line/set-down lines with a simulation built from $G$. The resulting system of $G$ has the same behavior as the system of switch/set-up line/set-down lines.
\end{proof}

\subsubsection{Universality of the Switch/Set-Up Line/Set-Down Line}
\label{sec:universality w/su/sd}

In this section, we show how to simulate an arbitrary deterministic input/output
gadget using the switch/set-up line/set-down line, i.e., that this gadget is
universal for all deterministic input/output gadgets.
We also show interesting consequences of this result; of particular note is
Corollary~\ref{cor:w/su/sd universal 1p} that, in one-player motion planning,
the switch/set-up line/set-down line simulates every gadget, i.e.,
is fully universal (just like the doors of \cite{Doors_FUN2020}).

\begin{theorem}\label{thm:w/su/sd universal 0p}
  The switch/set-up line/set-down line simulates every deterministic input/output gadget in zero-player motion planning.
\end{theorem}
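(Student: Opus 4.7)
The plan is to encode the state of $G$ in unary and give each input location its own private copy of the encoding, so that the reader paths for different inputs never have to share a switch. Write $k = |Q|$ for the number of states of $G$ and $m = |I|$ for the number of inputs, and introduce $mk$ switch/set-up line/set-down line gadgets arranged as $m$ \emph{copies}, each consisting of $k$ \emph{state-bit} gadgets indexed by the states of $G$. The invariant I will maintain is that outside any simulated traversal, every copy has exactly one bit up---the one corresponding to $G$'s current simulated state---with all others down; the initial configuration just sets the bits corresponding to $G$'s initial state.

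For each input $a_j$ of $G$, I build a \emph{reader chain} from the switch components of $a_j$'s $k$ state-bit gadgets wired in cascade: the robot enters bit $1$'s switch, its up branch jumps into the handler for the simulated transition out of $(a_j, 1)$, while its down branch continues to bit $2$'s switch, and so on. Because exactly one bit is up, this chain deterministically identifies the current state without disturbing anything. The handler for a transition $(a_j, i) \to (b, i')$ then threads the robot, for every copy $c \in \{1,\ldots,m\}$, through the set-down line of $c$'s bit $i$ followed by the set-up line of $c$'s bit $i'$ (skipping the set lines if $i = i'$), and finally exits at the simulated output $b$. After this single pass the invariant is restored with bit $i'$ up in every copy.

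A single state-bit gadget natively has only one set-up line and one set-down line, whereas many handlers must be able to trigger each one while still returning to their own continuation; plain fan-in is insufficient, because after the set line the robot would be stuck at a single output. The edge duplicator of Figure~\ref{fig:su/w/sd duplicator} is designed precisely for this, and a tree of $O(\log(mk))$ duplicators per set line yields as many independent virtual set lines as there are handlers that need access. The switch components of state-bit gadgets, which cannot in general be edge-duplicated, are only used by the reader chain of their own copy, so no sharing is required there, and the overall construction has size polynomial in~$|G|$.

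The hard part will be certifying correctness of the branchless wiring and the update ordering: I need to verify that every connected component of the connection graph contains at most one input location, that the duplicator trees through which a handler threads cannot allow the robot to re-enter a reader chain before the update across all $m$ copies is complete, and that the reader for input $a_j$ always terminates in the intended handler even in corner cases (such as $i' = i$, distinct transitions sharing an output, or the start of the chain being re-entered by a handler's fan-in). Each of these reduces to tedious but routine case analysis, in the same spirit as the duplicator verifications carried out earlier in this section.
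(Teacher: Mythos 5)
Your construction is correct, but it takes a genuinely different route from the paper's on the key point of handling multiple inputs. The paper proceeds in three stages: duplicate set lines with the edge duplicator (Figure~\ref{fig:su/w/sd duplicator}); build a single $k$-switch whose state is stored in $k-1$ gadgets in series (Figure~\ref{fig:w/su/sd -> kw}); and then duplicate the reading switch itself with a dedicated construction (Figure~\ref{fig:w dup}), so that one shared state register serves all $m$ inputs of $G$. You avoid duplicating switches altogether: you keep $m$ synchronized one-hot copies of the state, give each input a private cascade of switches that reads only its own copy, and have every transition handler update all $m$ copies through duplicated set lines before exiting at the simulated output. The cost is redundancy (each simulated traversal updates all $m$ registers, and the construction is larger, though still polynomial); the benefit is that only set-line duplication is ever needed, so you never confront the switch-duplication problem that Figure~\ref{fig:w dup} exists to solve. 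Both routes use determinism of $G$ in the same way---exactly one handler per input/state pair keeps every switch output wired to a single input location, hence branchless---and both would extend to the nondeterministic one-player corollary by letting the read outcome branch. Two minor points: a duplicator tree yielding $N$ independent virtual set lines uses about $N$ duplicators (its depth, not its size, is $O(\log N)$), which does not affect polynomiality; and the case analysis you defer is indeed routine, since handler paths leave the simulation only at out-ports and so cannot re-enter a reader chain before the update of all copies is finished.
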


\begin{proof}
  We present simulations of gradually more powerful gadgets. First, the edge duplicator (Figure~\ref{fig:su/w/sd duplicator}) lets us have any number of copies of the set-up and set-down lines.

  Next, we simulate a generalization of the switch/set-up line/set-down line which call the $k$-switch. This gadget has $k$ states, $k$ lines which each set the gadget to a particular state, and an input which does not change the state and sends the agent to one of $k$ locations depending on the state. The switch/set-up line/set-down line is a 2-switch. The simulation for $k=4$ is shown in Figure~\ref{fig:w/su/sd -> kw}, and generalizes easily to arbitrary $k$: we need $k-1$ gadgets connected in series, where the $i$th gadget has $i$ set-up lines and $k-1-i$ set-down lines.

  We now duplicate the large switch in a $k$-switch using the construction in Figure~\ref{fig:w dup}. Thus the switch/set-up line/set-down line can simulate a gadget with any number of states, any number of lines which set it to a particular state, and any number of inputs which send the agent to different outputs depending on the state but do not change the state.

  Finally, let $G$ be an arbitrary deterministic input/output gadget. If $G$ has $k$ states and $m$ input locations, we use a $k$-switch with $m$ copies of the switch to simulate $G$. The $m$ inputs lead directly to the $m$ switches. For each transition $(q,a)\to(r,b)$ of $G$---meaning that when the agent enters at $a$ in state $q$, it exits at $b$ and changes the state to~$r$---we connect the output taken in $q$ of the switch corresponding to $a$ to a line which sets the state to $r$, and connect the output of that line to $b$. This encodes the correct behavior for that transition. Because $G$ is deterministic, there is only one such transition for each pair $(q,a)$, so only connect each output of a switch to one input location, as required for zero-player motion planning.
\end{proof}

  \begin{figure}
    \centering
    \includegraphics[scale=.8]{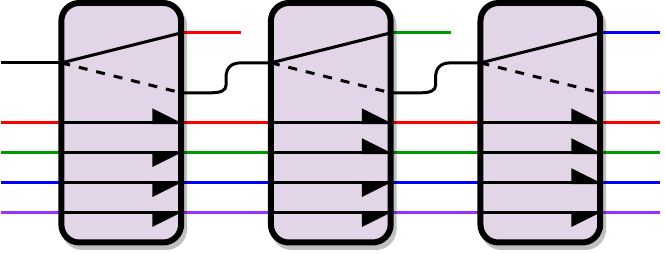}
    \caption{A simulation of a 4-switch using the switch/set-up line/set-down line. Colors indicate the outputs corresponding to set lines.}
    \label{fig:w/su/sd -> kw}
  \end{figure}

  \begin{figure}
    \centering
    \includegraphics[scale=.8]{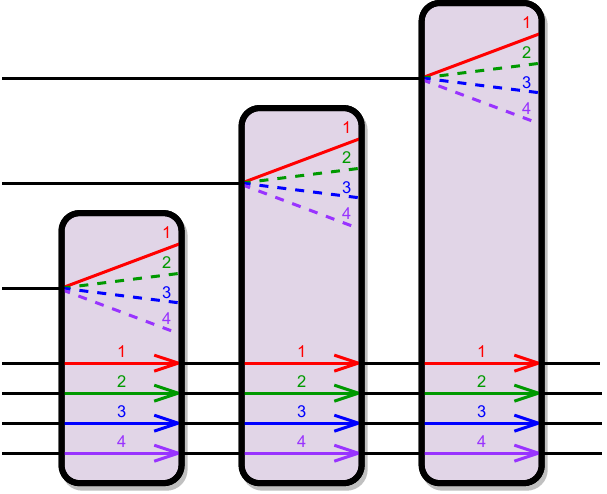}
    \caption{Simulating a 4-switch which has three copies of the switch.}
    \label{fig:w dup}
  \end{figure}

\begin{corollary}
  Every unbounded output-disjoint deterministic 2-state input/output gadget with multiple nontrivial inputs simulates every deterministic input/output gadget in zero-player motion planning.
\end{corollary}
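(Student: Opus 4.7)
The plan is to obtain this corollary essentially for free by composing the two preceding universality results. Concretely, let $G$ be an unbounded output-disjoint deterministic 2-state input/output gadget with multiple nontrivial inputs, and let $H$ be any deterministic input/output gadget which we wish to simulate in zero-player motion planning. By Theorem~\ref{thm:w/su/sd universal 0p}, there is a system $\Sigma_H$ of switch/set-up line/set-down line gadgets that simulates $H$. By the immediately preceding theorem, $G$ simulates the switch/set-up line/set-down line, so each copy of a switch/set-up line/set-down line in $\Sigma_H$ can be replaced with a subsystem of $G$'s realizing that simulation. The resulting system is built entirely of copies of $G$ and simulates $H$.

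The only thing worth verifying carefully is that simulation composes in the zero-player setting, i.e.\ that substituting a simulation into each gadget of another simulation still yields a valid simulation. Since the simulations in this section are already claimed to be ``simulations under most reasonable interpretations'' and in particular preserve PSPACE-hardness of zero-player motion planning, the argument is just to observe that each substitution is local: the interface of the replacement subsystem exposes exactly the in-ports and out-ports of the gadget being replaced, the replacement contains no additional entry points usable by the robot, and when the robot enters at a simulated in-port the unique forced path through the substituted subsystem deterministically reproduces the transition that the original gadget would have taken, leaving the robot at the corresponding simulated out-port and with internal state encoding the new gadget state. Iterating this one level deeper shows that the two-level replacement (first $H$ by $\Sigma_H$, then each switch/set-up line/set-down line in $\Sigma_H$ by its $G$-simulation) still gives a valid simulation of $H$.

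The one modest subtlety, and the step I would be most careful about, is branchlessness: zero-player motion planning requires that each connected component of the connection graph contain at most one input location. Both families of simulations in question were designed within the zero-player setting, so internally they are branchless, and the substitution only identifies their labeled in-/out-ports with the corresponding locations of the outer system; hence branchlessness is preserved. I would spell this out in one sentence to justify that the composed construction is a legitimate branchless system of copies of $G$.

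Given these observations, the proof reduces to: cite the two previous theorems, note that simulation composes, and conclude. No further construction is required.
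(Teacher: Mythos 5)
Your proposal is correct and matches the paper's (implicit) argument exactly: the paper states this corollary without proof, treating it as an immediate composition of the theorem that every such gadget simulates the switch/set-up line/set-down line with Theorem~\ref{thm:w/su/sd universal 0p}. Your additional remarks on why simulation composes and why branchlessness is preserved are sound and only make explicit what the paper leaves unsaid.
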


\begin{corollary}\label{cor:w/su/sd io universal 1p}
  The switch/set-up line/set-down line simulates every input/output gadget in one-player motion planning (that is, we allow multiple input locations in the same connected component of the connection graph, or equivalently allow fan-out gadgets as described in Section~\ref{sec:1p}).
\end{corollary}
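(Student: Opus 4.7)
The plan is to adapt the construction from Theorem~\ref{thm:w/su/sd universal 0p} to handle nondeterministic input/output gadgets, using branching hallways to expose the nondeterministic choice to the player. The ingredients built in that proof (the edge duplicator, the $k$-switch, and the switch duplicator of Figure~\ref{fig:w dup}) are all simulations under any reasonable interpretation, so they remain valid in the one-player setting. In particular, I would reuse the construction that realizes a gadget with $k$ states, $k$ set-to-state lines (each setting the state to a particular one of the $k$ states), and arbitrarily many state-reading switches.

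Let $G$ be an arbitrary input/output gadget with $k$ states and $m$ input locations. As in the zero-player proof, I would build a $k$-switch containing $m$ copies of the internal switch, one per input location of $G$. The only change is in the wiring of the outputs. For each pair $(\ell,s)$, let $T_{\ell,s}=\{(\ell'_1,s'_1),\dots,(\ell'_j,s'_j)\}$ be the set of transitions of $G$ available from location $\ell$ in state $s$. In the zero-player proof, $|T_{\ell,s}|\leq1$ and we wired the state-$s$ output of the $\ell$-switch through a single set-line into $\ell'_1$. Here, I would instead route that output into a branching hallway in the connection graph that fans out to $j$ different set-to-$s'_i$ lines, each of which then leads to the corresponding output location $\ell'_i$. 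In one-player motion planning, branching hallways are allowed, so this routing is legal; and the player's choice of which branch to enter exactly mirrors the nondeterministic choice among transitions in $T_{\ell,s}$.

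The main thing to check is that this simulation does not grant the player any power beyond what $G$ allows. Two potential leaks must be ruled out. First, once the player enters one of the branch's set-lines, they cannot back out, because set-up and set-down lines are one-way tunnels; so committing to a branch really does commit to that transition and its state update. Second, the player cannot reach a transition not in $T_{\ell,s}$, because the $k$-switch routes the entering agent to exactly one of its $k$ outputs (the one labeled by the current state $s$), and only the branches for $T_{\ell,s}$ are reachable from that output. Thus the simulated composite behaves identically to $G$ in every reachable configuration, establishing the corollary.
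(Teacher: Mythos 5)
Your proposal is correct and matches the paper's own proof: reuse the construction from Theorem~\ref{thm:w/su/sd universal 0p}, and when $(\ell,s)$ has multiple transitions, connect that switch output to multiple set-line inputs (equivalently, a branching hallway), letting the player's choice realize the nondeterminism. The extra checks you include about one-way set lines and unreachable transitions are sound but not part of the paper's (briefer) argument.
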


\begin{proof}
  We use the same construction as in the proof of Theorem~\ref{thm:w/su/sd universal 0p}. If $G$ is nondeterministic---say it has multiple transitions when entering $a$ in state $q$---then we will connect the output taken in $q$ of the switch corresponding to $a$ to multiple input locations.
\end{proof}

\begin{corollary}\label{cor:w/su/sd universal 1p}
  In one-player motion planning, the switch/set-up line/set-down line simulates every gadget.
\end{corollary}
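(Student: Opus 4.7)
The plan is to bootstrap from Corollary~\ref{cor:w/su/sd io universal 1p}, which already handles every input/output gadget (deterministic or not). The only feature of a general gadget $G$ not yet addressed is that its locations need not be partitioned into pure inputs and pure outputs: a single location $\ell$ may serve both roles across different transitions, and the agent must be able to decide at $\ell$ whether to initiate a new traversal or to leave the gadget. In one-player motion planning we are explicitly allowed branching hallways, so the strategy is to reduce $G$ to an input/output gadget by duplicating each location, then apply the previous corollary.

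Concretely, given an arbitrary gadget $G$ with locations $L$ and states $S$, I would build an input/output gadget $G'$ whose inputs are $\{\ell_{\text{in}} : \ell \in L\}$ and outputs are $\{\ell_{\text{out}} : \ell \in L\}$, and whose transitions are $(\ell_{\text{in}}, s) \to (\ell'_{\text{out}}, s')$ for every transition $(\ell, s) \to (\ell', s')$ of $G$. If some input location of $G'$ has no transition in some state $s$ (because $G$ has no outgoing transition from $\ell$ in state $s$), I would add a dummy transition to a fresh dead-end output to satisfy the input/output definition; this dummy can be detected and avoided by the player, so it does not add useful options. Now $G'$ is an input/output gadget and Corollary~\ref{cor:w/su/sd io universal 1p} provides a simulation of $G'$ from switch/set-up line/set-down lines.

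To simulate $G$ itself, I would take the simulation of $G'$ and, for each location $\ell$ of $G$, wire $\ell_{\text{in}}$ and $\ell_{\text{out}}$ into a common connected component in the connection graph, exposing a single external port for $\ell$. By branching, the agent arriving at this port can either choose to enter the simulation via $\ell_{\text{in}}$ (initiating any transition of $G$ that begins at $\ell$) or, if it just emerged at $\ell_{\text{out}}$ after completing a transition, choose to leave the simulation along any external connection. This exactly mirrors the choices available at location $\ell$ of $G$ in one-player motion planning.

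The main obstacle is showing that this identification does not introduce spurious behaviors. I would argue correctness by a bisimulation-style check: after any sequence of external moves, the state of the switch/set-up/set-down system encodes the state of $G$, and the agent's position in the simulation corresponds either to being outside $G$ (equivalently, sitting in an external hallway) or to being in the middle of a single transition of $G'$, which is forced to completion because the intermediate switches and set lines of the Corollary~\ref{cor:w/su/sd io universal 1p} construction form one-way paths to a unique $\ell'_{\text{out}}$. In particular, the agent cannot ``short-circuit'' by entering some $\ell_{\text{in}}$ while partway through another transition, because mid-transition locations do not connect back to any $\ell_{\text{in}}$. Combined with the dummy transitions above to keep $G'$ input/output, this establishes that the system simulates $G$ in the standard one-player sense, proving the corollary.
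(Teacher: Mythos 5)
Your proposal is correct and matches the paper's proof: both split each location $\ell$ of $G$ into an input copy $\ell_{\text{in}}$ and an output copy $\ell_{\text{out}}$ to form an input/output gadget $G'$, invoke Corollary~\ref{cor:w/su/sd io universal 1p}, and then reconnect the two copies into a single external port. Your extra care with dummy dead-end transitions (to satisfy the requirement that every input location have a traversal in every state) and the bisimulation-style correctness check are reasonable refinements of details the paper leaves implicit, but the argument is essentially the same.
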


\begin{proof}
  Let $G$ be an arbitrary gadget. We construct a gadget $G^\prime$ with the same states as $G$, locations $a_{in}$ and $a_{out}$ for each location $a$ of $G$, and a transition $(q,a_{in})\to(r,b_{out})$ for each transition $(q,a)\to(r,b)$ of $G$. Clearly $G^\prime$ is input/output: $a_{in}$ and $a_{out}$ are input and output locations, respectively. Thus, by Corollary~\ref{cor:w/su/sd io universal 1p}, the switch/set-up line/set-down line simulates $G^\prime$ in one-player motion planning. But $G^\prime$ simulates $G$ simply by connecting both $a_{in}$ and $a_{out}$ to~$a$.
\end{proof}

\begin{corollary}
  In one-player motion planning, every unbounded output-disjoint deterministic 2-state input/output gadget with multiple nontrivial inputs simulates every gadget.
\end{corollary}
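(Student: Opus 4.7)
The plan is to chain two simulation results that have already been established. By the preceding theorem, every unbounded output-disjoint deterministic 2-state input/output gadget $G$ with multiple nontrivial inputs simulates the switch/set-up line/set-down line. By Corollary~\ref{cor:w/su/sd universal 1p}, the switch/set-up line/set-down line simulates every gadget in one-player motion planning. The corollary should then follow by composing the two simulations: replace each gadget in the target system with a switch/set-up line/set-down line simulation, and then replace each such gadget with a simulation built out of $G$.

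The main step to justify is that simulations compose correctly in the one-player setting. First, I would note that the edge duplicators and subsimulations used to obtain switch/set-up line/set-down line from each of the six basis gadgets in Figures~\ref{fig:bounded basis}--\ref{fig:unbounded basis} are all local constructions in which the agent's traversal behavior from labeled in-ports to labeled out-ports matches the target gadget exactly, without any branching choices being introduced at auxiliary locations. So, even though these simulations were described in the zero-player context, they remain valid one-player simulations: any one-player strategy in the outer system corresponds to a one-player strategy in the simulated system, and vice versa. Hence the simulation of switch/set-up line/set-down line by $G$ is simultaneously a zero-player and a one-player simulation.

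Given this, the composition is straightforward. Starting from an arbitrary gadget $H$ and a one-player system containing $H$, Corollary~\ref{cor:w/su/sd universal 1p} gives a one-player system $S_1$ of switch/set-up line/set-down lines that behaves like the original. Substituting into $S_1$ the $G$-based simulation of the switch/set-up line/set-down line yields a one-player system $S_2$ built entirely from $G$ whose locations labeled as the in/out ports of each switch/set-up line/set-down line behave identically to those of $S_1$. Since the wiring between simulations only identifies output-labeled ports with input-labeled ports (the same connection structure that the one-player framework allows), $S_2$ is a legitimate one-player system of $G$'s that simulates the original.

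The only real obstacle is checking that the word ``simulates'' is being used consistently: in particular that the switch/set-up line/set-down line simulations given earlier (designed with zero-player reductions in mind) satisfy the stronger preservation properties needed in one-player motion planning. This is the content of the remark in the paper that these constructions ``are simulations under most reasonable interpretations'', and a brief verification amounts to observing that each construction introduces no intermediate input locations where an agent could make a one-player choice that was unavailable in the simulated gadget, so in each case the one-player reachability relation is preserved.
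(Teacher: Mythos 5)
Your proof is correct and matches the paper's intended argument: the corollary is stated without proof precisely because it follows by composing the preceding theorem with Corollary~\ref{cor:w/su/sd universal 1p}, exactly as you do, and your justification that the earlier simulations remain valid in the one-player setting mirrors the paper's own remark that those constructions are simulations ``under most reasonable interpretations.'' One minor slip: only the six unbounded basis gadgets of Figure~\ref{fig:unbounded basis} are relevant here (the bounded ones in Figure~\ref{fig:bounded basis} do not simulate the switch/set-up line/set-down line), but this does not affect the argument.
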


\section{One Player}\label{sec:1p}
In this section, we consider one-player motion planning with input/output gadgets. This is a generalization of zero-player motion planning, where we no longer require each connected component of the connection graph to have only one input location. We also now allow nondeterministic gadgets.

A simple nondeterministic input/output gadget is the \emph{fan-out} gadget, which has one input location, two output locations, and one state; the player may choose which output location to take. One-player motion planning (with input/output gadgets) can be equivalently defined by introducing the fan-out gadget to zero-player motion planning, instead of removing the constraint that the system is branchless.

We characterize the complexity of one-player motion planning with an
output-disjoint deterministic 2-state input/output gadget as follows;
refer to the bottom row of Table~\ref{tbl:2-state}
and the middle row of Table~\ref{tbl:1 input}.
If the gadget is trivial (with at least one traversal), then one-player motion planning is just reachability in a directed graph, which is NL-complete \cite{connectivity}.
If the gadget is unbounded and multi-input,
then one-player motion planning is PSPACE-complete
by Lemma~\ref{lem:1p easy containment} below and by Corollary~\ref{cor:0p pspace-complete}
because it is a generalization of zero-player motion planning.
Otherwise, one-player motion planning is in NP
by Lemma~\ref{lem:1p easy containment} if it is bounded
and by Theorem~\ref{thm:1p in np} if it is single-input.
In either case, motion planning is NP-hard by Corollary~\ref{cor:1p np hard}.

We begin with straightforward containment results:

\begin{lemma}\label{lem:1p easy containment}
  One-player motion planning with input/output gadgets is in PSPACE,
  and one-player motion planning with bounded input/output gadgets is in NP.
\end{lemma}

\begin{proof}
  One-player motion planning can easily be simulated by a nondeterministic polynomial-space algorithm
  which guesses player choices,
  so it is in NPSPACE${}={}$PSPACE \cite{SAVITCH1970177}.

  For bounded input/output gadgets, define $k$, $i$, and $n$ as in Theorem~\ref{thm:bounded in P}.
  As before, the number of state changes is bounded by $kn$.
  The shortest solution visits each entrance at most once between consecutive state changes,
  and thus has total length at most $in(kn+1)$.
  This is a polynomial, so we can use the list of transitions as a certificate for NP.
\end{proof}

For the remainder of this section, we focus on one-player motion planning with single-input input/output gadgets.

One-player reachability switching games, studied in \cite{switchinggames}, are
equivalent to one-player motion planning with deterministic single-input
input/output gadgets. Fearnley, Gairing, Mnich, and Savini \cite{switchinggames}
show that this problem is NP-complete when the gadgets are described as part of
the instance.

In this section, we improve on this result in two ways. First, we show in Section~\ref{thm:1p in np} that the problem remains in NP even when we allow nondeterministic single-input input/output gadgets, which cannot all obviously be simulated by deterministic gadgets. Our proof is similar to the proof of containment in NP in \cite{switchinggames}.

Second, we show in Section~\ref{sec:1p np-hard} that the problem remains NP-hard with a specific gadget instead of instance-specified gadgets. In particular, we show that one-player motion planning with the toggle switch or the set switch is NP-complete. Our reduction is simpler than the one in \cite{switchinggames}, and the technique can be used to prove NP-hardness for many other single-input input/output gadgets.

\subsection{Containment in NP}\label{sec:1p in np}
First we show that one-player motion planning with any single-input input/output gadget is in NP, generalizing a result from \cite{switchinggames}. Our proof is similar, but requires more care to account for nondeterministic gadgets.

\begin{theorem}\label{thm:1p in np}
  One-player motion planning with single-input input/output gadgets is in NP.
\end{theorem}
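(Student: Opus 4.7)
The plan is to guess a compact flow certificate and verify it in polynomial time. Any shortest realizable play must have length at most the number of global configurations, which is bounded by $|L|\cdot\prod_g|\mathrm{states}(g)|\leq 2^{\mathrm{poly}(n)}$ where $n$ is the input size and $L$ is the set of locations. The certificate specifies, for each gadget $g$ and each transition $t$ of $g$ (entering the unique input in some state $s$, exiting at some output $o$, and changing state to some $s'$), an integer count $f_t$ giving how many times $t$ is used in the play. Each $f_t$ is bounded by $2^{\mathrm{poly}(n)}$ and hence fits in polynomial bits, so the full certificate has polynomial size.

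The verifier checks two kinds of conditions in polynomial time. First, \textbf{connection-graph conservation}: at each location, the total flow arriving via gadget outputs equals the total flow entering as a gadget input, with a $+1$ correction at the start location and a $-1$ correction at the goal. Second, \textbf{local Eulerian conditions} at each gadget $g$: view the chosen transitions as a directed multigraph on $g$'s states, with each transition $t$ appearing $f_t$ times; then every state must have equal in- and out-degree, except that the initial state may have out-degree one larger and some single terminal state may have in-degree one larger; moreover, the initial state must be weakly connected to every state that carries flow. Both kinds of conditions are clearly necessary.

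The main obstacle is proving \textbf{sufficiency}: that any certificate satisfying these checks is realizable as an actual play of one-player motion planning. The proof follows the shape of the deterministic argument in \cite{switchinggames} but must be adapted to nondeterminism. Because the certificate prescribes a specific multiset of transitions at each gadget, the local Eulerian condition guarantees a trail ordering within each gadget consistent with its state dynamics starting from its initial state. Interleaving these per-gadget orderings into a global play is then a greedy/threading argument: the agent moves along any locally allowed transition whose remaining flow is positive, decrements its count, and proceeds to the location given by the connection graph. Global conservation ensures that whenever the agent arrives at a new location, some positive-flow input transition is available there, and the local Eulerian condition ensures the gadget's current state still has a positive-flow outgoing transition; together these guarantee the agent never deadlocks until all flow is exhausted, at which point it lies at the goal. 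Nondeterminism poses no extra difficulty because the certificate is already committing to which transitions get used, and the one-player semantics allows the agent to make exactly those choices.
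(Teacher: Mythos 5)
Your proposal matches the paper's proof essentially step for step: the certificate is the paper's ``controlled switching flow'' (per-transition counts, which is exactly the adaptation needed since per-output-location counts fail for nondeterministic gadgets), verified by flow conservation on the connection graph plus an Eulerian condition on each gadget's state-transition multigraph, with sufficiency shown by threading per-gadget Euler trails together and using conservation to rule out deadlock before the goal is reached. The only nitpick is that the conservation condition must be imposed per connected component of the connection graph (i.e., per identified location), not per raw location, since the agent moves freely within a component and flow may enter at one output location and leave via a different input location of the same component.
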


The input can describe the gadgets in the system by listing their states and locations and specifying their transition graphs.

\begin{proof}
  A single-input input/output gadget is described by a labeled directed graph, with states as vertices and transitions as edges, where each edge is labeled with an output location. An edge labeled $b$ from $q$ to $r$ indicates that, when the agent enters the unique input location in state $q$, it can exit at $b$ and change the state to $r$. If you prefer, this can be thought of as a nondeterministic finite automaton (NFA) whose alphabet is the locations of the gadgets in the motion-planning problem.

  We will adapt the certificates used in \cite{switchinggames}, controlled switching flows, to work for nondeterministic gadgets. The number of times each output location (or edge in the equivalent reachability switching game) is used is no longer enough information, since it may in general be hard to determine whether a nondeterministic gadget has a legal sequence of transitions which uses each location a specified number of times.%
  \footnote{In fact, this is NP-hard by a reduction from the existence of a Hamiltonian path in a directed graph: given a graph with $n$ vertices, construct a gadget with $n$ states and $n$ output locations whose transition graph is the input graph, and ask for a sequence of transitions which uses each output location exactly once. In terms of finite automata, determining whether a given NFA accepts any anagram of a given string is NP-complete.}
  Instead, we will have the certificate include the number of times each \emph{traversal} in each gadget is used, which will be enough information to be checked quickly. We modify the definition of controlled switching flows as follows.

  \begin{definition}
    A \emph{controlled switching flow} in a system of single-input input/output gadgets is a function $f$ from the set of transitions gadgets in the system to the natural numbers (including zero) which is ``locally consistent'' in the following sense:
    \begin{itemize}
      \item For a connected component $H$ of the connection graph, let $H_i$ and $H_o$ be the sets of traversals from input locations and to output locations in $H$, respectively. That is, $H_i$ contains all transitions in gadgets whose input location is in $H$, and $H_o$ contains the transitions which leave the agent in $H$. Then
      $$\sum_{t\in H_i}f(t)-\sum_{t\in H_o}f(t)=
      \begin{cases}
        1 & H\text{ contains the start location}\\
        -1 & H\text{ contains the goal location\footnotemark}\\
        0 & \text{otherwise.}
      \end{cases}$$
      \footnotetext{We can assume the start and goal locations are in different connected components, since otherwise the reachability problem is trivial.}
      \item For each gadget, there is a legal sequence of transitions from its starting state $s$ which uses each transition $t$ in the gadget exactly $f(t)$ times.
    \end{itemize}
    That is, thinking of $f(t)$ as the number of times the agent uses the transition $t$, the agent enters and exits each connected component the same number of times, except that it exits the start location and enters the goal location once, and the agent uses the transitions of each gadget a consistent number of times.
  \end{definition}

  To prove containment in NP, our certificate that it is possible to reach the goal location is a controlled switching flow. Note that each gadget has a polynomial number of transitions, so $f$ has polynomially many entries.
  We need the following three lemmas:

  \begin{lemma}\label{lem:controlled switching flow convincing 1p}
    If there is a controlled switching flow, then it is possible to reach the goal location.
  \end{lemma}
  \begin{lemma}\label{lem:short controlled switching flow 1p}
    If it is possible to reach the goal location, then there is a polynomial-length controlled switching flow, i.e., one where $f(t)$ is at most exponential in the size of the system.
  \end{lemma}
  \begin{lemma}\label{lem:controlled switching flow checking 1p}
    There is a polynomial-time algorithm which determines whether a function $f$ is a controlled switching flow.
  \end{lemma}

  Together these imply that controlled switching flows can actually be used as certificates, and thus the one-player problem is in NP.

  \begin{proof}[Proof of Lemma~\ref{lem:controlled switching flow convincing 1p}]
    Let $f$ be a controlled switching flow. For each gadget $g$, pick a sequence of transitions of length $\ell_g=\sum\limits_{t\in g}f(t)$ in that copy which uses each transition $t$ exactly $f(t)$ times; this exists by the definition of a controlled switching flow. We play the one-player motion planning game in the system. Our strategy is based on the chosen sequences: whenever we arrive at a gadget, take the next transition in the sequence. If we find ourselves in a connected component with the input locations of multiple gadgets, we can enter any gadget $g$ which we have previously used fewer than $\ell_g$ times. We stop when we reach the connected component of the goal location, or when we have no moves obeying this strategy, meaning every gadget $g$ whose input location is currently reachable has already been used $\ell_g$ times.

    We claim this strategy must reach the goal location. If it does not, we must eventually get stuck with no moves (specifically, within $\sum\limits_t f(t)$ steps), and we will show this cannot happen because $f$ is a controlled switching flow. For the sake of contradiction, let $H$ be the connected component of the connection graph we are stuck in. To be stuck, we must have previously exited $H$ at least $\sum\limits_{t\in H_i} f(t)$ times. So we must have entered $H$ at least $\sum\limits_{t\in H_i} f(t)+1$ times (or one fewer, if the start location is in $H$). However, we have entered $H$ at most $\sum\limits_{t\in H_o} f(t)$ times, so $\sum\limits_{t\in H_o}f(t)\geq\sum\limits_{t\in H_i}f(t)+1$ or $\sum\limits_{t\in H_o}f(t)\geq\sum\limits_{t\in H_i}f(t)$ if the start location is in $H$, which violates the assumption that $f$ is a controlled switching flow.
  \end{proof}

  \begin{proof}[Proof of Lemma~\ref{lem:short controlled switching flow 1p}]
    For some path which reaches the goal location, let $f(t)$ be the number of times the path uses the traversal $t$. Then $f$ is clearly a controlled switching flow. The number of traversals in the shortest solution path is at most the number of configurations of the system of gadgets, which is at most $nk^n$ the system contains $n$ gadgets which have at most $k$ states. Thus using the shortest solution path, we have a controlled switching flow $f$ where $f(t)\leq nk^n$ and thus $f$ has polynomial length.
  \end{proof}

  \begin{proof}[Proof of Lemma~\ref{lem:controlled switching flow checking 1p}]
    The first condition for $f$ to be a controlled switching flow can be easily checked in polynomial time by computing the relevant sums.

    For the second condition, think of a gadget in the system as a directed multigraph with states as vertices and transitions as edges (labelled with their output locations).
		The second condition says that there is a walk through this graph starting at $s$ which uses each edge $t$ a specified number $f(t)$ of times. This is equivalent to an Euler tour in the (possible exponentially large) graph with $f(t)$ copies of the edge $t$. To verify that such a walk exists, we only need to check that the total in- and out-degrees match at each vertex (except possibly off by one at $s$ and one other vertex) and that the set of used transitions, i.e., those $t$ where $f(t)>0$, is connected. This can all be checked in polynomial time.
  \end{proof}

  This concludes the proof of Theorem~\ref{thm:1p in np}.
\end{proof}

\subsection{NP-hardness}\label{sec:1p np-hard}

In this section, we prove NP-hardness of one-player motion planning with each of the nontrivial single-input 2-state deterministic gadgets: the set switch and toggle switch. Our proofs can be easily adapted to prove NP-hardness of the corresponding problem for many input/output gadgets, but we leave open the problem of providing a characterization.

Our reduction is simpler than that in \cite{switchinggames}, and we show hardness for specific gadgets instead of general reachability switching games, which are equivalent to instance-specified gadgets.

\begin{theorem}\label{thm:1p np hard}
  One-player motion planning with either the toggle switch or the set switch is NP-hard.
\end{theorem}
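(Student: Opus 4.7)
I plan to reduce from 3-SAT to one-player motion planning with each of the set switch (the set-up switch, by symmetry) and the toggle switch. Given a 3-CNF formula $\phi$ with $n$ variables $x_1,\dots,x_n$ and $m$ clauses $C_1,\dots,C_m$, I will build a polynomial-size system of the gadget in question, laid out as $n$ variable-selection gadgets in series followed by $m$ clause-verification gadgets in series, whose reachability encodes satisfiability. I initialize both gadgets in the down state, so that the first traversal of a switch produces its bottom output (the ``chain'' output) and, if a second traversal occurs, it produces the top output (the ``check'' output); this uniform view lets one construction serve both gadgets.

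For each variable $x_i$, I introduce one copy $T_i^{(j)}$ per positive occurrence of $x_i$ in $\phi$ and one copy $F_i^{(j)}$ per negative occurrence. The variable gadget for $x_i$ has an entrance $E_i$, a branching hallway leading through two single-use buffer switches $B_i^T, B_i^F$ to two mutually exclusive chains --- one threading all the $T_i^{(j)}$ via their chain outputs, the other threading all the $F_i^{(j)}$ analogously --- which merge at an exit $X_i$; another single-use buffer $W_i$ then carries the player from $X_i$ to the next variable's entrance, or, for the last variable, to the first clause. The check output of each $T_i^{(j)}$ or $F_i^{(j)}$ is routed to the continue path of the clause containing that literal occurrence. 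Each clause gadget has an entrance $E_{C_j}$ with three sub-branches, each connecting directly to the input of the switch for one of the three literals in $C_j$.

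In variable selection, because gadget transitions are one-way and the buffers are single-use, the player can traverse only one of the two chains per variable, committing them to an assignment. In clause verification, the switch for a literal $\ell$ has been pre-traversed (along its chain output) exactly when $\ell$ is true under that assignment, so entering $\ell$'s sub-branch and traversing its switch takes the check output to the clause's continue path exactly when $\ell$ is true. If $\ell$ is false, the switch is still in its initial state, and its chain output pushes the player through the remainder of that variable's chain and eventually to $X_i$, where the already-spent buffer $W_i$ produces a true dead-end. Thus reaching the goal corresponds exactly to satisfying $\phi$.

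The main obstacle will be ruling out cheating strategies that exploit the shared connection components between variable and clause gadgets, in particular the player's attempting to navigate from a clause sub-branch back into a variable gadget's branching (to flip their assignment) or into an unused buffer (to set additional switches). I plan to handle this by a careful component analysis: the input side of each switch is separated from its output side by the one-way gadget transition, so the variable entrance $E_i$ and the inputs of any unused buffers lie in a component of the connection graph unreachable from any clause sub-branch without traversing a spent buffer backward, which is impossible. This containment ensures that every motion plan reaching the goal must follow exactly the intended forward path through the variable and clause gadgets, so reachability and satisfiability coincide.
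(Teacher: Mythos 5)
Your reduction is correct and is essentially the same as the paper's: a 3SAT reduction in which each literal occurrence gets its own switch, variable choices are committed via one-way chains separated by single-use tunnels, and a clause check succeeds exactly when the entered switch was pre-traversed, with the at-most-two-traversals property making the toggle switch and set switch interchangeable. Your extra buffer switches $B_i^T, B_i^F$ and the explicit connected-component analysis are harmless refinements of the same construction.
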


\begin{proof}
  We provide essentially identical reductions from 3SAT to the two motion-planning problems. In the reduction, the player will never be able to traverse a gadget more than two times, so the difference between the toggle switch and the set switch is irrelevant. Each gadget will begin in the state which sends the agent to the `top' exit, and after a single traversal moves to the state which sends the agent to the `bottom' exit. We will describe the reduction in terms of the set-down switch, but it is equally applicable to the toggle switch.

  \begin{figure}
    \centering
    \begin{subfigure}{.7\linewidth}
      \centering
      \includegraphics[width=\linewidth]{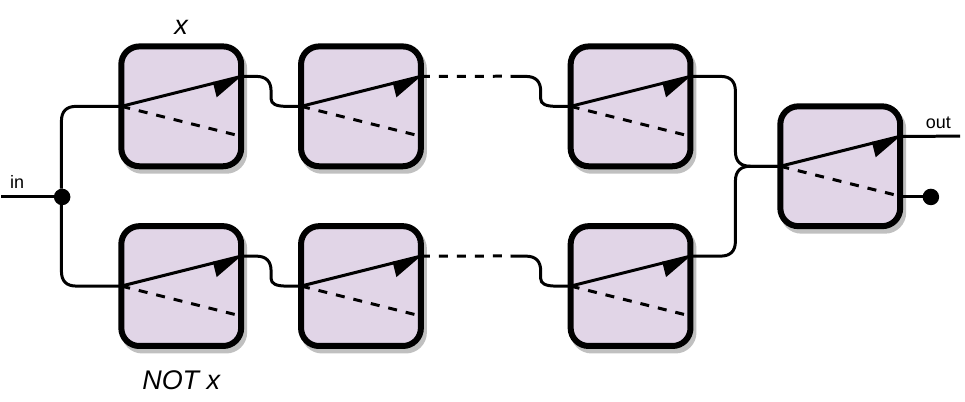}
      \caption{A variable. The player must pick one of the two branches to cross, and render the final set-down switch useless.}
    \end{subfigure}
    \hfill
    \begin{subfigure}{.25\linewidth}
      \centering
      \includegraphics[width=\linewidth]{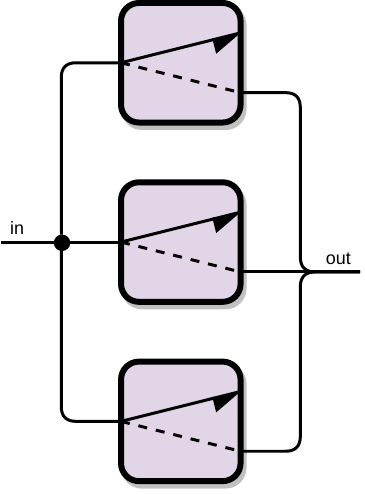}
      \caption{A clause. The player can only cross if at least one of the literals is set to true; otherwise they get stuck in a dead-end a variable.}
    \end{subfigure}

    \caption{Our reduction from 3SAT for one-player motion planning with the set-down switch.}
    \label{fig:3sat sd}
  \end{figure}

  For each variable in a 3SAT instance, there is a fork where the player may choose one of two paths. Each path passes through a series of set-down switches, exiting each from the top and setting them to the down state.
  The paths then merge and go through one more set-down switch, whose down exit is a dead-end.
  The number of gadgets in each branch depends on the number of instances of each literal in the formula.
  These variables are connected in series beginning at the start location, so the player is forced to walk through each variable, picking a side to use for each one.
  This corresponds to picking an assignment of each variable.
  After setting the variables in this way, the last set-down switch at the end of each variable is in the down state.

  For each clause, there is a 3-way fork, where the player must choose to go through one of the gadgets corresponding to a literal in the clause.
  If the chosen gadget was already traversed (during the variable-setting phase), the agent exits the bottom and can continue to the exit of the clause.
  If the chosen gadget was not already traversal, the agent exits the top, and finds itself in a variable.
  The player now has no choice but to walk down the variable path until the agent goes through the set-down switch at the end of the variable, which is in the down state, so the agent is now stuck in a dead-end.

  The clauses are connected in series, with the last variable leading to the first clause and the last clause leading to the goal location. In order to reach the goal location, the agent must pass through each variable and then each clause.
  In order to get through a clause without getting stuck, at least one gadget in the clause must have already been traversed; equivalently, at least one literal in the clause must be true under the assignment corresponding to the path taken during variable setting. Thus the agent can reach the goal location if and only if the formula has a satisfying assignment.

  For the set-down switch, once a gadget is in the down state it remains there forever, so it does not matter what order the clauses or gadgets within each variable path are in. However, for the toggle switch, when the agent walks through a clause the gadget it uses returns to the up state, which could lead to the agent later escaping that variable using the same gadget, instead of getting stuck in a dead-end.

  To prevent this, we order the gadgets on each variable path carefully.
  Specifically, we first choose an order for the clauses. For each literal $\ell$, we have the path corresponding to $\ell$ go through the set-down switches representing $\ell$ in the same order they appear in clauses.
  So if the agent moves from a clause to a variable through a gadget corresponding to $\ell$ (because $\ell$ was false), the agent cannot have previously interacted with any of the gadgets further along the line corresponding to $\ell$ during the clause-checking phase. In particular, those gadgets are all in the up state, and so the agent is in fact forced to go to the dead-end at the end of the variable.
\end{proof}

\begin{corollary}\label{cor:1p np hard}
  One-player motion planning with any nontrivial
  output-disjoint deterministic 2-state input/output gadget is NP-hard.
\end{corollary}

\begin{proof}
  It suffices to show that such a gadget can simulate
  either the toggle switch or the set switch,
  since Theorem~\ref{thm:1p np hard} says one-player motion planning
  with either of these gadgets is NP-hard.

  The only single-input output-disjoint deterministic 2-state input/output gadgets
  are the toggle switch and set switch, which simulate themselves.
  If the gadget is multi-input, by Lemma~\ref{lem:gadget basis}
  it simulates
  one of the eight basis gadgets shown in
  Figures~\ref{fig:bounded basis} and~\ref{fig:unbounded basis}.
  All but three of these contain a toggle switch or set switch.
  The switch/set-up line/set-down line contains the switch/set-up line,
  so we need only consider the switch/toggle and switch/set-up line.
  These can simulate, respectively, the toggle switch and the set-up switch,
  as shown in Figure~\ref{fig:sim single-input}.
\end{proof}

  \begin{figure}
    \centering
    \begin{subfigure}{.4\linewidth}
      \centering
      \includegraphics[scale=0.8]{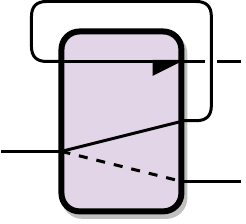}
      \caption{The switch/set-up line simulates the set-up switch.}
    \end{subfigure}
    \hfil
    \begin{subfigure}{.4\linewidth}
      \centering
      \includegraphics[scale=0.8]{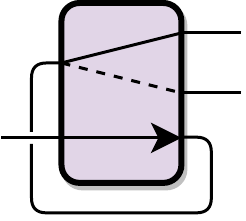}
      \caption{The switch/toggle line simulates the toggle switch.}
    \end{subfigure}
    \caption{Simulating nontrivial single-input gadgets
      from gadgets without a toggle- or set-switch.}
    \label{fig:sim single-input}
  \end{figure}

\section{Two Players}\label{sec:2p}

In this section, we consider a two-player game on systems of
input/output gadgets where the two players control a shared agent.
This game is analogous to the two-player reachability switching games of
\cite{switchinggames}, and we improve upon their results.
This game is different from two-player motion planning as defined in
\cite{DHL}, which has two agents each controlled by one player.

\begin{definition}
  For an input/output gadget $G$, \emph{two-player shared-agent motion planning with $G$} is played on a branchless system of $G$ and fan-out gadgets, with each gadget labeled Black or White, and has two players named Black and White. An agent begins at a designated start location. When the agent reaches a gadget, the player matching the gadget's label chooses a transition to take. White's goal is to reach a designated goal location, and Black's goal is to prevent this.

  The decision problem \emph{two-player shared-agent motion planning with $G$} is whether White has a strategy to force a victory.
\end{definition}

If $G$ is deterministic, the labels on $G$ do not matter: decisions are made only at fan-out gadgets.

Two-player reachability switching games are equivalent to two-player shared-agent motion planning with deterministic single-input input/output gadgets specified as part of the instance. It is shown in \cite{switchinggames} that this problem is in EXPTIME and PSPACE-hard.

In this section, we improve on this result in two ways.
First, we show that two-player shared-agent motion planning with any input/output gadgets (with any number of inputs) is in EXPTIME.
Second, we show that two-player shared-agent motion planning with just the toggle switch or the set switch is PSPACE-hard.
We give a reduction from Geography which is simpler than the reduction in \cite{switchinggames}.

We do not show EXPTIME-hardness for two-player shared-agent motion planning with any gadget.
We conjecture that two-player shared-agent motion planning is EXPTIME-hard with
any multi-input unbounded output-disjoint deterministic 2-state input/output gadget,
and perhaps even with any single-input such gadget (the only one being the toggle switch).

\begin{lemma}
  Two-player shared-agent motion planning with input/output gadgets is in EXPTIME.
\end{lemma}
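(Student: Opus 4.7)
The plan is to solve the game by constructing the full configuration graph and applying the standard attractor/backward-induction algorithm for finite two-player reachability games, observing that this graph has at most exponential size.

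First I would define the state space of the game: a \emph{configuration} is a pair consisting of the robot's current location (equivalently, its current connected component in the connection graph) together with the state of every gadget in the system. If the system has $n$ gadgets and each gadget has at most $k$ states, and if there are $m$ connected components in the connection graph, then the total number of configurations is at most $m\cdot k^n$, which is exponential in the size of the input instance.

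Next I would build, in exponential time, the explicit \emph{game graph} $\mathcal{G}$ whose vertices are these configurations. From a configuration $C$, I enumerate the moves available to the robot: if the robot sits in a connected component containing an input location of a gadget $g$, then for every transition of $g$ from its current state at that input location I add an edge to the configuration resulting from making that transition; similarly, if the component is a branching hallway the outgoing edges correspond to the hallway's branches. Each vertex is labeled with the player (Black or White) who owns the current gadget or hallway, so $\mathcal{G}$ is a standard two-player reachability game graph. Dead-end configurations (no outgoing edges) and the designated goal configurations form the two kinds of terminal positions. The construction takes time polynomial in $|\mathcal{G}|$, hence exponential in the input size.

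Then I would invoke the classical attractor-computation algorithm for two-player reachability games on finite graphs: iteratively compute the set $W$ of configurations from which White has a winning strategy, starting from the set of configurations whose location is the goal and repeatedly adding any White-owned vertex with some edge into $W$, or any Black-owned vertex all of whose edges go into $W$. This runs in time linear in $|\mathcal{G}|$. White wins the original instance if and only if the initial configuration lies in $W$; in particular, infinite play (which can occur because unbounded gadgets admit cycles in $\mathcal{G}$) is automatically treated as a Black win since such configurations are never added to the attractor.

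The only mildly delicate point, and the one I would be most careful about, is the bookkeeping for the different kinds of choice points (branching hallways versus nondeterministic-gadget choices versus forced deterministic transitions), and confirming that when $G$ is deterministic the gadget labels become irrelevant exactly as the definition claims. This is routine case analysis rather than a conceptual obstacle, so the overall bound is $\mathrm{poly}(|\mathcal{G}|)=2^{\mathrm{poly}(n)}$, giving containment in EXPTIME.
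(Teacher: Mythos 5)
Your argument is correct, but it takes a different (more explicit) route than the paper. The paper's proof is a two-line appeal to alternating complexity: simulate the game on an alternating Turing machine in polynomial space, with White's choices as existential states and Black's as universal states, and conclude via APSPACE${}={}$EXPTIME. Your proof instead materializes the exponential-size configuration graph and runs the classical attractor computation for finite two-player reachability games. The two are morally equivalent---the standard proof that APSPACE${}\subseteq{}$EXPTIME essentially performs your construction---but each has its own virtue: the paper's version is shorter and hides the bookkeeping about move types and infinite play inside the acceptance semantics of alternating machines (where non-terminating branches are automatically rejecting, matching the convention that infinite play is a Black win), whereas yours makes that convention explicit and visible, which is arguably safer given that the paper's definition of the two-player game never states how infinite play is adjudicated. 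One small point to tighten in your write-up: the number of configurations should also account for whose turn it is only implicitly (it is determined by the robot's location, so your bound $m\cdot k^n$ is fine), and you should note that $m$ and the per-configuration move enumeration are polynomial in the input, so the whole construction is indeed $2^{\mathrm{poly}}$. Neither issue is a gap.
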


\begin{proof}
  The two-player game can be simulated on an alternating Turing machine using polynomial space, where White's decisions are made by existential states and Black's decisions are made by universal states. Thus the problem is in APSPACE${}={}$EXPTIME.
\end{proof}

\begin{theorem}
  Two-player shared-agent motion planning with either the toggle switch or the set switch is PSPACE-hard.
\end{theorem}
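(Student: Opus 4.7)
The plan is to reduce from Generalized Geography on bipartite directed graphs (BG), which is PSPACE-complete. In BG, we have a directed graph whose vertex set is $2$-colored Black/White so that every edge crosses the bipartition, together with a start vertex $s$. Two players alternate moving a token along outgoing edges, with the player matching the current vertex's color choosing; no vertex may be revisited, and a player with no legal move loses. I will give essentially the same reduction for the toggle switch and the set switch by using a single switch as a \textit{gate} for each vertex.

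For each vertex $v$ of the BG instance I would build a component $V_v$ starting with a gate: a single switch arranged so the first entry exits one output (``open'') and every subsequent entry exits the other (``closed''). For the set-up switch this is done by initializing to the down state, so the first traversal exits bottom and flips the state to up, and thereafter the state stays up and the output is top. For the toggle switch the alternation is automatic, and only the first two entries matter before the game terminates. The gate's open output feeds a cascade of binary branching hallways, all labeled with $v$'s color, providing one branch per outgoing edge of $v$ in $G$; each such branch is wired to the input of the appropriate neighbor's gate. The gate's closed output is wired to the designated goal location if $v$ is White and to a dead-end if $v$ is Black. The robot starts at the input of $V_s$'s cascade, bypassing $V_s$'s gate, since $s$ is visited at the outset without being entered by a move.

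Correctness follows by induction on play length: an unvisited vertex in BG corresponds to a gate still in its initial state. When the player at $V_v$ selects a branch, the robot enters some neighbor $V_u$'s gate. If $u$ is unvisited, the gate opens and the player of $u$'s color takes over at $V_u$'s cascade, mirroring a legal BG move. If $u$ is already visited, the gate diverts the robot to the goal or to a dead-end, ending the game in favor of the player of $u$'s color; by bipartiteness this is the opposite of $v$'s color and therefore the non-mover, which is exactly the BG winner when the mover illegally enters a visited vertex. A BG player with no legal move corresponds to a motion-planning player whose every branch sends the robot through a used gate, so any choice loses in the same way. Each gate is used at most twice before the game terminates, making the simulation finite and the reduction polynomial-time. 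The main obstacle will be the wiring bookkeeping for the closed outputs, but bipartiteness makes this consistency check automatic, yielding PSPACE-hardness for both gadgets.
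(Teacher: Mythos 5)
Your reduction is essentially the paper's: it too reduces from a partizan/bipartite directed vertex Geography variant, uses branching hallways labeled by the vertex's owner for the move choices, and uses a single set switch or toggle switch as a ``visited'' flag whose second traversal routes the robot to the goal or a dead end according to the owner of the revisited vertex. The only structural difference is that the paper works with max-degree-3 instances and therefore only needs to place a switch at the in-degree-2 vertices, whereas you gate every vertex and absorb large out-degree into cascades; both choices are fine. However, two boundary cases in your construction, as written, produce wrong answers on some instances and need patching. First, you start the robot at $V_s$'s cascade but leave $V_s$'s gate in its initial ``unvisited'' state; if the graph has an edge back into $s$, a later move into $s$ passes through the open gate and play resumes from $s$'s cascade, even though $s$ was visited at the outset and such a move must immediately lose for the mover. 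The fix is to initialize $s$'s gate in its post-first-traversal (closed) state---up, for the set-up switch---or to assume without loss of generality that $s$ has in-degree $0$. Second, sinks: a Black vertex with out-degree $0$ is a loss for Black (he cannot move), so White should win, but your zero-branch cascade dead-ends the robot, which is a win for Black; you need to route the open output of a Black sink's gate directly to the goal (White sinks are already handled correctly by the dead end). With these two repairs, and noting as you do that letting a player ``illegally'' enter a visited vertex at the cost of an immediate loss does not change the value of vertex Geography, the argument goes through for both gadgets.
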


\begin{proof}
  We provide essentially identical reductions from a version of Geography to the two problems. In the reduction, the agent will never be able to traverse a gadget more than two times, so the difference between the toggle switch and the set switch is irrelevant. As in the proof of Theorem~\ref{thm:1p np hard}, we will describe the reduction in terms of the set-down switch, but it is equivalent for the toggle switch.

  \emph{Vertex-Partizan Directed Vertex Geography} is a game played on a directed graph with specified start vertex, where each vertex is assigned to a player. Two players each move a marker along an edge whenever it is at one of their vertices, with the rule that the marker cannot visit the same vertex multiple times. A player loses if they have no moves. In Vertex-Partizan Max-Degree-3 Directed Vertex Geography, we assume every vertex has degree at most three, with at most two incoming edges and at most two outgoing edges. This problem is introduced and shown PSPACE-complete in \cite{bosboom2020edge}; vertex-partizan is a slight variation on bipartite Geography. We will refer to Vertex-Partizan Max-Degree-3 Directed Vertex Geography as simply \emph{Geography}.

  We construct an instance of two-player shared-agent motion planning with the set-down switch from an instance of Geography as follows. Each Geography vertex will be a single gadget, with tunnels in the connection graph corresponding to Geography edges. If a vertex has in-degree 1 and out-degree 2, we replace it with a fan-out gadget labeled with the player who is assigned that vertex. If a vertex has in-degree 2 and out-degree 1, we replace it with a set-down switch initially set to `up', with the `up' exit leading to the edge out of the vertex and the `down' exit leading to the goal location if the vertex is assigned to White and a dead-end otherwise. The agent starts at the location corresponding to the start vertex.

  Play on this system of gadgets proceeds as follows. When the agent reaches a fan-out gadget, the player assigned the corresponding vertex chooses which output to take. When the agent reaches a set-down switch for the first time, it continues along the outgoing edge to another vertex. When it reaches a set-down switch for the second time, the game ends and the player who is assigned the corresponding vertex wins. This is the same as the game of Geography: a player loses if the agent moves from one of their vertices to a set-down switch it visited before, which is equivalent to players not being allowed to move the marker to an already-visited vertex.
\end{proof}

\section{Applications}
\label{Applications}

In this section, we use the results in this paper to prove PSPACE-completeness
of the mechanics in several video games: one-train colorless Trainyard, [the Sequence], trains in Factorio, and transport belts in Factorio.\footnote{Factorio in general is already known to be PSPACE-complete, as players have explicitly built computers using the circuit network; for instance, see \url{https://forums.factorio.com/viewtopic.php?t=42708} or \url{https://redd.it/6imjhv}. We consider the restricted problems with only train-related objects and only transport belt-related objects.}
For each of these problems, the decision problem is the long-term behavior of a deterministic system, e.g., whether a train ever reaches a specific location. Another interesting decision problem, which we do not consider here, is the design problem: given some set of constraints, is it possible to build a configuration with a desired behavior? This is perhaps more natural because, for example, it captures the question of deciding whether a level in Trainyard is solvable.

\subsection{Trainyard}

The study of the complexity of Trainyard began with \cite{almanza2018trainyard}, which showed that finding a solution to a Trainyard level is NP-hard. Later, \cite{trainyard2} showed that checking a solution to a Trainyard level is PSPACE-complete---verifying solutions may be harder than finding them. We improve on this result by showing that checking a solution to a Trainyard level is PSPACE-hard even with only one train, and with no color changes.

Trainyard is a puzzle game in which the goal is to build a system of rails so that trains of the correct colors reach certain stations.
We consider one-train colorless Trainyard, where solutions consist of only rails, crossings, and \emph{switches}%
\footnote{This use of the word `switch' is unrelated to the component of input-output gadgets.}
There is a single train which moves forward along the rails; it succeeds if it reaches a designated location, and crashes and fails if the track it is on ends.
Rails can be traversed in both directions.

The only nontrivial behavior comes from switches, which have two states. A switch changes state every time the train moves through it. It has three locations: two of them always route the train to the third, and the third routes the train to one of the first two depending on the state. We can model this as a toggle line/toggle line/toggle switch with some locations identified; we call this the \emph{Trainyard gadget}, which is shown in Figure~\ref{fig:trainyard gadget}. Since tracks can bend and cross each other, the planarity of a system of Trainyard gadgets does not matter. Now one-train colorless Trainyard is equivalent to zero-player motion planning with the Trainyard gadget---except that the Trainyard gadget is not input/output, so we have not defined zero-player motion planning with it.

\begin{figure}
  \centering
  \includegraphics[scale=.8]{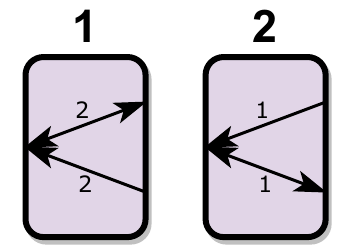}
  \caption{The Trainyard gadget.}
  \label{fig:trainyard gadget}
\end{figure}

\begin{definition}
  \emph{Zero-player motion planning with the Trainyard gadget} takes place in a system of Trainyard gadgets where the connection graph is a partial matching. That is, each location is either paired with one other location or a \emph{dead-end}.

  An agent moves through the system similarly to with input/output gadgets. When it enters a Trainyard gadget, it takes the unique available transition. When it exits a Trainyard gadget, it moves to the unique paired location, or stops and \emph{crashes} if it is at a dead-end.
\end{definition}

\begin{theorem}
  Zero-player motion planning with the Trainyard gadget, or equivalently checking a solution to one-train colorless Trainyard, is PSPACE-hard.
\end{theorem}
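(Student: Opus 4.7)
The plan is to reduce from a PSPACE-hard zero-player motion planning problem, most naturally from zero-player motion planning with the toggle switch/toggle line from Figure~\ref{fig:basis tw/t}, which is PSPACE-complete by the corollary at the end of Section~\ref{sec:0p unbounded}. The key enabling observation is that one-player motion planning with the Trainyard gadget is \emph{effectively zero-player}: the connection graph is a partial matching, so every location has at most one neighbor and the train never has a choice. However, the partial-matching requirement is strictly stronger than the ``branchless'' condition of zero-player input/output motion planning, since branchless systems still permit multiple output locations to fan into a single input location. The simulation therefore needs two components: a simulation of the toggle switch/toggle line itself, and a way to implement fan-in in the connection graph without violating the partial matching.

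Fan-in is handled naturally by the Trainyard gadget, since both of its toggle-line-like inputs $A$ and $B$ route to the single location $C$. A dedicated Trainyard gadget used purely as a 2-to-1 merger therefore works correctly: its state flips on each use, but that state is never observed because no external wire is attached to its switch-behavior exit at $C$. Larger fan-ins are then obtained by chaining such mergers in a binary tree.

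Simulating the toggle switch/toggle line is the main technical step. The Trainyard gadget already has exactly the right state dynamics — every traversal toggles the single bit of state — and already packages a toggle switch at $C$ together with toggle-line-style behavior at $A$ and $B$; as the paper itself notes, it is precisely ``toggle line/toggle line/toggle switch with some locations identified.'' The five logical ports of a toggle switch/toggle line collapse into three physical ports here because the switch's outputs coincide with the toggle lines' inputs and the toggle lines' outputs coincide with the switch's input. To decouple them I would compose the core Trainyard gadget with a small, constant-size arrangement of auxiliary Trainyard gadgets used purely for routing, so that the composite has five distinct ports while its aggregate state still cleanly encodes one bit mirroring the simulated toggle switch/toggle line.

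The main obstacle will be bookkeeping of the auxiliary state. Because every traversal of every Trainyard gadget toggles its state, the routing subassembly must either pass the train through each auxiliary an even number of times per external use (so each auxiliary returns to its canonical state between externally visible transactions), or pair auxiliaries in opposing polarities so that their net effect on the composite state cancels. Once such a simulation of toggle switch/toggle line is verified and combined with the fan-in merger, replacing each gadget in a PSPACE-hard zero-player toggle switch/toggle line instance with its Trainyard simulation — while respecting the partial-matching constraint via the mergers — yields a one-player Trainyard instance that reaches its goal iff the original does, establishing PSPACE-hardness of one-train colorless Trainyard.
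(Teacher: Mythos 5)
There is a genuine gap, and it is exactly at the point you flag as ``the main obstacle'': your proof requires a constant-size \emph{exact} simulation of the toggle switch/toggle line by Trainyard gadgets, in which every auxiliary gadget either is traversed an even number of times per external transaction or is paired with an opposite-polarity partner so the net state change cancels. You do not exhibit such a construction, and the paper's authors explicitly state that they were unable to find one (``we have not been able to simulate the toggle switch/toggle line\dots with the Trainyard gadget''). The structural difficulty is that under the partial-matching constraint each location of the core Trainyard gadget is paired with exactly one other location, yet the locations $A$ and $B$ must simultaneously serve as the toggle line's entrances and as the switch's exits; any auxiliary gadgets inserted to multiplex these roles are themselves toggled a number of times that depends on which role was exercised, so their states drift out of sync with the bit you want the composite to encode. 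Asserting that a ``small, constant-size arrangement'' can be made to work is precisely the unproved step.

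The paper circumvents this with a different idea that your proposal is missing: an \emph{exponentially long} simulation rather than an exact one. It first builds a ``reverse branch'' out of a row of $k+1$ Trainyard gadgets acting as a binary counter initialized to $2^k$, which behaves correctly for at least $2^k$ traversals before the drifting states become observable; two reverse branches plus one core Trainyard gadget then simulate a toggle switch/toggle line for $2^k$ steps. Since the hard zero-player instance with $n$ gadgets resolves (reaches the goal or dies) within $5n2^n$ transitions, choosing $k$ polynomial in $n$ with $2^k>5n2^n$ makes the time-limited simulation indistinguishable from a true one for the purposes of the reduction. Your fan-in merger and the overall reduction skeleton match the paper, but without either an actual exact simulation or this time-bounded-simulation idea, the argument does not go through.
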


\begin{proof}
  We will reduce from zero-player motion planning with the toggle switch/toggle line.
  Formally, we actually reduce from a restricted version of this problem, where we are promised that the agent does not enter an infinite loop---it either reaches the goal location or a dead-end. The system constructed by Theorem~\ref{thm:SUDPSPACE} satisfies this property, and the simulations in Section~\ref{sec:simulating sud} preserve it, so the restricted promise problem is still PSPACE-hard.

  We cannot quite directly simulate a toggle switch/toggle line, for a few reasons:
  \begin{itemize}
    \item The Trainyard gadget, and thus any gadget simulated by it, can be entered at any location, not just input locations. To account for this, we will denote some vertices in the simulation as input and output, and the arrangement of gadgets will ensure that the agent always enters simulated gadgets at input-denoted locations and exits and output-denoted locations. In particular, paths emerging from output-denoted locations always lead to input-denoted locations.
    \item Zero-player motion planning with the Trainyard gadget does not include fan-ins. However, we can easily simulate fan-in in the above sense by denoting two locations as input and one as output on the Trainyard gadget---the Trainyard gadget is a fan-in provided the agent never enters at the left location.
    \item Even with the above caveats, we have not been able to simulate the toggle switch/toggle line (or any unbounded output-disjoint deterministic 2-state input/output gadget with multiple nontrivial inputs) with the Trainyard gadget.
      Instead, we simulate a toggle switch/toggle line for \emph{exponentially long}.
      Formally, we describe a network of Trainyard gadgets for each natural number $k$ such that the $k$th network has the same behavior as the toggle switch/toggle line for at least $2^k$ transitions, and can be constructed in time polynomial in $k$.
      Consider a system of $n$ toggle switch/toggle lines in which the agent never enters an infinite loop
      (such as the one used to prove PSPACE-hardness).
      The system has at most $2^n$ configurations and $5n$ locations for the agent; thus after at most $5n2^n$ transitions, the agent reaches either the goal location or a dead-end.
      If we pick a polynomial-size $k$ such that $2^k>5n2^n$ (e.g., $k=2n+3$ suffices), then the network of Trainyard gadgets we obtain by replacing each toggle switch/toggle line with the $k$th simulation has the same behavior long enough for the agent to either reach the goal location or crash at a dead-end.
      Hence these exponentially long simulations suffice for PSPACE-hardness.
  \end{itemize}

  Thus it suffices to find an exponentially long simulation of the toggle switch/toggle line. Before describing this simulation, we present an exponentially long simulation of an intermediate gadget called the \emph{reverse branch}, shown in Figure~\ref{fig:rev branch}. This gadget has one state and three locations.
  We assume the top right location is only used as an entrance,
  and the bottom right location is only used as an exit.

  \begin{figure}[htbp]
    \centering
    \includegraphics[scale=.8]{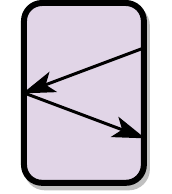}
    \caption{The reverse branch gadget.}
    \label{fig:rev branch}
  \end{figure}

  Figure~\ref{fig:trainyard to rb} shows our exponentially long simulation of a reverse branch. The $k$ gadgets in the bottom row serve as fan-ins, since we assume the agent never enters at the bottom right. Consider the states of the top row of $k+1$ gadgets as describing a number in binary: up (state 1) is $0$, down (state 2) is $1$, and the bits are read right to left. When the agent enters at the left, it increments this number (mod $2^{k+1}$) and exits at the bottom right, unless the states are all up so the number is $0$, in which case it exits the top right. When the agent enters at the top right, it flips the state of every gadget in the top row and exits at the left; this changes the number by $x\mapsto-x-1$. In particular, the distance from $0$ changes by at most $1$ with each transition. By starting at $2^k$ as in Figure~\ref{fig:trainyard to rb}, it takes at least $2^k$ transitions to reach $0$, so the simulation is correct for $2^k$ transitions.

  \begin{figure}
    \centering
    \includegraphics[scale=.8]{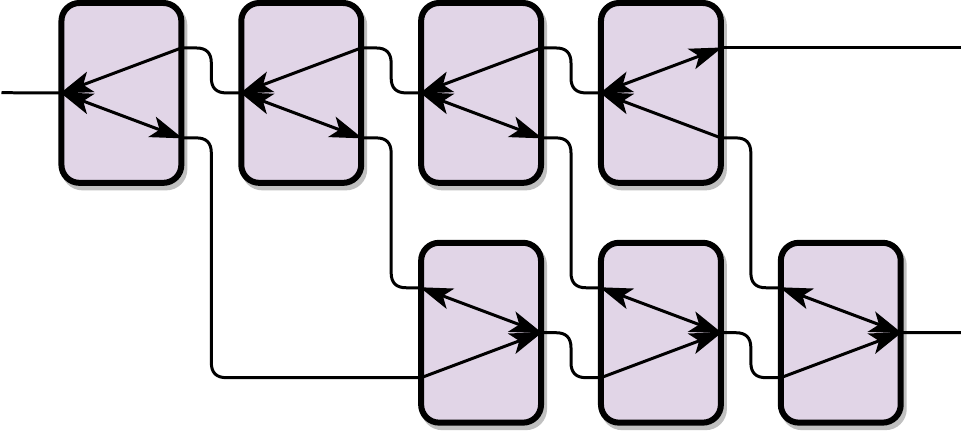}
    \caption{An exponentially long simulation of a reverse branch using Trainyard gadgets.}
    \label{fig:trainyard to rb}
  \end{figure}

  Now we simulate a toggle switch/toggle line using a Trainyard gadget and two reverse branches, as shown in Figure~\ref{fig:trainyard tw/t}. When the agent enters at $a$, it exits at $b$, flipping the state of the Trainyard gadget (in the middle); this is the toggle line. When the agent enters at $c$, it exits at $d$ or $e$ depending on the state of the Trainyard gadget, and flips the state; this is the toggle switch. Each transition through the simulated gadget makes at most one transition through each reverse branch, so if the reverse branches are correct for $2^k$ transitions, so is the toggle switch/toggle line.
\end{proof}

  \begin{figure}
    \centering
    \includegraphics[scale=.8]{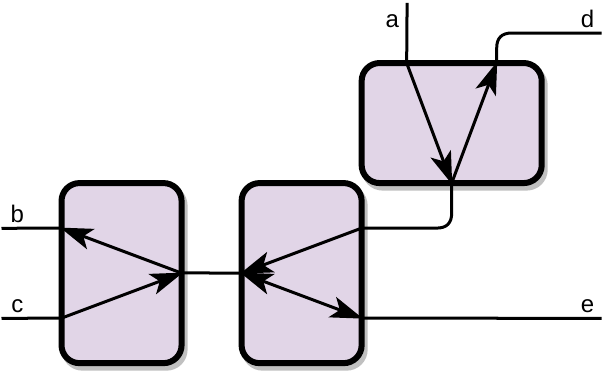}
    \caption{A simulation of a toggle switch/toggle line using a Trainyard gadget and reverse branches.}
    \label{fig:trainyard tw/t}
  \end{figure}

\subsection{[the Sequence]}

[the Sequence] is a puzzle game in which the player attempts to place \emph{modules} to move \emph{binary units} from a \emph{source} to a \emph{target}. There are seven different modules which have different effects; for instance, the \emph{pusher} moves anything immediately in front of it one square away.%
\footnote{Module names are not given in the game, so we use our own names.}
In this section, we prove that determining the correctness of a solution to a [the Sequence] puzzle is PSPACE-complete.

First we describe the mechanics of [the Sequence] that are necessary for our proof. The game takes place on a bounded square grid, containing the source and target, some fixed blocks, and some modules (which the player has placed, and which have an orientation). A deterministic simulation occurs in a series of rounds. Each round begins with the source creating a binary unit if it does not already have one. Then each module acts in a specified order; their actions are detailed below. Only modules and binary units can be moved. A binary unit disappears when it reaches the target. If objects (binary units, modules, or walls) ever collide, the simulation stops. The solution is correct if it moves an arbitrary number of binary units from the source to the target.%
\footnote{The game checks that four binary units are successfully moved, but an unlimited number is more natural.}

The modules used in our proof are the following, shown in Figure~\ref{fig:seq modules}:
\begin{itemize}
  \item The \emph{mover} moves one square forward, bringing any module or binary unit immediately to its left with it.%
  \footnote{These modules can also be reflected, but we do not use that.}
  \item The \emph{turner} rotates any module immediately in front of it $90^\circ$ counterclockwise.
  \item The \emph{puller} moves any module or binary unit two squares in front of it to only one square in front of it.
\end{itemize}

\begin{figure}
  \centering
  \begin{subfigure}{.24\linewidth}
    \centering
    \includegraphics[width=.3\linewidth]{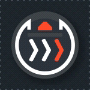}
    \caption{The mover module.}
  \end{subfigure}
  \begin{subfigure}{.24\linewidth}
    \centering
    \includegraphics[width=.3\linewidth]{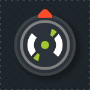}
    \caption{The turner module.}
  \end{subfigure}
  \begin{subfigure}{.24\linewidth}
    \centering
    \includegraphics[width=.3\linewidth]{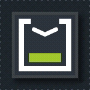}
    \caption{The puller module.}
  \end{subfigure}
  \begin{subfigure}{.24\linewidth}
    \centering
    \includegraphics[width=.3\linewidth]{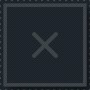}
    \caption{A fixed block.}
  \end{subfigure}

  \caption{[the Sequence] modules used in our proof, and the fixed block.}
  \label{fig:seq modules}
\end{figure}

\begin{theorem}
  Determining correctness of a solution to a [the Sequence] puzzle is PSPACE-complete.
\end{theorem}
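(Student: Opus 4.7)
The plan is to prove the two directions---containment in PSPACE and PSPACE-hardness---separately.

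Containment is essentially routine simulation. A full [the Sequence] configuration (positions and orientations of all modules, positions of binary units, state of the source) on a polynomial-sized grid is encoded in polynomial space, and the rules are deterministic. Let $N$ be the (exponential) number of configurations. Since the dynamics are deterministic, the system must enter a cycle within $N$ rounds. In polynomial space we maintain the current configuration, a $\log N$-bit counter, and a one-bit flag recording whether a binary unit reached the target during the final $N$ simulated rounds. Simulating $2N$ rounds covers at least one full period of the eventual cycle, so we accept if no collision ever occurred and the flag is set (and thus arbitrarily many binary units reach the target), and reject otherwise.

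For PSPACE-hardness, I would reduce from zero-player motion planning with the switch/set-up line/set-down line, shown PSPACE-hard in Theorem~\ref{thm:SUDPSPACE}. The binary unit plays the role of the robot: the source is positioned to launch units into the start location of the construction, the target sits at the robot's goal location, and robot dead ends are realized as deliberate collisions that halt the simulation. First I would build ``wires'' out of chains of modules (for example, pullers paired with fixed blocks) to thread a single binary unit through the board along any chosen route, and then assemble the switch/set-up/set-down gadget itself. State is naturally stored as a module's orientation---the turner rotates a neighbor by $90^\circ$ counterclockwise---and orientation can be read by routing the unit's path conditionally on whether a downstream mover or puller encounters a module facing one way or another.

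The main technical obstacle is designing these gadget simulations under the restrictive mechanics of [the Sequence]. Because every module and binary unit participates in collisions, and because turners rotate by fixed $90^\circ$ steps only, it seems unlikely that an exact, faithfully resetting simulation of an unbounded gadget can be built from finitely many modules. Following the exponentially-long-simulation strategy used in the Trainyard reduction, I would instead build a family of simulations parametrized by $k$, each containing a number of [the Sequence] modules polynomial in $k$ and correctly simulating the switch/set-up line/set-down line for at least $2^k$ transitions, using a row of orientation-encoded bits as a binary counter that absorbs the inevitable drift between successive uses. Since any $n$-gadget instance of the source problem resolves within exponentially many transitions, choosing $k$ polynomial with $2^k$ exceeding this bound makes the exponentially long simulation sufficient to decide the instance, completing the polynomial-time reduction and yielding PSPACE-hardness.
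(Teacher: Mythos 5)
Your containment argument is fine and matches the paper's in substance (the paper detects the eventual cycle with a tortoise-and-hare algorithm rather than an explicit counter, but both are routine PSPACE simulations). The hardness direction, however, has a genuine gap, and in fact two of your key design decisions point in the wrong direction.

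First, you cast the \emph{binary unit} as the robot. But the only steering primitive in the game, the turner, rotates \emph{modules}, not binary units; binary units have no orientation and no self-propulsion, so there is no mechanism in your sketch for routing one along an arbitrary path. The paper's construction instead represents the robot by a single \emph{mover} module: a mover advances itself one square each round, turners redirect it at corners, and pullers merge adjacent paths for fan-in. Binary units appear only at the very end, where the mover, upon reaching the goal, enters a cycle that ferries them from source to target. Without this (or some substitute propulsion-and-steering mechanism that you would have to invent), the wires of your reduction do not exist. Second, your pessimism that ``an exact, faithfully resetting simulation of an unbounded gadget'' cannot be built is unfounded, and the exponentially-long-simulation detour you import from the Trainyard argument is unnecessary: the paper builds an exact, indefinitely reusable switch/set-up line/set-down line from three pullers, encoding the state in the \emph{position} of the central puller (the set lines have the mover push it to one side; the switch reads the state by whether the mover gets pulled onto a different row). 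Your alternative---state in module orientations read ``conditionally,'' plus a binary counter absorbing drift---is not specified at the level of actual module layouts, and since designing those layouts under the game's collision and ordering rules is precisely the technical content of the theorem, the proposal as written does not constitute a proof.
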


We prove hardness using a reduction from zero-player motion planning with the switch/set-up line/set-down line. Our proof is robust to the definition of correctness, in the following sense: if the agent reaches the goal location, the solution moves arbitrarily many binary units to the target. If the agent does not reach the goal location, the solution runs forever without moving any binary units. A simple modification to the reduction makes the simulation crash if the agent does not reach the goal (though this requires the property of the proof of PSPACE-hardness of zero-player motion planning that the agent reaches a specific location exactly when it does not reach the goal).

\begin{proof}
  The game is a deterministic simulation with a polynomial amount of state (each square has at most one module or binary unit, which takes a constant amount of memory), so the simulation can be carried out in polynomial space. Determining whether arbitrarily many binary units will be delivered to the target is harder, but can still be done in PSPACE by detecting a cycle in the configuration, perhaps using a tortoise-and-hare algorithm.

  For PSPACE-hardness, we give a reduction from zero-player motion planning with the switch/\allowbreak set-up line/set-down line. The agent is represented by a single mover. Turners rotate the mover to control its path. Fan-in is accomplished using a puller to merge to adjacent parallel paths, as shown in Figure~\ref{fig:seq fanin}. Paths can easily cross each other since they only require modules at corners and fan-ins.

  \begin{figure}
    \centering
    \includegraphics[width=.5\linewidth]{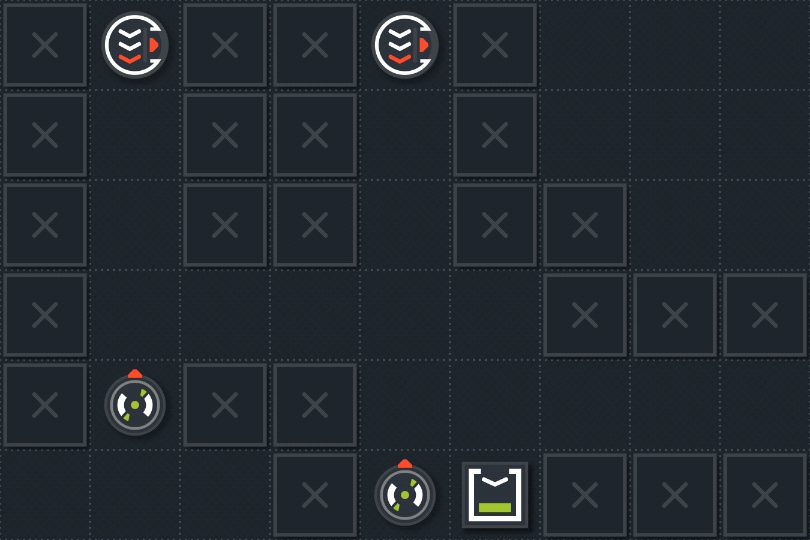}
    \caption{Bending paths and a fan-in for [the Sequence]. The movers represent where the mover might enter; they do not simultaneously exist. If the mover enters in either location, it is turned and possibly pulled, and then exits the right. The fixed blocks are only to help visualize the paths taken.}
    \label{fig:seq fanin}
  \end{figure}

  The switch/set-up line/set-down line, shown in Figure~\ref{fig:seq w/su/sd}, is built using three pullers. When the agent enters the set-up or set-down line, the mover moves the central puller to a particular side. When the agent enters the switch, the mover is pulled if the puller is on the appropriate side; the path it exits depends on the state of the gadget.

  \begin{figure}
    \centering
    \begin{subfigure}{.45\linewidth}
      \centering
      \includegraphics[width=.85\linewidth]{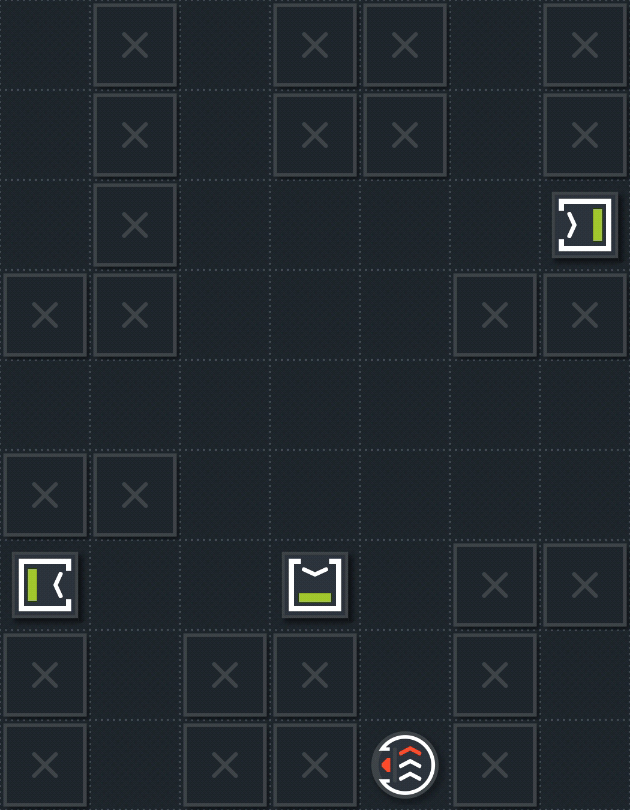}
      \caption{The gadget in the down state, with a mover entering to set it to the up state.}
    \end{subfigure}
    \hfill
    \begin{subfigure}{.45\linewidth}
      \centering
      \includegraphics[width=.85\linewidth]{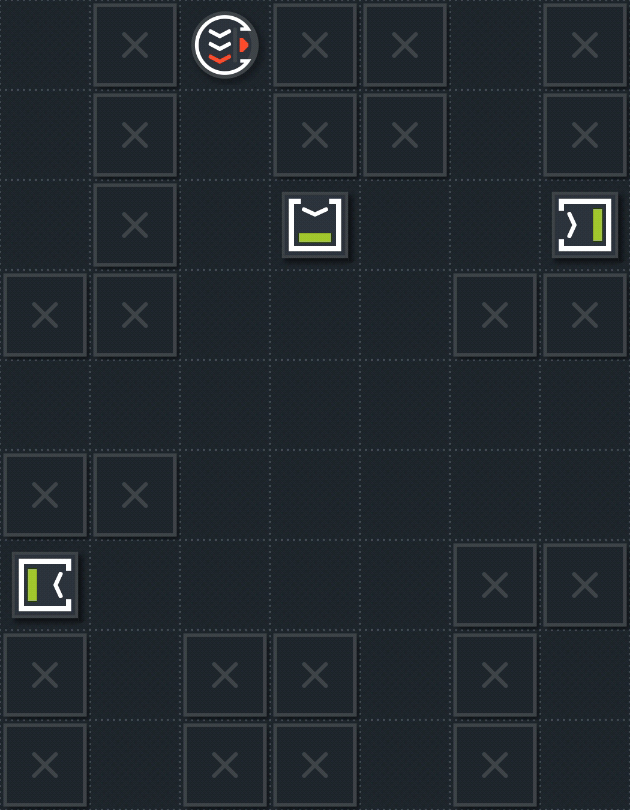}
      \caption{The gadget in the up state, with a mover entering to set it to the down state.}
    \end{subfigure}
    \begin{subfigure}{.45\linewidth}
      \centering
      \includegraphics[width=.85\linewidth]{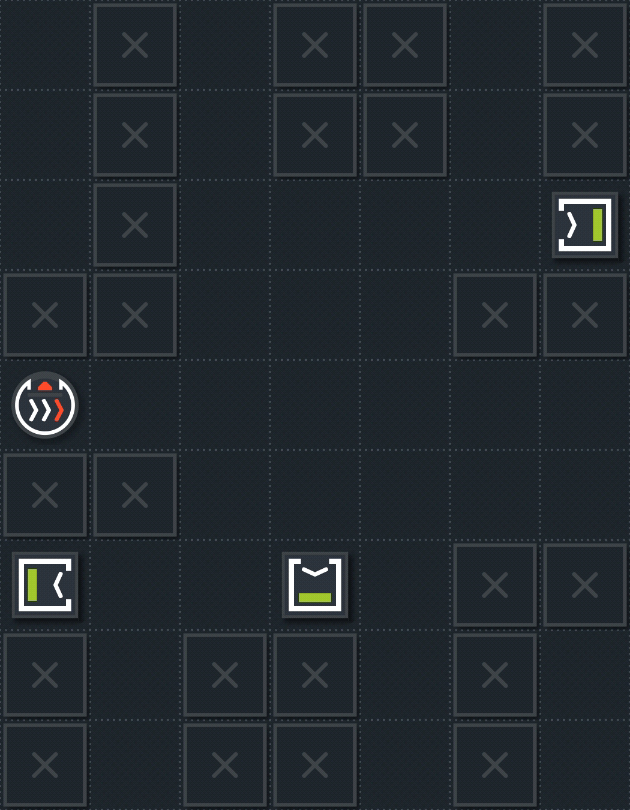}
      \caption{The gadget in the down state, with a mover entering the switch.}
    \end{subfigure}
    \begin{subfigure}{.45\linewidth}
      \centering
      \includegraphics[width=.85\linewidth]{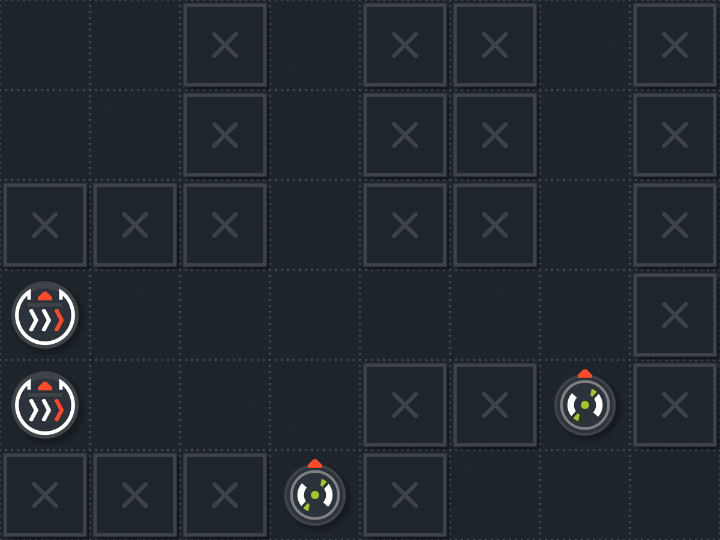}
      \caption{Separating the adjacent paths, analogous to Figure~\ref{fig:seq fanin}.}
    \end{subfigure}
    \caption{A switch/set-up line/set-down line for [the Sequence]. The puller in the middle encodes the state. When the mover enters the bottom right (a) or top left (b), it moves the middle puller to the top or bottom and then gets pulled away, setting the state. When it enters the left, which row it exits on the right depends on whether the middle puller was at the bottom to pull the mover down. Finally, we separate these two paths with appropriate turners (d).}
    \label{fig:seq w/su/sd}
  \end{figure}

  If the agent reaches the goal location, the mover reaches a cycle which has it deliver binary units to the target, shown in Figure~\ref{fig:seq win}. Otherwise it gets stuck in the maze of modules forever, and never moves any binary units.
\end{proof}

  \begin{figure}
    \centering
    \includegraphics[width=.5\linewidth]{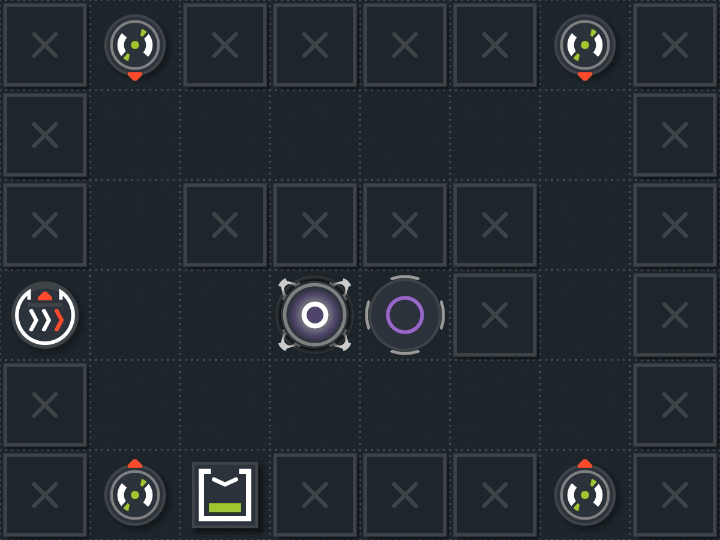}
    \caption{A cycle for the mover to deliver binary units. The two objects in the middle are the source and target. If the mover enters at the left as shown, it is pulled into the loop similarly to the fan-in. The turners then keep it moving in a rectangle, and it moves a binary unit from the source to the target on each loop.}
    \label{fig:seq win}
  \end{figure}

\subsection{Factorio Trains}
\label{sec:FactorioTrains}
Our first application for the factory-building video game Factorio is showing that trains in Factorio are PSPACE-complete. The decision problem we consider is whether a particular \emph{target train} ever reaches its target station, in a world with only a train system and no player interaction. Other work on the computational power of Factorio Train systems includes the simulation of cellular automata Rule 110 on a bounded tape \cite{110trains}. The logical infrastructure used to implement Rule 110 is significantly more sophisticated and is likely sufficient to show PSPACE-completeness given proper analysis. We provide our own construction and prove PSPACE-completeness by reducing from zero-player motion planning with the switch/set-up line/set-down line.

Next we describe the train components of Factorio,
illustrated in Figure~\ref{fig:train components}.
Trains in Factorio are constrained to \emph{rails}, which can bend, fork, and cross each other. \emph{Stations} are locations which trains will try to reach. Each train is provided with a \emph{schedule}, which is a list of stations; the train will move to each station in the list in cyclic order. If there are multiple stations with the same name, the train will find the cheapest path to any of them, where the cost of a path depends on its length and also on properties including the number of other trains blocking the path and the amount of time the train has been waiting so far. Trains are prevented from crashing into each other using \emph{rail signals} and \emph{chain signals}. These partition the rail system into \emph{blocks}, and (roughly) trains will not enter a block that is already occupied by another train.

\begin{figure}
  \centering
  \includegraphics[width=.8\linewidth]{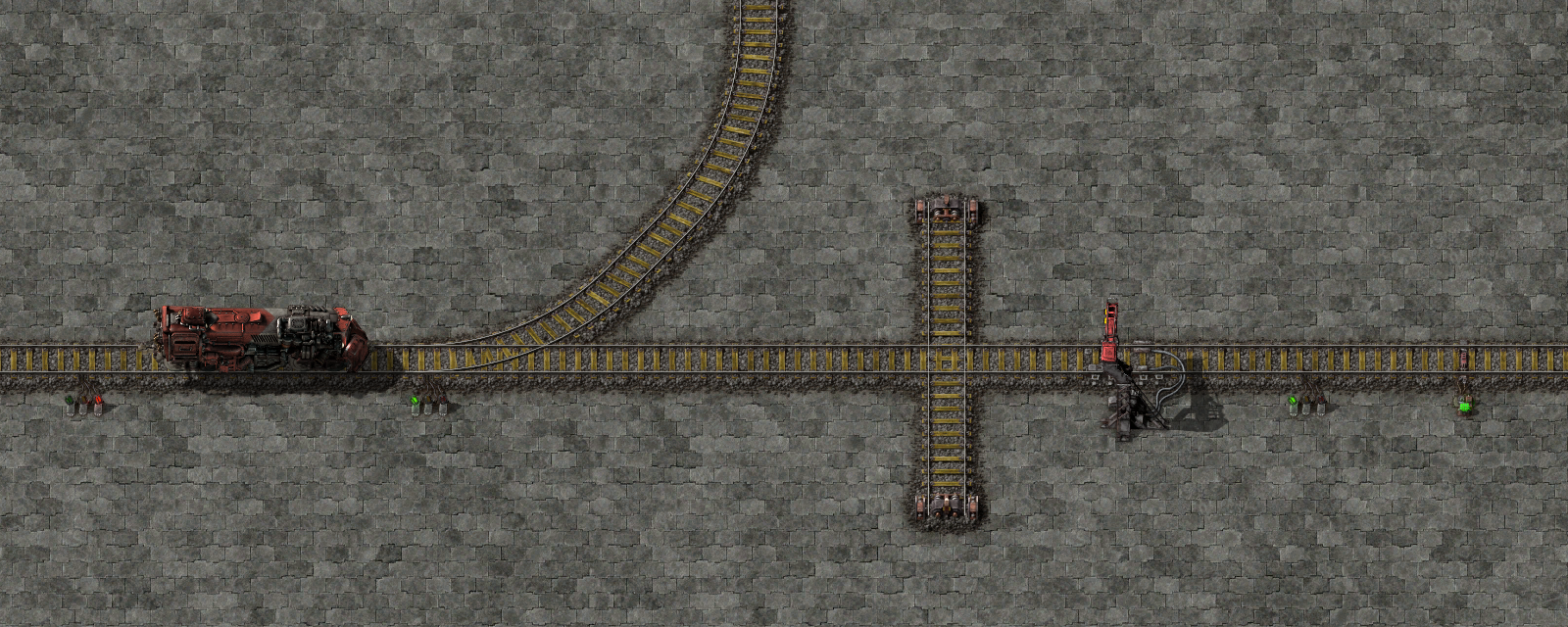}
  \caption{A demonstration of the train-related objects in Factorio. From left to right, we have a rail signal, locomotive, rail signal, rail fork, rail crossing, train station, rail signal, and chain signal. The leftmost rail signal is red, indicating the presence of a train in the block in front of it.}
  \label{fig:train components}
\end{figure}

Here are some caveats applying to our hardness proof:
\begin{itemize}
  \item We assume that the only objects in the world are rails, locomotives (we make no use of cargo wagons), train stations, rail signals, and chain signals. In particular, there are no players, construction robots, circuit networks, or biters.
  \item We ignore fuel requirements of trains, assuming they have unlimited fuel. Without this assumption, and without allowing some mechanism to provide fuel, the problem would be in NP (since each train would move a bounded distance before running out of fuel).
  \item We do not use the complexity in train wait conditions: the only wait condition used is \texttt{0~seconds}. In fact, every train's schedule consists of just two stations \texttt{A} and \texttt{B}, which the train will alternate between.
  \item We do not know all of the details of the behavior of trains, but under the plausible assumptions that a single game tick is simulated in polynomial time and the amount of memory associated with each train (and other train-related component) is polynomially bounded, the decision problem is clearly in PSPACE.
\end{itemize}

\begin{theorem}
  In a Factorio world with only rails, locomotives, train stations, rail signals, and chain signals, and where each train's schedule alternates between the same two stations \texttt{A} and \texttt{B} with the trivial wait condition, it is PSPACE-hard to determine whether a specified target train ever reaches its next station.
\end{theorem}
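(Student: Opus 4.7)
The plan is to reduce from zero-player motion planning with the switch/set-up line/set-down line, which is PSPACE-hard by Theorem~\ref{thm:SUDPSPACE}. In the constructed world, a single designated \emph{target train} plays the role of the robot: its schedule is a pair of uniquely-named stations \texttt{A} and \texttt{B} with trivial wait conditions, with \texttt{B} placed at the goal location and the train starting parked at \texttt{A}. Since \texttt{A} and \texttt{B} are the only stations with those names, and we will arrange the rail layout so it is always navigable, the target train reaches its next station (namely \texttt{B}) if and only if the robot reaches the goal in the source instance. Connection-graph edges become rails; since rails may cross and fork freely, planarity is a non-issue, and fan-in of two rails into one is a straightforward signalled junction.

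Each copy of the switch/set-up line/set-down line is simulated by a small sub-network carrying its own \emph{state train}, a locomotive confined to a tight loop containing two parking stations that share a gadget-unique name \texttt{S}. The two \texttt{S} stations sit in different blocks, and I would arrange rail and chain signals around the loop so that, whenever the target train is outside the sub-network, exactly one of the two \texttt{S} stations is reachable from the state train; Factorio's pathfinding then parks the state train at the reachable one, and which station this is encodes the gadget's up/down state. The switch entrance is a fork whose two branches thread through the two parking blocks; a chain signal in front of the fork forces the target train to commit to the branch whose block is currently unoccupied, routing it to the exit matching the current state without disturbing the state train. The set-up entrance runs the target train through a short block sequence whose temporary occupancy reverses which \texttt{S} station is reachable for the state train, so that by the time the target train exits, the state train has migrated to (or already sits at) the upper station; the set-down entrance is the mirror image.

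The main obstacle is making the simulation deterministic and self-stabilising under Factorio's rather opaque train dynamics: pathfinding uses cost-based tie-breaking that takes waiting time into account, trains can re-path when signals change, and any collision between two trains halts the world entirely. I would address this by (i)~padding each state-train loop with enough rail so that its round trip completes strictly between consecutive target-train visits to the same gadget, (ii)~placing chain signals on every block that could be shared between the target train and a state train so that neither commits to a path it cannot safely clear, and (iii)~interposing a short one-way ``flush'' block that prevents the state train from leaving its current parking station until the target train has completely left the gadget. Once these precautions are in place, each sub-network faithfully reproduces the transition table of the switch/set-up line/set-down line, the target train reaches \texttt{B} precisely when the robot reaches its goal, and the construction has polynomial size, establishing PSPACE-hardness.
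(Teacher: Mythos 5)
Your reduction source (zero-player motion planning with the switch/set-up line/set-down line) matches the paper's, but your central representational choice --- the robot is a single locomotive that physically travels through the gadget network toward a station \texttt{B} at the goal --- is where the argument breaks down. A Factorio train with destination \texttt{B} computes a cheapest path to \emph{some} station named \texttt{B} and follows it (re-pathing as costs change); it does not execute a prescribed walk. So your target train would simply route toward the goal station along whatever path the pathfinder prefers, or stall at a chain signal, rather than perform the exponentially long, state-dependent traversal of quantifier and clause gadgets that the reduction requires. Your switch gadget inherits this problem: you need the train to take a \emph{specific} branch determined by the gadget's state, but block occupancy plus chain signals only lets you forbid a branch temporarily, after which the train waits or re-paths globally; there is no mechanism forcing it onto the state-matching exit and then onward through the rest of the circuit. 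Your state encoding also does not work as described: with the trivial \texttt{0 seconds} wait condition a ``state train'' never parks, and both stations on a loop are always reachable, so ``which \texttt{S} station is reachable'' is not a stable bit. Finally, the gadget-unique station name \texttt{S} violates the theorem's hypothesis that every train's schedule alternates between the same two stations \texttt{A} and \texttt{B}.

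The paper avoids all of these issues by inverting the representation: the rail network is saturated with trains, and the moving robot is a \emph{gap} --- the unique empty block --- which propagates backwards through lines of trains as each train advances one block into the empty block ahead of it. Every routing decision is then purely local (a train at a fork fills whichever branch just became empty, which is exactly fan-in), the gadget state is held by a train trapped in a loop that can only advance when the passing gap briefly unblocks it, and the target train itself never travels anywhere: it sits blocked behind the train at the goal location and reaches its station precisely when that train moves, i.e., when the gap arrives. If you want to salvage your approach you would need to prove that a single free-running train can be forced along an arbitrary prescribed route using only signals and two station names, which is precisely the control problem the gap construction is designed to sidestep.
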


\begin{proof}
  We show PSPACE-hardness through a reduction from zero-player motion planning with the switch/set-up line/set-down line. The rail network is mostly full of trains, one in each block, and the motion-planning agent is represented by a \emph{gap} (not by a train), a block which does not contain a train and thus which a train can move into. A simple `wire' is constructed by a line of blocks all occupied by trains, as shown in Figure~\ref{fig:train wire}. When the gap reaches the front of the line, each train in turn is able to move forward one block, moving the gap to the end of the line. The movement of the gap is in the opposite direction of the movement of the trains.

  \begin{figure}
    \centering
    \includegraphics[width=.8\linewidth]{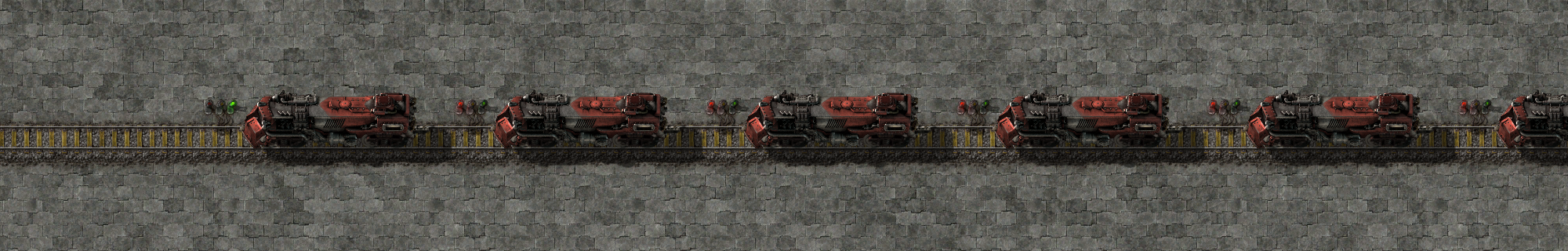}
    \caption{A wire of Factorio trains. The leftmost rail signal is green, so the leftmost train is free to move left. Then the train behind it can also move left, and so on, moving the gap right.}
    \label{fig:train wire}
  \end{figure}

  Fan-ins are achieved using a fork in a track, as shown in Figure~\ref{fig:train fanin}. When the gap arrives at either branch of the fork, the train just entering the fork will move forward to fill the gap, since trains prefer paths with fewer other trains in the way. To ensure the path-finding works as expected, we place stations before and after the fork which make the paths the train at the fork needs to find short.

  \begin{figure}
    \centering
    \includegraphics[width=.4\linewidth]{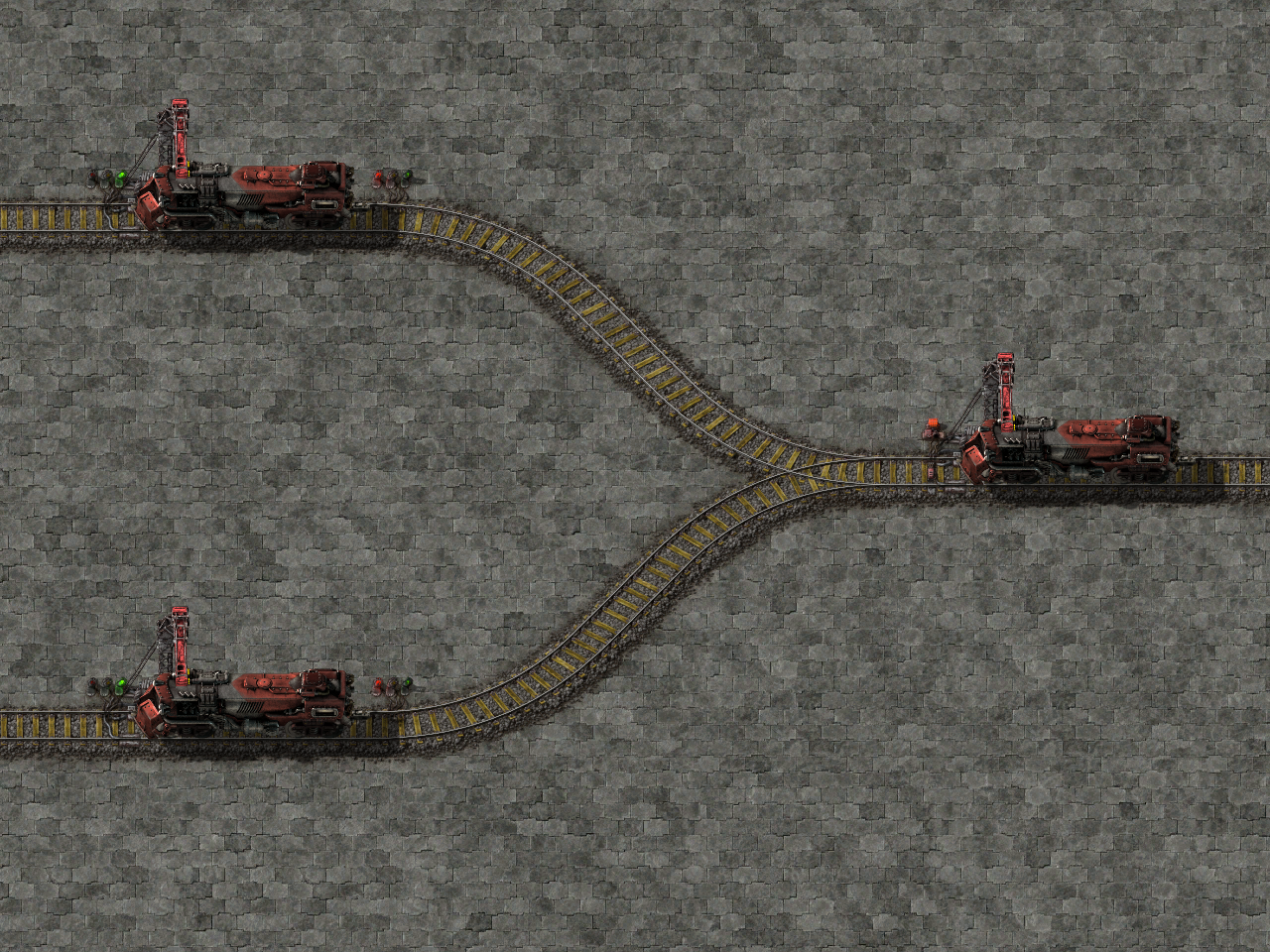}
    \caption{A fan-in for Factorio trains. Both stations on the left are named \texttt{B} and the station on the right is named \texttt{A}. When either train on the left leaves, the train on the right will fill its spot: it takes the cheapest path to \texttt{B}, and paths blocked by trains are considered more expensive. The chain signal immediately before the fork prevents the train from choosing a branch before one of them is empty.}
    \label{fig:train fanin}
  \end{figure}

  Since the network of gadgets in zero-player motion planning may be nonplanar, we need a crossover. This is easy to build using two crossing rail lines with the appropriate configuration of rail and chain signals, shown in Figure~\ref{fig:train crossover}.

  \begin{figure}
    \centering
    \includegraphics[width=.3\linewidth]{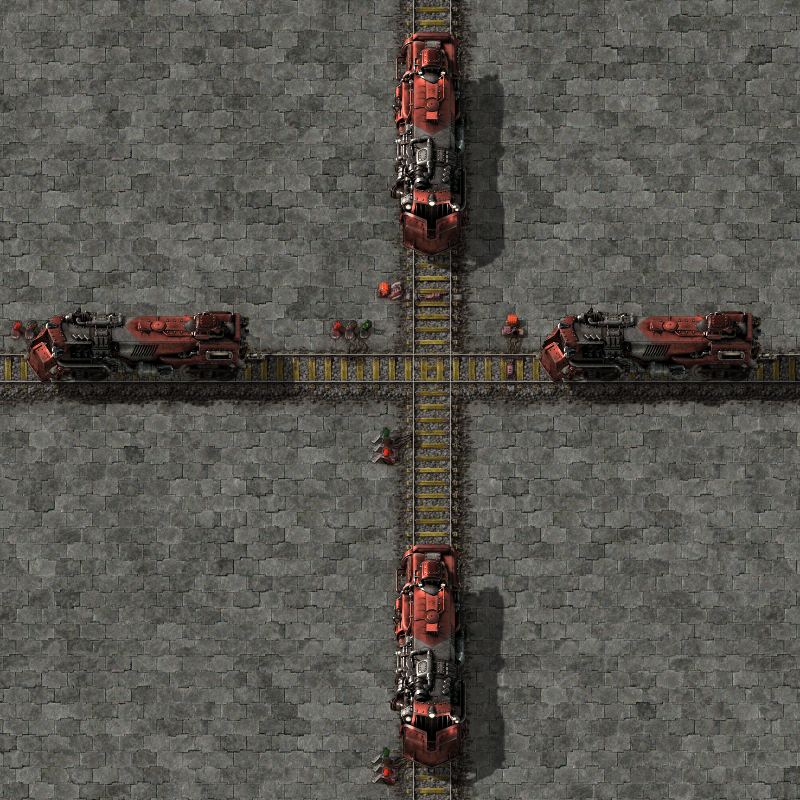}
    \caption{A crossover for Factorio trains. The two crossing wires can move independently. Chain signals prevent trains from blocking the intersection until the train in front moves out of the way.}
    \label{fig:train crossover}
  \end{figure}

  The initial location of the agent is represented by an empty block---in fact, the only empty (nonchain signaled)%
  \footnote{There is also an empty block at each intersection. However, these blocks have chain signals at their entrances, so trains will never stop in them, and the fact that they are empty does not allow any trains to move.}
  block in the network. We place the target train on a short rail line blocked by the train in the block representing the goal location, as shown in Figure~\ref{fig:train win}: the target train will move forward and reach its target station if and only if the train blocking it moves, which happens exactly when the agent reaches the goal location.

  \begin{figure}
    \centering
    \includegraphics[width=.4\linewidth]{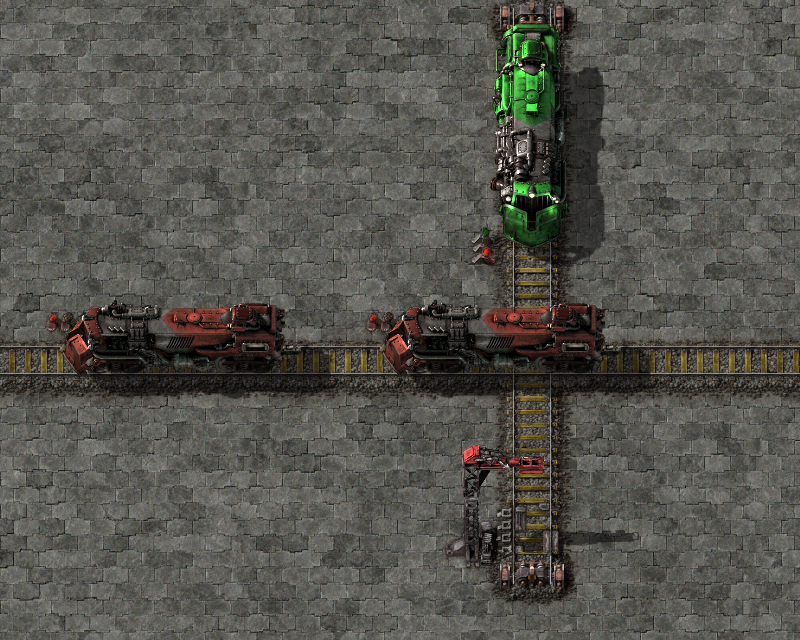}
    \caption{The win gadget for Factorio trains. Once the gap arrives on the left, the green train will be able to move forward and reach the station.}
    \label{fig:train win}
  \end{figure}

  Finally, we need to build the switch/set-up line/set-down line using trains. This gadget is shown in Figure~\ref{fig:train w/su/sd}. We have a train trapped in a loop in the gadget, which encodes the state. When the gap traverses the set-up or set-down line, the trapped train is temporarily not blocked and moves forward. When the gap enters the switch, the output it takes depends on what the trapped train is currently blocking.
\end{proof}

  \begin{figure}
    \centering
    \begin{subfigure}{.45\linewidth}
      \includegraphics[width=\linewidth]{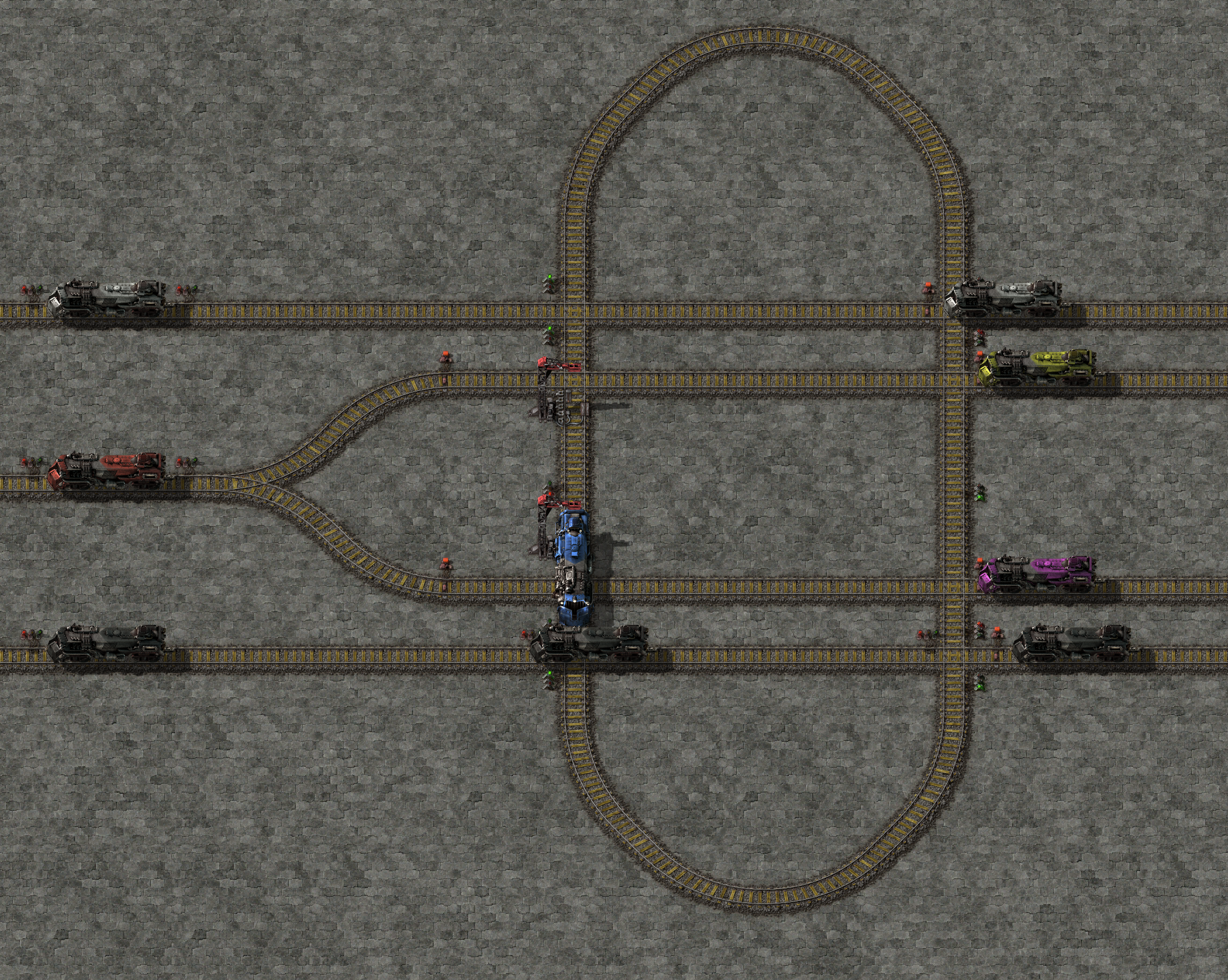}
      \caption{The gadget in the up state.}
    \end{subfigure}
    \begin{subfigure}{.45\linewidth}
      \includegraphics[width=\linewidth]{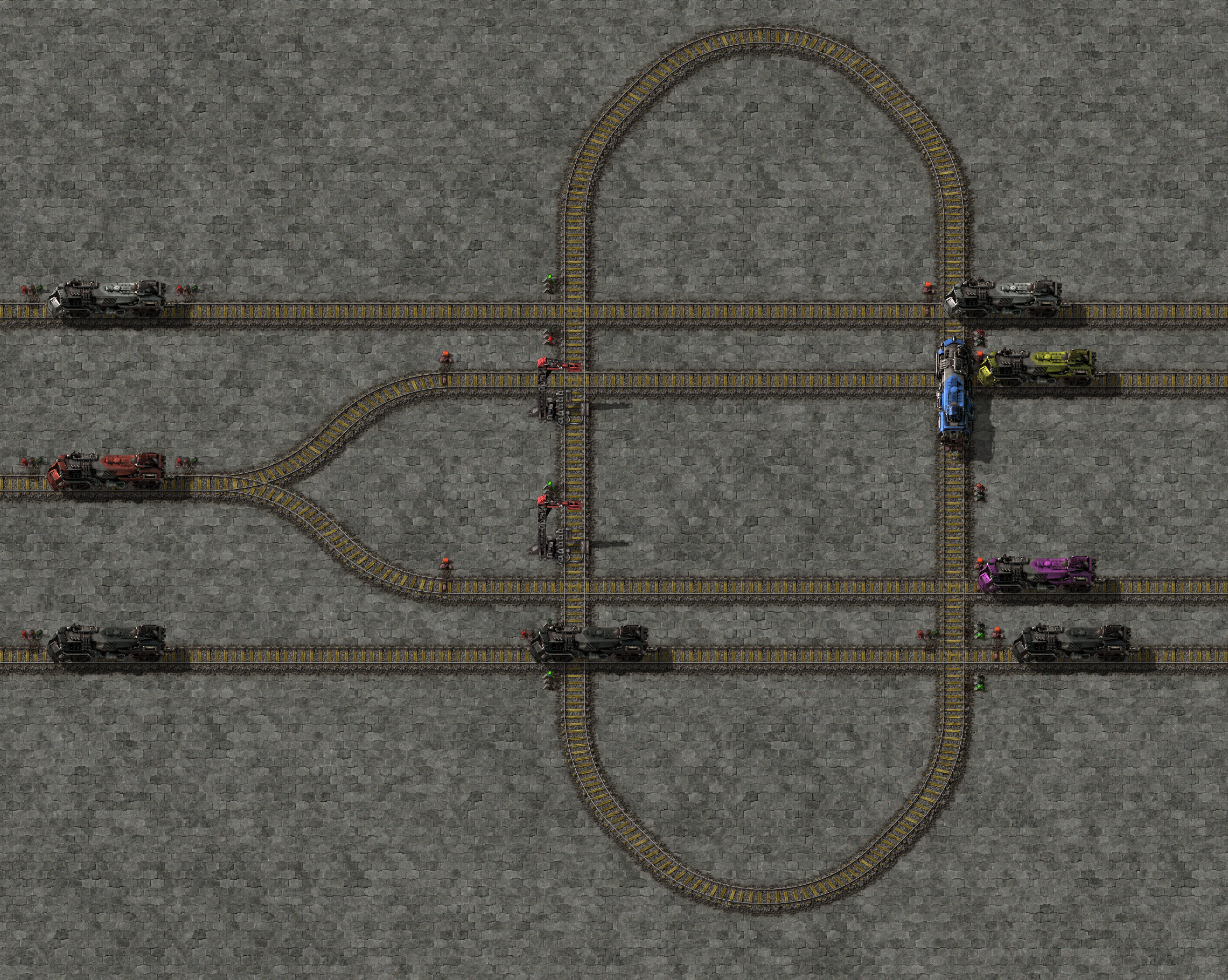}
      \caption{The gadget in the down state.}
    \end{subfigure}
    \caption{A switch/set-up line/set-down line made of Factorio trains. The blue train is trapped in the loop and encodes the state of the gadget; it alternates between the two train stations in the loop. From the down state, if the gap enters the top track of white trains, the blue train is briefly not blocked and moves to where it is blocked by a train in the bottom track. Similarly when the bottom track advances the blue train moves to be blocked by a train in the top track. When the gap enters the middle track and the red train leaves, whichever of the yellow or purple trains is not blocked by the blue train moves forward.}
    \label{fig:train w/su/sd}
  \end{figure}

\subsection{Factorio Transport Belts}

In this section, we show that determining whether an item ever reaches a goal location in a Factorio world with only transport belts, underground belts, and splitters is PSPACE-complete. This result holds even with a small bounded number of mobile items: in Factorio 0.15 and earlier, we use a single mobile item (and a polynomial number of immobile items for a technical reason), and in Factorio 0.16 and later, we use two mobile items (and no immobile items).

\emph{Transport belts} move any items on them in the direction the belt is facing. Items move smoothly between transport belts, but will stop if there is not another transport belt (or similar object) in front---transport belts  will not dump items onto the ground. \emph{Underground belts} can be used to have belts cross; two matching underground belts with at most four tiles between them will transfer items ``underground''.%
\footnote{The distance is longer for fast and express underground belts, but we do not need them.}
\emph{Splitters} can have up to two transport belts feeding in and two transport belts feeding out. Splitters alternate which output they send items to, regardless of the input they came from, except that if one output is blocked by items, all items will go to the other output. The details of the alternation changed slightly in Factorio version 0.16; we will explain and investigate the complexity of both versions of the mechanic.

Transport belts have two \emph{lanes}, one on each side of the belt, which move items independently, are preserved going around corners and through splitters. A transport belt facing into the side of another transport belt will deliver items to the nearer lane; this is called \emph{sideloading}. When sideloading onto an underground belt, only one lane of the incoming belt is able to move; the other lane is blocked.

\begin{figure}
  \centering
  \includegraphics[width=.5\linewidth]{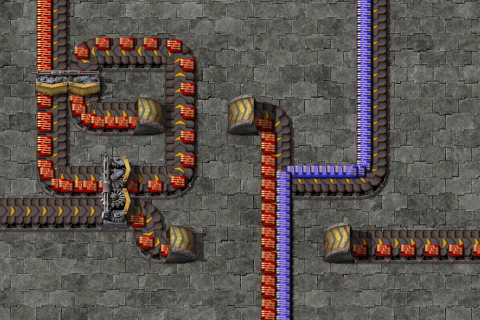}
  \caption{An example transport belt layout, demonstrating transport belts, underground belts, splitters, lanes, and sideloading.}
  \label{fig:belt example}
\end{figure}

As with trains, containment in PSPACE is trivial assuming each game tick is simulated in polynomial time.

\begin{theorem}
  In a Factorio world with only transport belts, underground belts, splitters and items, it is PSPACE-hard to determine whether an item ever reaches a goal location. In 0.15 and earlier, this remains PSPACE-hard when only one item can move and a polynomial number of items are stuck. In 0.16 and later, this remains PSPACE-hard when there are only two items.
\end{theorem}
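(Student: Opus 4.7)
The plan is to reduce from zero-player motion planning with the switch/set-up line/set-down line, which is PSPACE-hard by Theorem~\ref{thm:SUDPSPACE}. A single mobile item plays the role of the robot, and the state of each simulated gadget is stored in the internal alternation bit of a splitter (augmented, in 0.15 and earlier, by the positions of a polynomial number of immobile items parked on dead-end belt tiles). Routine ingredients come first: a \emph{wire} is a straight or bent transport belt, crossovers are obtained from matched underground belts, and fan-in can be built by sideloading one belt onto the side of another so that incoming items from either direction are delivered onto a shared downstream lane.

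The main construction is the switch/set-up line/set-down line gadget. For the switch input, I would route the item into a splitter whose alternation bit encodes the state; the two outputs of the splitter feed the two switch exits, and the gadget geometry is arranged so that the item leaves the splitter along the lane dictated by the current alternation bit. Because splitters naturally toggle, the key trick for the set-up and set-down lines is to exploit sideloading onto an underground belt, which blocks one lane while letting the other through: by feeding the item into an auxiliary splitter through a sideload that permanently forbids one lane, the auxiliary splitter's alternation is effectively nailed down, and forwarding its output to the state-holding splitter forces the latter into the desired state rather than merely flipping it. In 0.15 and earlier this ``lane pinning'' is realized by placing a polynomial number of immobile items on belt tiles to keep certain lanes saturated; in 0.16 and later the symmetrized splitter semantics allow the same effect using a single auxiliary mobile item circulating in a short loop, yielding the two-item construction.

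The central obstacle is verifying that the set operation is genuinely idempotent rather than toggling: a single traversal by the mobile item must leave the state-splitter's alternation bit in the target state regardless of its prior value, and must also exit along the correct output path so that the global reduction remains single-input/single-output. I expect the verification to proceed by case analysis on the prior splitter state, the lane the item is on when it enters, and (in the 0.15 construction) which auxiliary belt tiles are occupied by stuck items; the 0.16+ case requires a small timing argument showing that the auxiliary circulating item returns to its canonical position between successive traversals. Once the gadget is shown correct, replacing each switch/set-up line/set-down line in a PSPACE-hard instance with its belt simulation, routing the wires through underground-belt crossovers, placing the mobile item at the start location, and identifying the goal location with a designated belt tile yields a polynomial-time reduction and completes the proof.
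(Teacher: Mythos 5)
Your reduction starts from the wrong gadget, and this choice creates a gap you flag yourself but do not close. The paper reduces from zero-player motion planning with the \emph{toggle switch/toggle switch}, precisely because a splitter's native behavior is alternation: in 0.15 and earlier a splitter is already a toggle switch/toggle switch (one toggle switch per lane, sharing a state), so the whole gadget construction amounts to lane management via sideloading, with a constant number of immobile items per gadget. You instead reduce from the switch/set-up line/set-down line, which forces you to implement an \emph{idempotent} set operation on a splitter whose only state-changing primitive is a toggle. Your proposed mechanism---``lane pinning'' by sideloading onto an underground belt so that an auxiliary splitter's ``alternation is effectively nailed down''---is not a real mechanic that achieves this: sideloading onto an underground belt blocks a lane of the \emph{incoming} belt, it does not freeze or overwrite a splitter's alternation bit, and no case analysis on prior states will make a single pass through a toggling device land in a fixed state independent of history. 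You correctly identify this as ``the central obstacle'' but then only promise a verification rather than giving one. The fix is to abandon the set lines entirely and target the toggle switch/toggle switch, which the splitter gives you for free.

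Your 0.16+ plan has a second, independent gap. The relevant change in 0.16 is that splitters alternate each lane \emph{separately}, so the single-splitter gadget degenerates into two independent toggle switches and no longer couples the two inputs through a shared state; you never address this, which is the actual reason a new construction is needed. The paper's solution uses the facts that all item \emph{types} alternate together and that 0.16 splitters can sort by item type: two mobile items of different types (a red and a blue circuit) travel together, one shadowing the other so that two splitters stay synchronized, with ``grouper'' subcircuits bounding their separation. Your alternative---a single auxiliary item circulating in a loop whose phase must be ``canonical'' whenever the main item arrives---requires global timing synchronization across an instance whose traversal lengths vary per gadget, and nothing in the proposal establishes that such synchronization is achievable. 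As written, neither the state-setting gadget nor the two-item 0.16 construction would go through.
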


\begin{proof}

  For both versions of splitter behavior, we show PSPACE-hardness through a reduction from zero-player motion planning with the toggle switch/toggle switch. Wires are simply chains of transport belts, fan-in is accomplished by sideloading, and crossovers can be built using underground belts.

  The toggle switch/toggle switch is different for the two versions, and depends on the details of splitter behavior.

  \paragraph{Factorio 0.15 and earlier.}

  Prior to 0.16, splitters alternate both lanes together and each item type separately. For each item type, all items of that type entering the splitter will alternate which output belt they take regardless of the lane they are on. The lane an item is on is preserved.%
  \footnote{Since each item type alternates independently, the splitter requires one bit of state for each item type. One can take advantage of this complexity for tasks including sorting items; we will not use it because there will be only one mobile item.}
  We view a splitter as having two lanes as inputs, and four outputs: two lanes on each of two belts. The splitter then behaves as a toggle switch/toggle switch---each lane is a toggle switch, and they share a state.%
  \footnote{Really it is a separate toggle switch/toggle switch for each item type, but we will have only one mobile item.}

  We need to make a toggle switch/toggle switch which uses transport belts instead of lanes as inputs and outputs. This can be accomplished using sideloading onto the correct lane for the inputs and sideloading onto an underground belt for the outputs, shown in Figure~\ref{fig:belt 0.15 tw/tw}.

  \begin{figure}
    \centering
    \includegraphics[height=7cm]{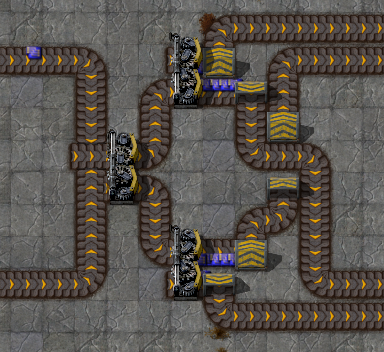}
    \hfil
    \includegraphics[height=7cm]{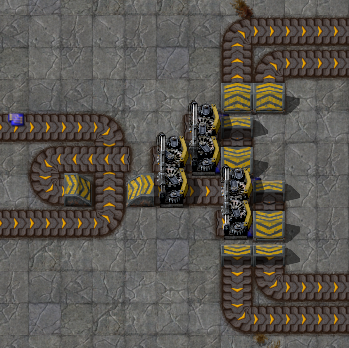}
    \caption{Two layouts of a toggle switch/toggle switch for transport belts in Factorio 0.15 and earlier. An incoming item is put on a lane depending on the input it enters. It passes through the leftmost splitter, which encodes the state of the gadget. The other two splitters help separate lanes: one lane of each output belt is blocked by items (by sideloading onto underground belts), so the output belt is determined by the lane of an item that enters that splitter. This gadget requires a constant number of immobile items: the layout on the left uses 16, and the layout on the right uses only 8 (but is harder to parse).}
    \label{fig:belt 0.15 tw/tw}
  \end{figure}

  The initial state of each toggle switch/toggle switch is encoded by the state of a splitter. We place a single item at the start location, and it simulates the agent in zero-player motion planning, reaching the goal location if and only if the agent does.

  \paragraph{Factorio 0.16 and later.}

  In 0.16, splitters were changed to alternate each lane separately and all item types together. A splitter now has only two bits of state, one for each lane, and all items of any type entering on the same lane will alternate output belts. We will always have items in the left lane.%
  \footnote{To ensure this, we make fan-ins using sideloading of the right handedness, or just sideload onto the left lane immediately before entering each gadget.}
  Also in 0.16, splitters were given a setting to sort items: items of a specified type take one exit belt, and all others take the other exit belt.

  Now our toggle switch/toggle switch for 0.15 and earlier is two independent toggle switches, and thus no longer suffices for PSPACE-hardness. Instead, we can use that item types alternate together and item sorting to construct a toggle switch/toggle switch, shown in Figure~\ref{fig:belt 0.16 tw/tw}. The agent will now be simulated by a pair of items of different types; we use an advanced circuit (``red circuit'') and a processing unit (``blue circuit''). The red circuit takes a natural path through the gadget, while the blue circuit shadows it to keep two splitters in the same state.

  \begin{figure}
    \centering
    \includegraphics[width=.7\linewidth]{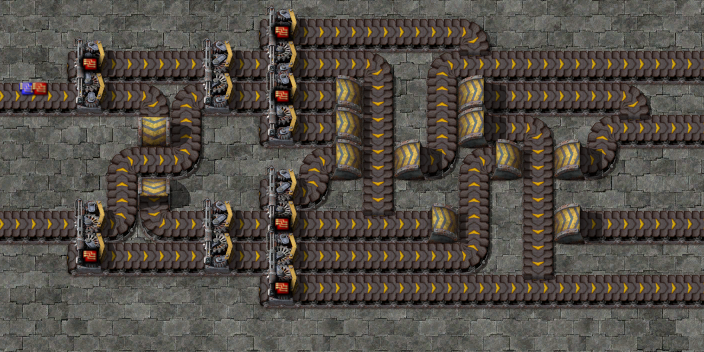}
    \caption{A toggle switch/toggle switch for transport belts in Factorio 0.16 and later. Both middle splitters encode the state of the gadget. Suppose the red and blue circuits enter the top entrance when the gadget is in the up state. The red circuit goes to the upper of the two middle splitters, takes the top exit belt, get sorted onto the topmost belt, and finally takes the topmost exit. The blue circuit visits the lower middle splitter, takes the top exit, gets sorted onto the fourth belt from the bottom (just after the splitters), and also takes the topmost exit. So both items took the topmost exit, and both middle splitters flipped state. The other cases behave similarly. This construction is due to Twan van Laarhoven.}
    \label{fig:belt 0.16 tw/tw}
  \end{figure}

  The two items take different amount of times to get through the gadget and may become separated. To fix this, after each toggle switch/toggle switch we place a \emph{grouper}, shown in Figure~\ref{fig:belt 0.16 grouper}, which reduces the distance between the items by having item which arrives first take a longer path. The amount of separation from a single traversal of a gadget is bounded, so we can keep the items a bounded distance apart using an appropriate sequence of groupers after each gadget.

  \begin{figure}
    \centering
    \includegraphics[width=.4\linewidth]{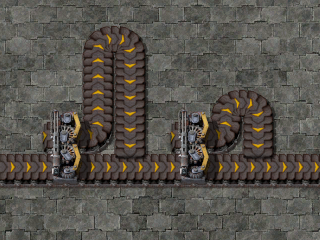}
    \caption{A grouper, which reduces the space between the red and blue circuits. The front item is delayed by about 8 tiles, and then the new front item is delayed by about 4 tiles. If the items are within about 16 tiles of each other when they enter, they exit with at most about 4 tiles between them. These distances are approximate; the actual distances are not integers since items take different amounts of time to traverse curved vs straight transport belts.}
    \label{fig:belt 0.16 grouper}
  \end{figure}

  We place a single red circuit and a single blue circuit in the left lane at the start location. Both items will reach the goal location if and only if the agent does.
\end{proof}

\begin{theorem}
  In Factorio 0.16 and later, in a world with only transport belts, underground belts, splitters, and a single item, determining whether the item reaches a specified location is in NP~$\cap$~coNP. 
\end{theorem}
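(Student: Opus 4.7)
The plan is to show that the single-item restriction collapses the problem to zero-player motion planning with single-input deterministic input/output gadgets, and then invoke the containment in UP $\cap$ coUP (hence in NP $\cap$ coNP) already noted in Section~\ref{sec:0p 1input} via \cite{arrivalcls}.

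First I would set up the abstraction. With only one mobile item, no two items ever interact, so every belt, corner, underground belt, and sideload acts as a purely passive deterministic route: the item's lane and position evolve in a fixed way until it encounters a splitter. The only stateful device is the splitter, and in 0.16+ its two lanes alternate independently. Hence, for the single item, each encounter with a splitter looks exactly like a toggle switch acting on the item's current lane: at most two input belts feed the relevant lane (and can be collapsed by fan-in in the connection graph), two output belts leave it, and the one bit of per-lane state toggles on each use. Two physical splitter lanes become two independent toggle switches; lanes not visited by the item are irrelevant and may be discarded.

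Next I would formalize this as a polynomial-time reduction. The locations are (belt-tile, lane) pairs together with the input/output locations of the toggle switches derived from splitter lanes. The connection graph follows the belt layout (including corners, underground pairs, and sideloading, all of which are deterministic re-routings on lanes). Branchlessness is automatic because only one item ever exists, so each connected component of the connection graph feeds at most one toggle-switch input; multiple output belts can merge into a single input via fan-in exactly as in the branchless-system definition. The initial splitter states give the initial toggle-switch states, and the goal location is the designated target tile-lane. The size of the resulting system is linear in the number of belts and splitters, so this is a log-space reduction.

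Finally I would apply the already-cited results: zero-player motion planning with instance-specified deterministic single-input input/output gadgets is in UP $\cap$ coUP \cite{arrivalcls}, and in particular in NP $\cap$ coNP. Since our reduction produces exactly such an instance (using only toggle switches and the connection graph), the Factorio problem inherits this containment. The only subtle point, and the main thing to verify carefully, is that the abstraction is faithful: I would argue that because there is exactly one item, no blocking, capacity, or timing phenomena can cause deviation from the toggle-switch behavior, so the item reaches the goal tile in the Factorio simulation if and only if the corresponding zero-player motion-planning robot reaches the goal location.
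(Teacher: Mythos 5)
Your proposal takes essentially the same route as the paper: observe that with a single item the only stateful component is the splitter, that each lane of a default 0.16 splitter behaves as an independent toggle switch, and reduce to zero-player motion planning with the toggle switch (a deterministic single-input input/output gadget), inheriting the UP~$\cap$~coUP $\subseteq$ NP~$\cap$~coNP containment from the ARRIVAL-style results.

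One detail the paper handles that you omit: in 0.16 splitters also have \emph{filter} and \emph{priority} settings, so "two lanes alternate independently" describes only the default configuration. The paper notes that a filter splitter always routes a given item type to a fixed output, and that a priority splitter with only one item in the world likewise always uses the same output; both are therefore stateless trivial routings that fold into the connection graph. This is easy to patch into your argument, but as written your case analysis of what a splitter can do is incomplete for the theorem as stated.
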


\begin{proof}
  As mentioned above, a splitter with the default settings is a pair of independent toggle switches, one for each lane. A splitter set to filter will always send the item to the same output belt. Splitters have another setting also added in 0.16: they can be set to prioritize a particular input or output belt, meaning it will always use that input or output unless it is empty or blocked, respectively, instead of alternating. With a single item in the world, a splitter in priority mode always sends the item to the same output belt. Thus this problem can be reduced to zero-player motion planning with the toggle switch or equivalently ARRIVAL, which is in NP $\cap$ coNP.
\end{proof}

\section{Open Problems}
\label{sec:open problems}

One interesting problem left open by our paper and several before it \cite{Arrival,arrivalcls,switchinggames} is the complexity of zero-player motion planning with deterministic single-input input/output gadgets, or equivalently ARRIVAL and zero-player reachability switching games; this is known to be between NL-hard and NP~$\cap$~coNP, which is a large gap.
For the set switch, we do not even know NL-hardness.
We conjecture that many of these single input gadgets are P-hard and we would be interested to see such a result. We also leave open the complexity of two-player one-agent motion planning, or two-player reachability switching games, which is between PSPACE-hard and EXPTIME.

Since input/output gadgets seem to be a natural and rich class of gadgets, one could expand our characterization of zero-player motion planning to include input/output gadgets beyond those that are output-disjoint deterministic 2-state. Is there a natural notion of ``unbounded'' that implies PSPACE-hardness for a much larger class of input/output gadgets? Does every such gadget simulate the switch/set-up line/set-down line, and thus all input/output gadgets? Extending our characterization by removing any of the adjectives would be significant progress toward characterizing all input/output gadgets.

Another question we leave open is whether these gadgets remain hard in the planar case. Although our applications all contained simple crossovers, this may not always be the case, so having hardness on planar systems of gadgets would be useful.

Finally, although we only defined zero-player motion planning with input/output gadgets (and the Trainyard gadget), many other classes of gadgets could be explored in the zero-player model. This model begins to look a lot more like a typical circuit or computing model with the unusual constraint that only a single signal is ever propagating through the system. In particular, a reasonable zero-player motion planning problem with reversible deterministic gadgets (like those studied in \cite{Toggles_FUN2018} and \cite{DHL}) is similar to asynchronous ballistic reversible logic \cite{frank2017asynchronous} introduced to explore potential low-power computing architectures.

\section*{Acknowledgments}

We thank Jeffrey Bosboom for suggesting applying the gadget framework to
railroad switches (specifically, a switch/tripwire gadget) in 2017,
and Mikhail Rudoy for pointing us to the subsequent analysis of ARRIVAL
\cite{Arrival}.
We also thank Jeffrey Bosboom for providing simplified constructions for the set-up switch/set-down line and toggle switch/set-up line, and for general discussion on topics in and related to this paper.
We thank Twan van Laarhoven for providing the construction in
Figure~\ref{fig:belt 0.16 tw/tw}.
Some of this work was done during open problem solving in the MIT class on
Algorithmic Lower Bounds: Fun with Hardness Proofs (6.892)
taught by Erik Demaine in Spring 2019.
We thank the other participants of that class
for related discussions and providing an inspiring atmosphere.

This is a full version of a paper appearing in WALCOM 2022.

\bibliographystyle{alpha}
\bibliography{Bibliography}

\end{document}